 \newtheorem{thm}{Theorem}[section]
 \newtheorem{cor}[thm]{Corollary}
 \newtheorem{lem}[thm]{Lemma}
 \newtheorem{prop}[thm]{Proposition}
  \theoremstyle{definition}
 \newtheorem{defn}[thm]{Definition}
  \newtheorem{note}[thm]{Note}
  \newtheorem{defn-thm}[thm]{Definition-Theorem}
 \theoremstyle{remark}
 \newtheorem{rem}[thm]{Remark}
 \newtheorem{ex}[thm]{Example}
\numberwithin{equation}{section}
\numberwithin{thm}{section}
\numberwithin{table}{section}
\numberwithin{figure}{section}
\newcommand{\ZZ}{\mathbb{Z}}
\newcommand{\RR}{\mathbb{R}}
\newcommand{\CC}{\mathbb{C}}
\newcommand{\g}{\mathfrak{g}}
\newcommand{\m}{\mathfrak{m}}
\newcommand{\n}{\mathfrak{n}}
\newcommand{\vac}{\left| 0 \right>}
\newcommand{\ad}{\text{ad}}
\newcommand{\gr}{\text{gr}}
\newcommand{\sll}{\text{sl}}
\newcommand{\WW}{\mathcal{W}}
\begin{document}

\title{ Classcal affine W-algebras associated to Lie superalgebras }
\author{ Uhi Rinn Suh$^{1}$}
\address{Department of Mathematical Sciences, Seoul National University, GwanAkRo 1, Gwanak-Gu, Seoul 151-747, Korea}
\email{uhrisu1@math.snu.ac.kr}
\thanks{$^{1}$This work was supported by BK21 PLUS SNU Mathematical Science Division.}
\maketitle

\begin{abstract}
In this paper, we prove classical affine W-algebras associated to Lie superalgebras (W-superalgebras) can be constructed in two different ways: via affine classical Hamiltonian reductions and via taking quasi-classical limits of quantum affine W-superalgebras. Also, we show that a classical finite W-superalgebra can be obtained by a Zhu algebra of a classical affine W-superalgebra. Using the definition by Hamiltonian reductions, we find free generators of a classical W-superalgebra associated to a minimal nilpotent. Moreover, we compute generators of the classical W-algebra associated to $spo(2|3)$ and its principal nilpotent. In the last part of this paper, we introduce a generalization of classical affine W-superalgebras called classical affine fractional W-superalgebras. We show these have Poisson vertex algebra structures and find generators of a fractional W-superalgebra associated to a minimal nilpotent. 
\end{abstract}

\setcounter{tocdepth}{-1}

\pagestyle{plain}

\section{Introduction}\label{Sec:Introduction}

This paper is a generalization of \cite{S, S2}, which showed equivalences of various definitions of classical affine W-algebras and introduced classical fractional W-algebras. 

Recall that there are four types of W-algebras: classical affine, classical finite, quantum affine and quantum finite W-algebras. These types of algebras are endowed with Poisson vertex algebras (PVAs), Poisson algebras (PAs), vertex algebras (VAs) and associative algebras (AAs) structures, respectively.  As underlying algebraic structures in mathematical physics, PVAs (resp. PAs) are quasi-classical limits of VAs (resp. AAs) and Poisson algebras (resp. AAs) are finalizations of PVAs (resp. VAs). (See \cite{DK, DKV1, K, Z}.)

The main ingredient of this paper is a classical affine W-algebra, which is endowed with PVA structures. Hence we expect classical affine W-algebras are obtained by quasi-classical limits of quantum affine W-algebras and chiralizations of classical finite W-algebras. A classical finite W-algebra $\WW^{fin}(\g,f)$ associated to a Lie (super)algebra $\g$ and its nilpotent $f$ is defined by the Hamiltonian reduction
\[ \WW^{fin}(\g,f)= (S(\g)/S(\g) I)^{\ad \n}\]
 associated to $(S(\g), S(\g)I, \n)$ for a Poisson (super)algebra ideal $S(\g)I$ and a nilpotent Lie  subalgebra $\n$ of $\g$ determined by $f$. Also, there is an equivalent  construction of $\WW^{fin}(\g,f)$ by a cohomology of Lie (super)algebra complex. 

A natural way to get a quantum finite W-algebra is by the BRST quantization of the Lie (super)algebra complex, called a finite BRST complex. As in classical finite W-algebras cases,  in \cite{DK, GG}, it is proved that the quantum finite W-algebra $W^{fin}(\g,f)$ associated to a Lie (super)algebra $\g$ and its nilpotent $f$ can be obtained by a quantum Hamiltonian reduction associated to $U(\g)$, its associative algebra ideal $U(\g)I$ and a nilpotent Lie  subalgebra $\n$ determined by $f$.

In \cite{DK, FF, KRW, KW}, the quantum affine W-algebra $W(\g,f,k)$ is introduced by BRST complex which is obtained by substituting universal enveloping algebras of Lie (super)algebras in the finite BRST complex with universal enveloping vertex algebras of Lie conformal algebras (LCAs).  

In \cite{S2}, by substituting universal enveloping vertex algebras of LCAs in the BRST complexes with symmetric algebras generated by the LCAs, we get classical affine W-algebras. For the classical affine W-algebra $\WW(\g,f,k)$ associated to a Lie algebra $\g$, there is an equivalent definition via an affine Hamiltonian reduction. Also, two W-algebras $\WW^{fin}(\g,f)$ and $\WW(\g,f,k)$ are related by a finitization map called Zhu map. (See \cite{DKV1}.)

A natural question is that if we can develop a similar theory for a classical affine W-algebra associated to a Lie superalgebra (classical affine W-superalgebra). In Section \ref{Sec:superW}, we prove that a classical affine W-superalgebra can be defined via classical BRST complex and via Hamiltonian reduction. Also, we show that the same argument works for classical and quantum finite W-superalgebras. Moreover, in Section \ref{Sec:finite}, we describe relations between affine W-superalgebras and finite W-superalgebras.

Also, structure theories of finite W-superalgebras are developed in various articles, for example \cite{Pol, PS, Zhao}. In this paper, we investigate structures of classical affine W-superalgebras. 

The simplest example of W-superalgebras can be obtained by taking a minimal nilpotent $f$ of given Lie superalgebra $\g$. In \cite{KW}, Kac and Wakimoto discovered free generators of quantum affine W-(super)algebras associated to minimal nilpotents and Premet \cite{P1} described generators of finite W-algebras associated to Lie algebras and minimal nilpotents. In \cite{S}, similar results are written for classical affine W-algebras associated to Lie algebras and their minimal nilpotents. In Section \ref{Sec:Example}, we show that generators of a classical affine W-superalgebra associated to a minimal nilpotent can also be described explicitly. In addition, we compute $\lambda$-brackets between the generators. 

It is still open what are free generators of classical affine W-superalgebras associated to non-minimal nilpotents. However, it is possible to find free generators for simple cases by computations.  In Section \ref{Sec:Example}, we find free generators of classical affine W-algebras associated to $\g=spo(2|3)$ and its principal nilpotent.   The Poisson $\lambda$-brackets between the generators are also computed directly.

The last part of this paper is about fractional W-superalgebras. In \cite{BDHM} and \cite{DHM}, they introduced fractional W-algebras as a generalization of W-algebras in \cite{DS}. Note that W-algebras in \cite{DS}  appear as underlying algebraic structures of integrable systems  and they are isomorphic to classical affine W-algebras associated to Lie algebras in our context. Similarly, fractional W-algebras are also related to integrable systems. In \cite{S2}, PVAs called classical affine fractional W-algebras associated to Lie algebras are introduced which are isomorphic to fractional W-algebras in \cite{BDHM}.

In Section \ref{Sec:frac}, we define classical affine fractional W-algebras associated to Lie superalgebras (fractional W-superalgebras). In \cite{S2}, the well-definedness of a fractional W-algebra as a PVA is proved by the fact that it is isomorphic to a fractional W-algebra in \cite{BDHM}. However, since it is not clear how to construct fractional W-superalgebras in the context of \cite{BDHM}, we cannot use the same argument as in \cite{S2}. In this paper, we prove that classical affine fractional W-algebras are well-defined PVAs with a simpler method and show that the proof works for fractional W-superalgebras.
Moreover, we find free generators of a classical affine fractional W-superalgebra associated a minimal nilpotent and also Poisson $\lambda$-brackets between them. 

\section*{Acknowledgement}

The author would like to thank her thesis advisor Victor Kac and Tomoyuki Arakawa for valuable discussions during the INdAM Intensive Period: Perspectives in Lie Theory held in Pisa, Italy.

\vskip 3mm

\section{Backgrounds} \label{Sec:background}
In Section \ref{Sec:background}, we recall some basic notions that we need to investigate W-algebras. 
\subsection{Poisson vertex algebras: relations with vertex algebras and Poisson algebras}\ \\
A vector space $V$ over $\CC$ with a decomposition $V=V_{\bar{0}}\oplus V_{\bar{1}}$ is called a {\it vector superspace}. A vector in $V_{\bar{0}}$ (resp. $V_{\bar{1}}$) is said to be  {\it even} (resp. {\it odd}). An element in $V_{\bar{0}}$ or $V_{\bar{1}}$ is called a {\it homogeneous} element. The {\it parity} $p(v)$ of a homogeneous element $v\in V_{\bar{0}}$ (resp.  $v\in V_{\bar{1}}$) is $0$ (resp. $1$). 


An  algebra $A$ over $\CC$ is called a {\it superalgebra} if it is a $\ZZ/2\ZZ$-graded algebra. In other words, $A=A_{\bar{0}}\oplus A_{\bar{1}}$ as a vector space and $A_{\bar{i}}A_{\bar{j}} \subset A_{\bar{i}+\bar{j}}$.

A vector superspace $V$ endowed with a bracket $[ \, , \, ]:V\times V\to V$ is called a {\it Lie superalgebra} if it satisfies \\
\begin{equation*}
\begin{aligned}
& \text{(bilinearity) $[ka+b,c]=k[a,c]+[b,c], \quad a,b,c\in V, \quad k\in \CC;$ }\\
& \text{(skewsymmetry) $ [a, b]=-(-1)^{p(a)p(b)}[b,a], \quad a,b\in V;$ }\\
& \text{(Jacobi identity) $[a,[b,c]]=[[a,b],c]]+(-1)^{p(a)p(b)}[b,[a,c]], \quad a,b,c\in V.$}
\end{aligned}
\end{equation*}

A {\it Poisson superalgebra } $P$ endowed with the bracket $\{\, , \, \}$ satisfies the following properties:
\begin{enumerate}
\item $(P, \{\, , \, \})$ is a Lie superalgebra.
\item $P$ is a supersymmetric algebra, that is $ab=(-1)^{p(a)p(b)}ba$ for $a,b\in P$
\item $\{a,bc\}= (-1)^{p(a)p(b)}b\{a,c\}+\{a,b\}c$ for $a,b,c\in P.$ 
\end{enumerate}

Now we recall differential algebras and Lie conformal algebras, which are needed to introduce vertex algebras and  Poisson vertex algebras. The definitions can be found in \cite{BK, K}.

\begin{defn}
\begin{enumerate}
\item A superalgebra $A$ is called a {\it differential algebra} if it is endowed with a parity preserving map $\partial : A \to A$ satisfying $\partial(ab)=(\partial a)b+a(\partial b).$
\item
Let $R$ be a vector superspace over $\CC$ with a $\CC[\partial]$-module structure and let $\partial$ be a parity preserving map on $R$. 
The $\CC[\partial]$-module $R$ endowed with a linear $\lambda$-bracket $[\, _\lambda \, ]: R\otimes_\CC R \to R[\lambda]$ is called a {\it Lie conformal algebra (LCA)}  if it satisfies 
\begin{equation*}
\begin{aligned}
& \text{(sesquilinearity) $[\partial a_\lambda b]=-\lambda [a_\lambda b]$, $[a_\lambda \partial b]=(\partial+\lambda)[a_\lambda b]$,  $ \ \ a,b\in R;$ }\\
& \text{(skewsymmetry) $ [a _\lambda b]=-(-1)^{p(a)p(b)}[b_{-\partial-\lambda}a], \quad a,b\in R;$ }\\
& \text{(Jacobi identity) $[a_\lambda [b_\mu c]]=[[a_\lambda b]_{\lambda+\mu} c]+(-1)^{p(a)p(b)}[b_\mu[a_\lambda c]], \qquad a,b,c\in R.$}
\end{aligned}
\end{equation*}
\end{enumerate}
\end{defn}

Let $R$ be a LCA and let $a,b$ be elements in $R$. There are $c_n\in R$ for $n\in \ZZ_{\geq 0}$ such that $[a_\lambda b]=\sum_{n\in \ZZ_{\geq 0}} \frac{c_n}{n!} \lambda^n$. We denote $c_n$ by $a_{(n)}b.$

\begin{defn} \cite{BK} \label{Def:VA}
A quintuple $(V, \partial, \vac, : \, \, :, [\, _\lambda\, ])$ is called a {\it vertex algebra} if it satisfies
\begin{enumerate}
\item $(V, \partial, [\, _\lambda \, ])$ is a Lie conformal algebra, 
\item $(V, \partial, \vac, :\, \, :)$ is a differential algebra with the strong quasicommutativity,
\item the $\lambda$-bracket $[ \,  _\lambda \, ]$ and the normally ordered product $:\, \, :$ are related by the noncommutative Wick formula,
\end{enumerate}
where the strong quasicommutativity is 
\[ :a:bc:-(-1)^{p(a)p(b)}:b:ac:: \, = \, : \int_{-\partial}^0[a_\lambda b] d\lambda \, c: \, = \, -\sum_{n\in \ZZ_+} : \, \frac{(-1)^{n+1}}{(n+1)!}(\partial^{n+1} a_{(n)}b) \ c : \]
and the noncommutative Wick formula is 
\[ [a_\lambda :bc:]=:[a_\lambda b] c:+ (-1)^{p(a)p(b)}:[b_\lambda a] c: + \int_0^\lambda[[a_\lambda b]_\mu c] d\mu, \]
for $a,b,c\in R.$
Here $\int_0^\lambda[[a_\lambda b]_\mu c] d\mu = \sum_{n,m\in \ZZ_+} \frac{((a_{(n)}b)_{(m)} c)}{n!(m+1)!}\lambda^{n+m+1}.$
\end{defn}

As in the Lie superalgebra theory, there is a unique universal enveloping vertex algebra $(V(R), i: R \to V(R))$ of a Lie conformal algebra $R$. The universality is that  if there is a vertex algebra $V$ endowed with a LCA homomorphism $r:R\to V$ then there is a unique vertex algebra homomorphism $q: V(R) \to V$ such that $q\circ i=r.$

There is the PBW theorem of universal enveloping vertex algebras, in the sense of Theorem \ref{PBW:VA}.

\begin{thm} \cite{K} \label{PBW:VA}
Let $(V(R), i: R \to V(R))$ be the universal enveloping vertex algebra of a Lie conformal algebra $R$.
\begin{enumerate}
\item The map $i:R\to V(R)$ is injective.
\item If $\{u_1, \cdots, u_k\}$ is a $\CC$-basis of $R$ then $\{\, :u_{i_1} u_{i_2}\cdots u_{i_l}:\, | \, 1\leq i_1\leq i_2\leq \cdots\leq i_l \leq k\, \}$ is a $\CC$-basis of $V(R),$ where $:u_{i_1} u_{i_2}\cdots u_{i_l}: $ denotes normally ordered products from right to left, that is  \[u_{i_1} u_{i_2}\cdots u_{i_l}: =( :u_{i_1}(:u_{i_2}(:u_{i_3}( \cdots (:u_{i_{l-1}}u_{i_l}:)\cdots):):):).\] 
 Equivalently, if  $\mathcal{B}$ is a $\CC$-basis of $R$ then $V(R)$ is {\it freely generated} by $\mathcal{B}.$
\end{enumerate}
\end{thm}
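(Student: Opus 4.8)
The plan is to realize $V(R)$ explicitly as a vacuum module over an ordinary Lie superalgebra manufactured from $R$, and then to read off both assertions from the classical PBW theorem for universal enveloping algebras of Lie superalgebras, together with a filtration argument relating normally ordered products to monomials in that Lie superalgebra.

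First I would attach to the Lie conformal algebra $R$ its coefficient Lie superalgebra
\[ \on{Lie}(R)=\big(R\otimes\CC[t,t^{-1}]\big)\big/\on{Im}\big(\partial\otimes 1+1\otimes\partial_t\big), \]
writing $a_{(n)}$ for the image of $a\otimes t^{n}$, so that sesquilinearity becomes $(\partial a)_{(n)}=-n\,a_{(n-1)}$, and equipping it with the bracket $[a_{(m)},b_{(n)}]=\sum_{j\in\ZZ_{\geq 0}}\binom{m}{j}(a_{(j)}b)_{(m+n-j)}$. The three LCA axioms translate exactly into well-definedness of this bracket and into the Lie superalgebra axioms; this is the step where the Jacobi identity of the $\lambda$-bracket is consumed. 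Because $\binom{m}{j}=0$ for integers $0\leq m<j$, the span $\on{Lie}(R)_{\geq 0}$ of the modes $a_{(n)}$ with $n\geq 0$ is a subalgebra, and the assignment $a\mapsto a_{(-1)}$ is a $\CC$-linear isomorphism of $R$ onto the complementary subspace $\on{Lie}(R)_{<0}$ (using $(\partial^{m}a)_{(-1)}=m!\,a_{(-m-1)}$ to recover all negative modes).

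Next I would build $V(R)$ as the induced module $U(\on{Lie}(R))\otimes_{U(\on{Lie}(R)_{\geq 0})}\CC$, with $\vac=1\otimes 1$, translation operator $\partial$ induced from $a_{(n)}\mapsto -n\,a_{(n-1)}$, and state--field correspondence fixed by $Y(a_{(-1)}\vac,z)=\sum_{n}a_{(n)}z^{-n-1}$ and extended by normal ordering. One verifies that this is a vertex algebra in the sense of Definition \ref{Def:VA}, that $i\colon R\to V(R)$, $a\mapsto a_{(-1)}\vac$, is an LCA homomorphism, and that the pair $(V(R),i)$ has the required universal property: any LCA map $r\colon R\to W$ into a vertex algebra turns $W$ into a $\on{Lie}(R)$-module compatible with the vacuum, hence factors uniquely through $V(R)$. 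Uniqueness of the universal object identifies it with this model. By the classical PBW theorem, $U(\on{Lie}(R))\cong U(\on{Lie}(R)_{<0})\otimes U(\on{Lie}(R)_{\geq 0})$ as vector spaces, so $V(R)\cong U(\on{Lie}(R)_{<0})$; since $a\mapsto a_{(-1)}$ is injective and $U(\on{Lie}(R)_{<0})$ contains $\on{Lie}(R)_{<0}$, the map $i$ is injective, which is part (1).

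For part (2), fix a $\CC$-basis $\{u_{1},u_{2},\dots\}$ of $R$; then $\{(u_{i})_{(-1)}\}$ is a $\CC$-basis of $\on{Lie}(R)_{<0}$, and the PBW theorem for Lie superalgebras provides the basis of ordered monomials $(u_{i_{1}})_{(-1)}\cdots(u_{i_{l}})_{(-1)}\vac$ of $U(\on{Lie}(R)_{<0})\cong V(R)$. I would then filter $V(R)$ by length of normally ordered products and show, by induction on length using the noncommutative Wick formula and the strong quasicommutativity of Definition \ref{Def:VA}, that the leading symbol of $:u_{i_{1}}\cdots u_{i_{l}}:$ (products taken from right to left) is precisely the supersymmetric monomial $(u_{i_{1}})_{(-1)}\cdots(u_{i_{l}})_{(-1)}\vac$. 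Consequently the normally ordered monomials reduce on the associated graded to the PBW monomials, and both spanning and linear independence transfer back. The main obstacle is exactly this last step: controlling the lower-length correction terms generated when one commutes modes past one another and rewrites iterated normal orderings, i.e. proving that the transition between the normally ordered monomials and the genuine PBW monomials is unitriangular with respect to the length filtration. Granting this triangularity, part (2) follows at once.
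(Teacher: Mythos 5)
The paper offers no proof of this theorem to compare against: it is stated as a quotation from \cite{K}. Your proposal is, in substance, the proof given in that reference (and in \cite{DK}): pass to the coefficient Lie superalgebra $\on{Lie}(R)=\big(R\otimes\CC[t,t^{-1}]\big)/\on{Im}(\partial\otimes 1+1\otimes\partial_t)$, realize $V(R)$ as the vacuum module induced from the trivial representation of $\on{Lie}(R)_{\geq 0}$, verify the universal property, and read off both claims from the classical PBW theorem applied to $U(\on{Lie}(R)_{<0})$. The structural claims you make along the way are all correct: the bracket descends to the quotient by sesquilinearity, $\on{Lie}(R)_{\geq 0}$ and $\on{Lie}(R)_{<0}$ are both subalgebras, and $a\mapsto a_{(-1)}$ is a $\CC$-linear isomorphism $R\to\on{Lie}(R)_{<0}$, so part (1) follows from the injectivity half of classical PBW.

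The one criticism is that the step you single out as ``the main obstacle'' is not an obstacle at all, and no induction with the noncommutative Wick formula or the strong quasicommutativity is needed. In the vacuum-module construction, the normally ordered product of $i(u)$, $u\in R$, with an arbitrary state $w\in V(R)$ is by definition the $(-1)$-mode action $u_{(-1)}w$, since $Y(i(u),z)=\sum_{n}u_{(n)}z^{-n-1}$. Applying this repeatedly, the right-to-left monomial $:u_{i_1}(\cdots(:u_{i_{l-1}}u_{i_l}:)\cdots):$ equals $(u_{i_1})_{(-1)}\cdots(u_{i_l})_{(-1)}\vac$ \emph{exactly}, with no lower-length correction terms: the transition matrix you propose to prove unitriangular is the identity. (Commutator corrections would arise only if one had to reorder mode monomials, which the basis of non-decreasingly indexed monomials never requires.) The only detail genuinely worth adding, inherited from the statement itself, is the super-PBW ordering condition: when $u_{i_j}$ is odd one must require $i_j<i_{j+1}$, since monomials with a repeated odd factor do not belong to the PBW basis of $U(\on{Lie}(R)_{<0})$.
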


\begin{ex} \label{Ex:Current}
The current Lie conformal algebra associated to the finite dimensional Lie superalgebra $\g$ with an invariant supersymmetric bilinear form $(\, |\, ):\g\times \g \to \CC$ is 
\[ Cur(\g)=\CC[\partial]\otimes \g \oplus\CC K \text{ where } p(K)=0,\]
endowed with the $\lambda$-bracket defined by 
\[ [a_\lambda b]=[a,b]+\lambda(a|b) K  \text{ for } a,b\in \g,  \quad [K_\lambda Cur(\g)]=0.\]
The universal enveloping vertex algebra $V(Cur(\g))$ of $Cur(\g)$ is freely generated by a basis $\mathcal{B}$ of $Cur(\g)$ over $\CC$ and $V^k(Cur(\g))=V(Cur(\g))/(K-k)V(Cur(\g))$ is called the universal enveloping affine vertex algebra of level $k.$
\end{ex}

\begin{defn} \cite{BK}
A quintuple $(\mathcal{V}, \partial, 1, \cdot, \{\, _\lambda \, \})$ is called a {\it Poisson vertex algebra (PVA)} if it satisfies
\begin{enumerate}
\item $(\mathcal{V}, \partial, \{ \, _\lambda\, \})$ is a Lie conformal algebra
\item $(\mathcal{V}, \partial, \cdot, 1)$ is a unital supersymmetric differential algebra.
\item the $\lambda$-bracket $\{\, _\lambda \, \}$ and the supersymmetric product are related by the Leibniz rule 
\[ \{a_\lambda bc\} = (-1)^{p(a)p(b)} b\{a_{\lambda} c\} + \{a_{\lambda}b\} c.\]
\end{enumerate}
\end{defn}

\begin{ex} \label{Ex:CurrPVA}
Let $R=Cur(\g)$ be in Example \ref{Ex:Current} and let $S(R)$ be the supersymmetric algebra generated by $R.$ We define the  $\lambda$-bracket on $S(R)$ by that on $R$ and Leibniz rules. Then $S(R)$ is a Poisson vertex algebra. Also, $S^k(R)=S(R)/(K-k)S(R)$ is a Poisson vertex algebra, for any $k\in \CC.$
\end{ex}

\begin{defn} \cite{DK}
\begin{enumerate}
\item
Consider a family of vertex algebras $V_\epsilon$ which is a vertex algebra over $\CC[\epsilon]$ with a $\lambda$-bracket such that $[V_{\epsilon \, \lambda} V_\epsilon] \subset \CC[\partial]\otimes \epsilon V_\epsilon.$ The vertex algebra $V_\epsilon$ is called {\it regular} if the multiplication by $\epsilon$ is an injective map. 
\item
Let $(V_\epsilon, \vac_\epsilon, \partial, [\, _\lambda\, ]_\epsilon, : \, \, :_\epsilon)$ be a regular family of  vertex algebras over $\CC[\epsilon].$ Let $\mathcal{V}:=V_\epsilon/\epsilon V_\epsilon$ be endowed with the product induced by the normally ordered product $: \, \, :_\epsilon$ of $V_\epsilon$ and the $\lambda$-bracket $\{ \, _\lambda\, \}$ defined by \[ \{\bar{a}_\lambda \bar{b}\}= [a_\lambda b]_\epsilon/\epsilon  \]
where $a, b \in V_\epsilon$ are preimages of $\bar{a}, \bar{b} \in \mathcal{V}$. Then $\vac_\epsilon \in V_\epsilon $ induces the unital $1\in \mathcal{V}$ and $\partial$ on $V_\epsilon$ induces a differential $\partial$ on $\mathcal{V}$. The quintuple $(\mathcal{V}, 1, \partial, \{\, _\lambda \, \}, \cdot)$ is called the {\it quasi-classical limit} of $V_\epsilon.$
\end{enumerate}
\end{defn}

It is easy to see the following remark.

\begin{rem}
The quasi-classical limit $\mathcal{V}$ of the regular family of vertex algebras $V_\epsilon$ over $\CC[\epsilon]$ is a Poisson vertex algebra. 
\end{rem}

\begin{ex} \label{Ex:Current_ClassicalLimit}
As in Example \ref{Ex:Current}, let $V=V^k(Cur(\g))$ be the universal enveloping affine vertex algebra of level $k$ endowed with the $\lambda$-bracket $[\, _\lambda \, ]$. Let  $V_\epsilon$ be the regular family of vertex algebras such that $V_\epsilon=V[\epsilon]$,  $[a_\lambda b]_\epsilon=\epsilon[a_\lambda b]$ for $a,b\in Cur(\g)$ and the normally ordered product on $V_\epsilon$ is induced by that on $V$. Then the quasi-classical limit of $V_\epsilon$ is  $S^k(R)$ in Example \ref{Ex:CurrPVA}.
\end{ex}

\begin{rem}
Analogously, we obtain a Poisson superalgebra as the quasi-classical limit of a regular family of associative superalgebras with commutators.
\end{rem}

\begin{defn}
Let $\mathcal{V}$ be a Poisson vertex algebra and let $H:\mathcal{V}\to \mathcal{V}$ be a diagonalizable operator. Denote by $\Delta_a$ the eigenvalue of $H$ corresponding to an eigenvector $a\in \mathcal{V}$. 
If the operator $H$ satisfies that 
\[ \Delta_1=0, \quad \Delta_{\partial a} = 1+\Delta_a, \quad \Delta{a_{(n)}b}=\Delta_a+\Delta_b-n-1\]
for eigenvectors $a,b \in \mathcal{V}$ of $H$ and $n \in \ZZ_{\geq 0}$ then $H$ is called a {\it Hamiltonian operator}. If $H$ is a Hamiltonian operator then the eigenvalue $\Delta_a$ is called the {\it conformal weight} of $a.$
\end{defn}
   
\begin{rem}
A main source of Hamiltonian operator of $\mathcal{V}$ is an {\it energy momentum field} $L\in \mathcal{V}$. Precisely, if $L$ satisfies 
\begin{enumerate}
\item $\{L_\lambda L\}=(\partial+2\lambda)L+\frac{c}{12} \lambda^3$ for the central charge $c\in \CC$,
\item $L_{-1}:=L_{(0)}=\partial$, 
\item $L_{0}:=L_{(1)}$ is a diagonalizable operator on $\mathcal{V}$
\end{enumerate}
then $L_{0}$ is a Hamiltonian operator on $\mathcal{V}.$
\end{rem}

\begin{defn} \cite{DK,Z}
Suppose the Poisson vertex algebra $\mathcal{V}$ has a Hamiltonian operator $H$ and let $J$ be the associative algebra ideal of $\mathcal{V}$ generated by $(\partial+H)a.$ Then the {\it $H$-twisted  Zhu algebra} $Zhu_H(\mathcal{V}):=\mathcal{V}/J$ is the Poisson algebra endowed with the Poisson bracket 
\[ \{\overline{a},\overline{b}\}= \sum_{j\in \ZZ_+} { \Delta_a -1 \choose j } \overline{a_{(j)}b},\quad \text{ for } a,b\in \mathcal{V}.\]
\end{defn}

\begin{ex} \label{Ex:Zhu_Alge}
Let $\g$ be a Lie superalgebra with even $\sll_2$-triple $(e,2x, f)$ and the supersymmetric invariant bilinear form $(\, |\, )$ such that $(e|f)=2(x,x)=1$. Take dual bases $\{u_\alpha| \alpha \in \bar{S}\}$ and $\{u^\alpha|\alpha \in \bar{S}\}$ of $\g$ with respect to the bilinear form $(\, |\, )$.
\begin{enumerate}
\item
 Let $L=\sum_{\alpha\in \bar{S}} \frac{1}{2k} u^\alpha u_\alpha \in S^k(R)$, where $S^k(R)$ is the Poisson vertex algebra in Example \ref{Ex:Current_ClassicalLimit}. Then 
\[ \{L_\lambda L\}= (\partial+2\lambda)L, \quad \{L_\lambda u_\alpha\}=(\partial+\lambda)u_\alpha.\]
Hence $L$ is an energy momentum field and $L_0$ is a Hamiltonian operator of $S^k(R).$ For the Hamiltonian operator $H=L_{0}$, the $H$-twisted Zhu algebra of $S^k(R)$ is the Poisson superalgebra $S(\g)$ endowed with the bracket 
\[ \{a,b\}=[a,b], \quad a,b\in \g.\] 
\item
Let $L_\g=\sum_{\alpha\in \bar{S}} \frac{1}{2k} u^\alpha u_\alpha +\partial x\in S^k(R)$. Then 
\[ \{L_{\g\, \lambda} u_\alpha\}=(\partial+(1-j_\alpha)\lambda)u_\alpha -k \lambda^2(x|u_\alpha)\]
where $[x,u_\alpha]=j_\alpha u_\alpha.$ Moreover, $H:=L_{\g\, (1)}$ is a Hamiltonian operator on $S^k(R)$. The conformal weight $\Delta_\alpha$ of $u_\alpha$ is $1-j_\alpha.$  
The $H$-twisted Zhu algebra of $S^k(R)$ is $S^k(r):=\CC[u_\alpha|\alpha \in \bar{S}]$ endowed with the Poisson bracket 
\[\{ u_\alpha, u_\beta\}=[u_\alpha, u_\beta]-j_\alpha k(u_\alpha|u_\beta), \quad \alpha, \beta\in \bar{S}.\] 
If we denote $v_a=a-k(x,a)$ for $a\in \g$ then $\{v_\alpha, v_\beta \}= v_{[a,b]}.$ Hence the associative superalgebra automorphism $a \mapsto v_a$ of $\CC[u_\alpha|\alpha \in \bar{S}]$ is a Poisson superalgebra isomorphism between $S^0(r)$ and $S^k(r)$. As a conclusion $\CC[u_\alpha |\alpha \in \bar{S}]$ endowed with the Poisson bracket $\{u_\alpha, u_\beta\}=[u_\alpha, u_\beta]$ is the $H$-twisted Zhu algebra of $S^k(R).$
\end{enumerate}
\end{ex}

\begin{rem}
If a vertex algebra $V$ has a Hamiltonian operator $H$, we analogously find an associative superalgebra with commutator which is called the $H$-twisted Zhu algebra $Zhu_H(V)$ of $V$.
\end{rem}

We summarize relations between vertex algebras, Poisson vertex algebras, associative algebras and Poisson algebras by the following diagram:

\[ \xymatrixcolsep{7pc}\xymatrix{
(V, H) \ar[r]^{\text{quasi-classical limit}}  \ar[d]_{Zhu_{H}} & (\mathcal{V},\mathcal{H})  \ar[d]^{Zhu_{\mathcal{H}}} \\
Zhu_H(V) \ar[r]_{\text{quasi-classical limit}}   &  Zhu_{\mathcal{H}}(\mathcal{V}) }. \]
where $V$ is a vertex algebra with a Hamiltonian operator $H$ and $\mathcal{V}$ is a Poisson vertex algebra with a Hamiltonian operator $\mathcal{H}$. 

\subsection{Nonlinear Lie superalgebras and Nonlinear Lie conformal algebras} \label{Subsec:nonlinear}\ \\

In this section, we briefly review constructions of nonlinear Lie superalgebras and nonlinear Lie conformal algebras. We refer to \cite{DK} for details. 

Let $\Gamma_+$ be a discrete additive closed subset of $\RR_+$ containing $0$ and $\Gamma'_+=\Gamma_+\backslash \{0\}$. For $\zeta\in \Gamma'_+$, we denote by $\zeta_-$ the largest element of $\Gamma_+$ strictly smaller than $\zeta.$

Let $\g$ be a $\Gamma'_+$-graded vector superspace and $\mathcal{T}(\g)$ be the tensor superalgebra over $\g$.
Denote by $\zeta(a)$ the $\Gamma'_+$-grading of $a\in \g$.   Then $\mathcal{T}(\g)$ is a $\Gamma_+$-graded algebra 
\[ \mathcal{T}(\g)= \bigoplus_{\zeta\in \Gamma_+} \mathcal{T}(\g)[\zeta]\]
induced by the $\Gamma'_+$-grading of $\g$. More precisely,  $\zeta(c)=0$ for $c\in \CC$ and $\zeta(A\otimes B)= \zeta(A)+\zeta(B)$ for $A, B\in \mathcal{T}(\g)$.  Then there is an increasing filtration of $\mathcal{T}(\g)$
\[ \mathcal{T}_\zeta(\g)= \bigoplus_{\zeta'\leq \zeta}\mathcal{T}(\g)[\zeta'].\]

If $\g$ is endowed with the linear map 
\[ [\, , \, ]: \g \otimes \g \to \mathcal{T}(\g)\]
then we can extend the bracket $[\, , \, ]$ defined on $\g$ to the bracket $[\, , \, ]$ defined on $\mathcal{T}(\g)$ by Leibniz rules. (See \cite{DK}.)

\begin{defn} \cite{DK} \label{Def:nonlinearLA}
If $\g$ is endowed with the linear map 
\[ [\, , \, ]: \g \otimes \g \to \mathcal{T}(\g)\]
such that 
\begin{equation*}
\begin{aligned}
&\text{(grading condition) $[a, b]\in \mathcal{T}_{(\zeta(a)+\zeta(b))_-}(\g)$};\\
&\text{(skewsymmetry) $[a,b]=-(-1)^{p(a)p(b)}[b,a]$} ;\\
&\text{(Jacobi identity) $[a,[b,c]]-(-1)^{p(a)p(b)}[b,[a,c]]-[[a,b],c] \in \mathcal{M}_{(\zeta(a)+\zeta(b)+\zeta(c))_-}(\g)$};\\
\end{aligned}
\end{equation*}
where  $a,b,c\in \g$ and  $\mathcal{M}_\zeta(\g)=\mathcal{M}(\g)\cap\mathcal{T}_\zeta(\g)$ for
\[\mathcal{M}=span\{A\otimes (d\otimes e -(-1)^{p(e)p(d)} e\otimes d -[d,e])\otimes D|d,e\in \g, A,D\in \mathcal{T}(\g)\}\]
then $\g$ is called a {\it nonlinear Lie superalgebra}. 
\end{defn}

\begin{defn}\cite{DK}
Let $\g$ be a $\Gamma'_+$-graded nonlinear Lie superalgebra. The associative algebra $U(\g)=\mathcal{T}(\g)/ \mathcal{M}(\g)$ is called the {\it universal enveloping algebra } of $\g.$
\end{defn}

Consider the quasi-classical limit of the regular family of Lie superalgebras $U_\epsilon:=U(\g)[\epsilon]$, which is endowed with the bracket defined by $[a,b]_\epsilon=\epsilon[a,b]$ for $a,b\in \g$, is the Poisson  superalgebra $S(\g)$ endowed with the Poisson bracket $\{\, , \, \}$ defined by $\{a,b\}=[a,b]$ for any $a,b\in \g$ and Leibniz rules. 

\begin{ex}
Let $\g$ be a $\ZZ_{>0}$-graded vector superspace such that $\zeta(a)=1$ for all $a\in \g.$ If $\g$ is a Lie superalgebra endowed with the Lie bracket $[\, , \, ]$ and the supersymmetric bilinear invariant form $(\, |\, )$ then the linear map $[\, , \, ]_k:\g\otimes \g \to \mathcal{T}(\g)$ such that 
\[ [a,b]_k=[a,b]+\frac{k}{2}(h|[a,b]) \qquad \text{ for } k\in \CC, \quad h\in \g_{\bar{0}} \] 
is a nonlinear Lie bracket of $\g.$ 
Consider the quasi-classical limit of the regular family of Lie superalgebras $U_\epsilon:=U(\g)[\epsilon]$, which is endowed with the bracket defined by $[a,b]_\epsilon=\epsilon[a,b]$ for $a,b\in \g$, is the Poisson  superalgebra $S(\g)$ endowed with the Poisson bracket $\{\, , \, \}$ defined by $\{a,b\}=[a,b]$ for any $a,b\in \g$ and Leibniz rules. 

\end{ex}

\vskip 3mm

Analogously, we can define nonlinear Lie conformal algebras and their universal enveloping vertex algebras. Here we briefly review the definition. (See \cite{DK}.)

Let $R$ be a $\CC[\partial]$-module with $\Gamma'_+$ grading and $\mathcal{T}(R)$ be the tensor superalgebra over $R$. Denote by $\zeta(a)$ the $\Gamma'_+$-grading of $a\in R$.   Then $\mathcal{T}(R)$ is a $\Gamma_+$-graded algebra 
\[ \mathcal{T}(R)= \bigoplus_{\zeta\in \Gamma_+} \mathcal{T}(R)[\zeta]\]
induced by the $\Gamma'_+$-grading of $R$. More precisely,  $\zeta(c)=0$ for $c\in \CC$ and $\zeta(A\otimes B)= \zeta(A)+\zeta(B)$ for $A, B\in \mathcal{T}(R)$.  Then there is an increasing filtration of $\mathcal{T}(R)$
\[ \mathcal{T}_\zeta(R)= \bigoplus_{\zeta'\leq \zeta}\mathcal{T}(R)[\zeta'].\]

If $R$ is endowed with the linear map 
\[ [\, _\lambda, \, ]: \g \otimes \g \to \CC[\lambda]\otimes \mathcal{T}(R)\]
satisfying 
\begin{equation*}
\begin{aligned}
&\text{ (grading condition) $ [a_\lambda b] \in \CC[\lambda]\otimes \mathcal{T}_{(\zeta(a)+\zeta(b))_-}(R)$};\\
&\text{ (sesquilinearity) $[a_\lambda \partial b]=(\lambda+\partial)[a_\lambda b], \quad [\partial a_\lambda b]=-\lambda[a_\lambda b]$}
\end{aligned}
\end{equation*}
for $a,b,c\in R$ then we can extend the $\lambda$-bracket on $R$ to that on $\mathcal{T}(R)$ via Definition \ref{Def:VA}. Precise construction for the normally ordered product and $\lambda$-brackets on $\mathcal{T}(R)$ can be found in \cite{DK}.

\begin{defn} \cite{DK}
 Let $R$ be a $\CC[\partial]$-module endowed with the $\lambda$-bracket $[\, _\lambda\, ]:R\otimes R \to \CC[\lambda]\otimes \mathcal{T}(R)$ with grading conditions and sesquilinearities. If the $\lambda$-bracket satisfies
 \begin{equation*}
\begin{aligned} 
& \text{ (skewsymmetry)  $[a_\lambda b]=-(-1)^{p(a)p(b)}[b_{-\partial-\lambda} a]$ } ;\\
& \text{ (Jacobi identity)  $[a_\lambda[b_\mu c]]-(-1)^{p(a)p(b)}[b_\mu[a_\lambda c]] -[[a_\lambda b]_{\lambda+\mu} c] \in \mathcal{M}_\zeta(R)$};
\end{aligned}
\end{equation*}
for $a,b,c\in R$ and $\mathcal{M}_\zeta(R)=\mathcal{M}(R) \cap \mathcal{T}_\zeta(R)$ where the subset  $\mathcal{M}(R)\subset \mathcal{T}(R)$ is 
\[  \left\{ A\otimes \left. \left(d\otimes e\otimes D -(-1)^{p(d)p(e)} e\otimes d \otimes D - : \int_{-\partial}^0 [d_\lambda e] d\lambda \,  D:\right)\, \right| \, d,e\in R, \, A, D\in \mathcal{T}(R)\right\}\]
then $R$ is called a {\it nonlinear Lie conformal algebra.}
\end{defn}

\begin{defn} \cite{DK}
Let $R$ be a $\Gamma'_+$-graded nonlinear Lie conformal algebra. The vertex algebra $\mathcal{T}(R)/\mathcal{M}(R)$ is called the {\it universal enveloping vertex algebra} of $R.$ 
\end{defn}

\begin{ex}
Let $\g$ be a Lie superalgebra endowed with the Lie bracket $[\, , \, ]$ and the supersymmetric invariant bilinear form $(\, |\, )$ and let $R=\CC[\partial]\otimes \g$ be a Lie conformal algebra endowed with the $\lambda$-bracket such that $[a_\lambda b]=[a,b]$ for $a,b\in \g$. Consider the $\ZZ_{>0}$-grading on $R$ defined by $\zeta(a)=1$ for any $a\in R.$ Then the map $[\, _\lambda \, ]_k:R \otimes R \to \CC[\lambda]\otimes \mathcal{T}(R)$ such that 
\[ [a_\lambda b]_k= [a,b]+k\lambda(a|b) , \qquad k \in \CC \]
is a nonlinear $\lambda$-bracket of $R.$ Hence $(R, [\, _\lambda\, ])$ is a nonlinear LCA. Consider the quasi-classical limit of the regular family of vertex algebras $V_\epsilon:=V(R)[\epsilon]$, which is endowed with the $\lambda$ bracket defined by $[a_\lambda b]_\epsilon=\epsilon[a_\lambda b]$ for $a,b\in R$, is the Poisson  vertex algebra $S(R)$ endowed with the Poisson $\lambda$-bracket $\{\, _\lambda \, \}$ defined by $\{a_\lambda b\}=[a_\lambda b]$ for any $a,b\in R$ and Leibniz rules. 
\end{ex}

\subsection{ Basic results in filtered complexes } \label{Subsec:FC} \ \\

Let $\Gamma= \frac{1}{N} \ZZ$ for a positive integer $N$ and $U$ be a vector superspace.  The linear map $d:U\to U$ is called  an odd differential of  $U$, if $d^2=0$ and $d$ is odd. If $U$ is a (i) superalgebra,  (ii) Lie superalgebra,  (iii) Lie conformal algebra, respectively, then we assume that 
\[ (i) \ d(ab)= d(a) b+ (-1)^{p(a)} a d(b), \ (ii) \ d([a,b])= [d(a), b]+ (-1)^{p(a)} [a,d(b)], \]
\[ \ (iii) \ d[a_\lambda b]= [d(a)_\lambda b] +(-1)^{p(a)}[a_\lambda d(b)],\]
respectively. 

\begin{defn} \cite{DK}
The complex $(U,d)$ is called a {\it filtered complex}  if 
\begin{enumerate}
\item $U$ is a $\Gamma$-bigraded space such that 
\[ U= \bigoplus_{p,q \in \Gamma} U^{p,q}= \bigoplus_{p+q=n \in \ZZ} U^n\]
\item For the decreasing filtration $\{ F^p U \, | \,  p\in \Gamma \}$ where $F^p U=\bigoplus_{p' \geq p, q} U^{p', q} $,  the odd differential $d$ has degree $1$ and preserves the filtration:
\[ d(F^p U^n ) \subset F^p U^{n+1}\]
where $F^p U^n = F^p U  \cap U^n.$  
\end{enumerate}
\end{defn}
  
If the complex $(U,d)$ is a filtered complex with the filtration $\{F^p U\, | \, p\in \Gamma\}$ then $H^n(U,d)$ is also a filtered space with 
\[ F^p H^n (U,d)= \text{ Ker}(d |_{F^pU^n}) / ( \text{Im} d \cap F^p U^n ).\]

Let us write
\[ \text{gr}^{pq} H(U,d)= F^p H^{p+q}(U,d) / F^{p+\epsilon} H^{p+q}(U,d) \]
for $\epsilon= \frac{1}{N}$ and let 
\[ H^{p,q}(\text{gr} U, d^{\text{p,q}}) = \frac{ \text{Ker}(d^{\text{gr}}: \text{gr}^{p,q} U \to \text{gr}^{p,q+1} U)}{ \text{Im}(d^{\text{gr}}: \text{gr}^{p,q-1} U \to \text{gr}^{p,q} U)} \]
where $\text{gr} U= \bigoplus_{p,q\in \Gamma} \text{gr}^{p,q} U$ and $\text{gr}^{p,q} U = F^p U^{p+q} /F^{p+\epsilon} U^{p+q}.$

\begin{defn} \cite{DK}
Let $(U,d)$ be a filtered complex. 
\begin{enumerate}
\item
The complex $(U,d)$ is said to be {\it good } if $H^{p,q}(\text{gr}U, d^{\text{gr}})=0$ for all $p,q \in \Gamma$ such that $p+q \neq 0.$
\item 
For each $n>0$, if $F^p U^n=0$ for $p>>0$ then $U$ is said to be {\it locally finite.}
\end{enumerate}
\end{defn}

\begin{prop} \label{Prop:com_gr}\cite{DK} 
If the filtered complex $(U,d)$ is good and locally finite then we have 
\[ \gr^{p,q}H(U,d) \simeq H^{p,q} (\gr \, U, d^\gr). \]
\end{prop}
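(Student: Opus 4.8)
The plan is to read this as the degeneration of the spectral sequence attached to the filtered complex $(U,d)$. Writing $\epsilon=\frac1N$ and abbreviating the total degree by $n=p+q$, one builds the usual pages $E_r^{p,q}$ with $E_0^{p,q}=\gr^{p,q}U$ and zeroth differential $d_0=d^{\gr}$, so that, by the very definition of $H^{p,q}(\gr U,d^{\gr})$, one has $E_1^{p,q}=H^{p,q}(\gr U,d^{\gr})$. The goal is then to identify the abutment $E_\infty^{p,q}$ with $\gr^{p,q}H(U,d)$ and to prove that $E_1^{p,q}=E_\infty^{p,q}$. The key bookkeeping fact is that the higher differential $d_r$ ($r\ge 1$) sends $E_r^{p,q}$ to $E_r^{p+r\epsilon,\,q-r\epsilon+1}$, hence raises the total degree $n$ by exactly one.

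The degeneration is where goodness enters, and it is short. By hypothesis $E_1^{p,q}=H^{p,q}(\gr U,d^{\gr})=0$ whenever $n\ne 0$, so $E_1$ is concentrated on the antidiagonal $n=0$; as each $E_r$ ($r\ge1$) is a subquotient of $E_1$, every page is concentrated there too. Since $d_r$ raises $n$ by one, for $r\ge1$ each component $d_r\colon E_r^{p,q}\to E_r^{p+r\epsilon,\,q-r\epsilon+1}$ has either vanishing source (if $n\ne0$) or vanishing target (if $n=0$, the target sitting in total degree $1$). Thus $d_r=0$ for all $r\ge1$ and $E_1^{p,q}\simeq E_2^{p,q}\simeq\cdots\simeq E_\infty^{p,q}$.

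To keep the statement self-contained I would spell out the two relevant pages concretely rather than invoke convergence as a black box. Writing $Z=\ker d$, $B=\on{Im}d$, and $\tilde Z^{p}_n=F^pU^n\cap d^{-1}(F^{p+\epsilon}U^{n+1})$, one checks directly that
\[ E_1^{p,q}=\tilde Z^{p}_{n}\big/\big(d(F^pU^{n-1})+F^{p+\epsilon}U^n\big),\qquad \gr^{p,q}H(U,d)=(Z\cap F^pU^n)\big/\big((Z\cap F^{p+\epsilon}U^n)+(B\cap F^pU^n)\big), \]
and that the inclusion $Z\cap F^pU^n\hookrightarrow\tilde Z^p_n$ induces the comparison map $\gr^{p,q}H(U,d)\to E_1^{p,q}$. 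Surjectivity is a successive-approximation argument: given $u\in\tilde Z^p_n$, its differential $du$ lies in $F^{p+\epsilon}U^{n+1}$ and is $d^{\gr}$-closed there, so when $n+1\ne0$ goodness produces $v\in F^{p+\epsilon}U^{n}$ with $du-dv\in F^{p+2\epsilon}U^{n+1}$; replacing $u$ by $u-v$ leaves its class in $\gr^{p,q}U$ unchanged while pushing $du$ one filtration step higher, and iterating yields an honest cocycle representing the same $E_1$-class. Injectivity is the dual assertion $B\cap F^pU^n\subset d(F^pU^{n-1})+F^{p+\epsilon}U^n$, proved by the same scheme applied to a potential (and it terminates for free, since one only needs to close a finite filtration gap up to the fixed degree $p$).

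The main obstacle, and the only place the second hypothesis is genuinely used, is termination of the surjectivity approximation — equivalently, convergence of the spectral sequence to $\gr H(U,d)$. Here local finiteness is exactly the right input: each correction pushes the obstruction into $F^{p'}U^{m}$ with $m=n+1>0$ on the antidiagonal, and these vanish for $p'\gg0$, so the iteration stops after finitely many steps and the corrections sum to a genuine element of $U$. Two points require care in the write-up: first, that every intermediate obstruction really lies in a total degree $\ne0$, so that goodness applies at each stage; and second, that the bookkeeping of filtration degrees, indexed by $\Gamma=\frac1N\ZZ$ rather than $\ZZ$, causes no trouble — which it does not, the index set being discrete and the obstructions strictly increasing in filtration.
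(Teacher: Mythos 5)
The paper itself gives no proof of this proposition — it is quoted from \cite{DK} — so there is no in-paper argument to compare against, and I assess your proof on its own terms. Your architecture is sound and is essentially the standard one: degeneration at $E_1$ forced by goodness (all pages concentrated on the antidiagonal $p+q=0$, differentials raising total degree by one), combined with a hands-on identification of $E_1^{p,q}$ with $\gr^{p,q}H(U,d)$ rather than an appeal to abstract convergence — a necessary choice here, since local finiteness is only a one-sided condition (it concerns $U^n$ for $n>0$ only). Your explicit formulas for $E_1^{p,q}$ and $\gr^{p,q}H(U,d)$ are correct, and your surjectivity argument is complete: on the antidiagonal the obstructions $du$ all live in total degree $1\neq 0$, so goodness applies at every stage, and local finiteness ($F^{p'}U^1=0$ for $p'\gg 0$) terminates the iteration.

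The injectivity step, however, has a genuine hole at total degree $n=1$. You prove the dual assertion $B\cap F^pU^n\subset d(F^pU^{n-1})+F^{p+\epsilon}U^n$ ``by the same scheme applied to a potential'': given $z=du\in F^pU^n$ with $u\in F^{p_0}U^{n-1}$, $p_0<p$, the symbol of $u$ is $d^{\gr}$-closed and you invoke goodness to correct $u$ one filtration step at a time. But goodness applies to the potential only when its total degree $n-1$ is nonzero; for $n=1$ the potential sits exactly on the antidiagonal, where goodness says nothing, so the scheme does not run. Note also that this dual assertion is not merely ``injectivity'': it is what makes your comparison map $\gr^{p,q}H(U,d)\to E_1^{p,q}$ well defined at all, since the denominator of $\gr^{p,q}H(U,d)$ contains $B\cap F^pU^n$, whose elements have potentials of arbitrarily low filtration. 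The repair is cheap: for $n\neq 0$ you should bypass the comparison map entirely and apply goodness to the cocycle itself. Given $z\in\ker d\cap F^pU^n$ with $n\neq 0$, its symbol is $d^{\gr}$-closed in total degree $n\neq 0$, hence $d^{\gr}$-exact, so there is $w\in F^pU^{n-1}$ with $z-dw\in F^{p+\epsilon}U^n$; then $z=dw+(z-dw)$ with $dw\in B\cap F^pU^n$ and $z-dw\in\ker d\cap F^{p+\epsilon}U^n$, which shows directly that $\gr^{p,q}H(U,d)=0=H^{p,q}(\gr U,d^{\gr})$ off the antidiagonal — one application of goodness, no iteration, no local finiteness. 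The dual assertion is then needed only for $n=0$, where the potential has total degree $-1\neq 0$ and your scheme, with its free termination over the finite filtration gap, works verbatim. (Alternatively, the $n=1$ case of the dual assertion can be recovered from your surjectivity result at degree $0$, by replacing the potential with an honest cocycle modulo $d(F^{p_0}U^{-1})+F^{p_0+\epsilon}U^0$ and iterating; but the direct vanishing argument is shorter.) With this repair your proof is correct.
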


\begin{prop} [K\"unneth lemma]
\begin{enumerate}
\item
Let $V_1$ and $V_2$ be vector superspaces with differentials $d_i: V_i \to V_i$ for $i=1,2$. If $d:V\to V$ is a differential on $V=V_1 \otimes V_2$ such that $d=d_1 \otimes 1 +1 \otimes d_2$ then there is a canonical linear isomorphism
\[ H(V,d)= H(V_1, d_1) \otimes H(V_2, d_2).\]
\item If $S(V)$ is a supersymmetric algebra generated by the vector superspace $V$ then 
\[ H(S(V), d) \simeq S(H(V,d)).\]
\end{enumerate}
\end{prop}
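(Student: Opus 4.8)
The plan is to reduce both statements to the classical Künneth theorem over a field, exploiting that $\CC$ is a field of characteristic zero: every subspace is a direct summand, every short exact sequence of complexes splits, and there are no $\operatorname{Tor}$ terms. All gradings below are the $\ZZ$-grading $U^n$ underlying the $\Gamma$-bigrading, and ``differential'' means parity-odd of degree $+1$.

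For part (1), I would first split each factor. Put $Z_i=\ker d_i$ and $B_i=\operatorname{im} d_i\subseteq Z_i$, and use that $\CC$ is a field to choose complements giving $V_i=\mathcal{H}_i\oplus B_i\oplus L_i$ with $Z_i=\mathcal{H}_i\oplus B_i$ and $d_i|_{L_i}\colon L_i\xrightarrow{\ \sim\ }B_i$. This exhibits a decomposition of complexes $(V_i,d_i)\cong(\mathcal{H}_i,0)\oplus(A_i,d_i)$, where $\mathcal{H}_i\cong H(V_i,d_i)$ carries the zero differential and $A_i=B_i\oplus L_i$ is \emph{contractible}: the map $s_i$ with $s_i|_{B_i}=(d_i|_{L_i})^{-1}$ and $s_i|_{L_i\oplus\mathcal{H}_i}=0$ satisfies $d_i s_i+s_i d_i=\mathrm{id}_{A_i}$.

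Then $V=V_1\otimes V_2$ breaks into the four summands $\mathcal{H}_1\otimes\mathcal{H}_2$, $\mathcal{H}_1\otimes A_2$, $A_1\otimes\mathcal{H}_2$, $A_1\otimes A_2$, each a subcomplex for $d=d_1\otimes 1+1\otimes d_2$. On the first the differential vanishes. Each of the other three contains a contractible tensor factor, and the tensor product of a contractible complex with any complex is again contractible (the operator $s_i\otimes 1$, equipped with the appropriate Koszul sign, is a contracting homotopy). Hence those three summands are acyclic and
\[ H(V,d)=\mathcal{H}_1\otimes\mathcal{H}_2=H(V_1,d_1)\otimes H(V_2,d_2), \]
the isomorphism being canonical (independent of the chosen complements). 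For part (2) I would iterate part (1) to get $H(V^{\otimes n},d)\cong H(V,d)^{\otimes n}$, and then pass to symmetric powers via the symmetrizer. Since $\operatorname{char}\CC=0$, the supersymmetric power $S^n(V)$ is the image of the idempotent $e_n=\tfrac{1}{n!}\sum_{\sigma\in\Sigma_n}\pm\,\sigma$ acting on $V^{\otimes n}$, where the signs are the super (Koszul) signs; and $e_n$ is a chain map because the super-braiding is compatible with the total differential. As homology commutes with the direct-sum decomposition induced by an idempotent chain map, $H(S^n(V),d)=e_n\bigl(H(V,d)^{\otimes n}\bigr)=S^n(H(V,d))$, and summing over $n$ yields $H(S(V),d)\simeq S(H(V,d))$.

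The entire conceptual content sits in part (1), which is the (Tor-free) Künneth statement over a field; part (2) is then formal. The only place demanding real care is the bookkeeping of Koszul signs: checking that $d=d_1\otimes 1+1\otimes d_2$ squares to zero in the super sense, that $s_i\otimes 1$ genuinely contracts the mixed tensor summands, and that the super-symmetrizer $e_n$ commutes with $d$. These are precisely the steps where a sign convention could break the argument, so I would verify them explicitly rather than treat them as automatic, while leaving the routine linear algebra of the splitting unwritten.
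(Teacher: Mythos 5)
The paper states this K\"unneth lemma without any proof, quoting it as a standard fact (cf.\ \cite{DK}), so there is no in-paper argument to compare against; your proposal supplies a complete proof, and it is correct. The route is the standard field-coefficient one: split each $(V_i,d_i)$ as $(\mathcal{H}_i,0)\oplus(A_i,d_i)$ with $A_i$ contractible --- possible since every subspace of a $\CC$-vector superspace admits a homogeneous complement --- then note that a tensor product containing a contractible factor is contractible, so only $\mathcal{H}_1\otimes\mathcal{H}_2$ survives in homology; part (2) follows by applying part (1) to $V^{\otimes n}$ and cutting down by the supersymmetrizer $e_n$, which is an idempotent chain map because the Koszul braiding commutes with the total differential. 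Two points you flag do deserve the explicit verification you promise: (i) the cross terms in $d(s_1\otimes 1)+(s_1\otimes 1)d$ cancel only because $s_1$ is \emph{odd} (parity-reversing), i.e.\ $(-1)^{p(s_1a)}=-(-1)^{p(a)}$; with an even homotopy the contraction of the mixed summands would fail, so this is exactly where the super sign conventions carry the argument; (ii) in part (2) you additionally need the isomorphism of part (1) to intertwine the $\Sigma_n$-actions on $H(V^{\otimes n})$ and on $H(V,d)^{\otimes n}$ --- it does, because the canonical map $[z_1]\otimes\cdots\otimes[z_n]\mapsto[z_1\otimes\cdots\otimes z_n]$ is equivariant for the super-permutation action --- since otherwise $e_n H(V^{\otimes n})$ could not be identified with $S^n(H(V,d))$. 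With those checks written out, your argument is complete and self-contained, filling a gap the paper leaves to the reader.
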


\begin{rem} \cite{DK}
Let $\g$ be a $\Gamma'_+$-graded nonlinear Lie superalgebra with a differential $d:\g\to \g$ preserving the $\Gamma_+'$-grading. Suppose (1) $H(\g,d)$ has the $\Gamma'_+$-grading induced from that of $\g$, (2) nonlinear Lie bracket of $\g$ induces a nonlinear lie  bracket of $H(\g,d)$. Then there is a canonical associative superalgebra isomorphism 
\[ H (U(\g), d)\simeq U(H(\g,d)).\]
\end{rem}

\section{Definition of classical affine W-algebras associated to Lie superalgebras} \label{Sec:superW}

Let $\g=\g_{\bar{0}} \oplus \g_{\bar{1}}$ be a classical finite simple Lie superalgebra with the even part $\g_{\bar{0}}$ and the odd part $\g_{\bar{1}}.$ We choose an even $\sll_2$-triple $(e,h=2x,f)$ in $\g_{\bar{0}}$. Then the operator $\ad x$ on $\g$ is diagonalizable and 
\[ \textstyle \g=\bigoplus_{i\in \frac{\ZZ}{2}} \g(i) \text{ where } \g(i)=\{ g\in \g\, |\, [x,g]=ig\}.\]
 Especially, $f\in \g(-1)$ and $e\in \g(1).$  Also, let $(\, | \, )$ be a supersymmetric bilinear invariant form which satisfies $(e|f)=\frac{1}{2}(h|h)=1$ and let  
\[ \textstyle \n =\bigoplus_{i>0} \g(i), \quad \n_-=\bigoplus_{i<0} \g(i), \quad \m=\bigoplus_{i\geq 1} \g(i)\]
 be subalgebras of $\g.$ The following two sets   
\[ \{u_\alpha| \alpha \in \overline{S}\} \text{ and } \{u^\alpha|\alpha \in \overline{S}\} \] 
are dual bases of $\g$ such that  (1) both of bases are compatible with the parity, (2) $(\, u_\alpha\, | \, u^\beta\, )= \delta_{\alpha\beta}$, (3) $u_\alpha \in \g(j_\alpha)$ and $u^\alpha \in \g(-j_\alpha)$ . Let $S$ be the subset of $\overline{S}$ such that 
\[ \{u_\alpha| \alpha \in S\} \text{ and } \{u^\alpha|\alpha \in S\} \] 
be dual bases of $\n$ and $\n_-$.  The subset $S(1/2)\subset S$ is the index set such that $\{u_\alpha | \alpha \in S(1/2)\} = \{u_\alpha | \alpha \in S\}\cap \g\left(\frac{1}{2}\right)$. Thus 
\[ \textstyle \{u_\alpha|\alpha\in S(1/2)\} \text{  is a basis of } \g\left(\frac{1}{2}\right).\]

\subsection{First definition via  classical BRST complex} \label{Subsec:superW-BRST}\ \\ 

Recall that a quantum W-algebra is defined by a BRST quantized complex of a complex of Lie (super)algebras. We shall call by the {\it classical BRST complex}, the quasi-classical limit of the BRST quantized complex.

In order to introduce a classical BRST complex, we recall following three types of nonlinear Lie conformal algebras \cite{DK}:

\begin{enumerate}
\item The nonlinear current Lie conformal algebra $Cur_k(\g) =\CC[\partial]\otimes \g$ is endowed with the $\lambda$-bracket 
\[ [a_\lambda b]= [a,b]+ k \lambda(a|b) , \qquad a,b\in \g \]
for given $k\in \CC$.

\item Let $\phi_\n$ be a vector superspace isomorphic to $\Pi(\n)$ where $\Pi$ is the parity reversing map and, similarly, let $\phi^{\n_-}\simeq \Pi(\n_-)$ as vector superspaces. Then the 
charged free fermion nonlinear Lie conformal algebra $R_{ch}=\CC[\partial] \otimes  (\phi_\n \oplus \phi^{\n^-})$ is endowed with the $\lambda$-bracket 
\[ [ \phi_{a_1}\, _\lambda\, \phi_{a_2}] = [ \phi^{b_1} \, _\lambda\, \phi^{b_2} ]=0, \quad [ \phi_a \, _\lambda\, \phi^b]= (a|b), \]
for $a_1, a_2, a \in \n$ and $b_1, b_2, b \in \n_-.$ For $a\in \g$, we let $\phi_a= \phi_{\pi_+ a}$ and $\phi_a= \phi_{\pi_- a}$, where $\pi_+$ and $\pi_-$ are projection maps from $\g$ onto $\n$ and $\n_-.$

\item Let $\Phi_{\g\left(\frac{1}{2}\right)}$ be a vector superspace isomorphic to $\g\left(\frac{1}{2}\right)$. The neutral free fermion nonlinear Lie conformal algebra $R_{ne}=\CC[\partial]\otimes \Phi_{\g\left(\frac{1}{2}\right)}$ is endowed with the $\lambda$-bracket 
\[ [ \Phi_{c_1} \, _\lambda \, \Phi_{c_2}] = (f|[c_1, c_2]). \]
For $a \in \g$, we let $\Phi_a= \Phi_{\pi_{1/2} a},$ where $\pi_{1/2}$ is the projection map on $\g$ onto $\g\left(\frac{1}{2}\right).$ 

\end{enumerate}

Let $R=Cur_k(\g)\oplus R_{ch}\oplus R_{ne}$ be the direct sum of $Cur_k(\g)$, $R_{ch}$ and $R_{ne}$ as a nonlinear LCA. The supersymmetric algebra $S(R)$ generated by  $R$ is a PVA endowed with the bracket $\{\, _\lambda\, \}$ induced by that of $R$ and Leibniz rules. 

Denote
\begin{enumerate} 
\item $L_\g= \sum_{\alpha\in \bar{S}} \frac{1}{2k} u^\alpha u_\alpha  +\partial x \in S(Cur_k(\g))$ where $\{ u^\alpha |\, \alpha\in \bar{S}\}$ and $\{u_\alpha |\,\alpha \in \bar{S}\}$ ;
\item $L^{ch}=- \sum_{\alpha \in S} j_\alpha \phi^\alpha (\partial \phi_{\alpha})+\sum_{\alpha \in S} (1-j_\alpha)(\partial \phi^\alpha)\phi_\alpha\in S(R_{ch})$ where $\phi_\alpha:=\phi_{u_\alpha}$ and $\phi^\alpha:=\phi^{u^\alpha}$;
\item $L^{ne}= \frac{1}{2} \sum_{\alpha \in S(1/2)} (\partial \Phi^\alpha) \Phi_\alpha\in S(R_{ne})$ where $\Phi^\alpha=\Phi_{v_\alpha}$ and $\Phi_\alpha=\Phi_{u_\alpha}$ such that  $(f|[u_\alpha, v^\beta])=\delta_{\alpha\beta}$ for $\alpha, \beta \in S(1/2).$
\end{enumerate}

Then 
\begin{equation}
\begin{aligned}
&  \{ L^\g \, _\lambda u_\alpha \}= \partial u_\alpha + (1-j_\alpha)\lambda u_\alpha-k \lambda^2 (x|u_\alpha)  \text{ for } \alpha \in \bar{S} \text{ and } u_\alpha \in \g(j_\alpha); \\
&  \{L^{ch}\, _\lambda \phi_\alpha\}= (\partial +(1-j_\alpha)\lambda)\phi_\alpha, \quad  \{L^{ch}\, _\lambda \phi^\alpha\}= (\partial +j_\alpha \lambda)\phi^\alpha \text{ for } \alpha \in S; \\
& \{L^{ne}\, _\lambda \Phi_\alpha\}=  (\partial +\frac{1}{2}\lambda)\Phi_\alpha \quad \text{ for } \alpha \in S(1/2).
\end{aligned}
\end{equation}
Hence 
\begin{equation} \label{Eqn:Ham}
H=L_{(1)},  \text{ where }   L=L^\g+L^{ch}+L^{ne} \in S(R),
\end{equation}
is a Hamiltonian operator of $S(R)$ and conformal weights of generating elements of $R$ are
\[ \Delta_{u_\alpha} = 1- j_\alpha, \quad \Delta_{\phi^\beta}=j_\beta, \quad \Delta_{\phi_\beta}=(1-j_\beta), \quad\Delta_{\Phi_\gamma} =\frac{1}{2}\]
for $\alpha \in \bar{S}$, $\beta\in S$ and $\gamma \in S(1/2).$

Take the element
\begin{equation} \label{differential}
d= \sum_{\alpha \in S} (-1)^{p(\alpha)} \phi^\alpha u_\alpha + \sum_{a\in S(1/2)}\phi^\alpha \Phi_\alpha + \phi^f +\frac{1}{2} \sum_{\alpha, \beta} (-1)^{p(\alpha)} \phi^\alpha \phi^\beta \phi_{[u_\beta, u_\alpha]} \in S(R),
\end{equation}
where $\phi_\alpha=\phi_{u_\alpha}$, $\phi^\alpha= \phi^{u^\alpha}$ and $p(\alpha)=p(u_\alpha).$
Then we have the following lemma.

\begin{lem}
\begin{enumerate}
\item
The element $d \in S(R)$ has the odd parity.
\item
We have the following formulas:
\begin{equation}
\begin{aligned}
&\{ d\, _\lambda a\}  = \sum_{\alpha \in S} (-1)^{p(\alpha)} \phi^\alpha[u_\alpha, a] + k\, (-1)^{p(a)}(\partial+\lambda)\phi^a, \\
&\{ d\, _\lambda \phi_a\}  = \pi_+ a + (a|f)+ (-1)^{p(a)} \Phi(a)+ \sum_{\alpha \in S} \phi^\alpha \phi_{[u_\alpha, \pi_+ a]}, \\
&\{ d\, _\lambda \phi^a\} = \frac{1}{2} \sum_{\alpha \in S} (-1)^{p(\alpha)} \phi^\alpha \phi^{[u_\alpha, a]}, \\
&\{d\, _\lambda \Phi_a \} = \phi^{[\pi_{1/2} a, f]}.
\end{aligned}
\end{equation}
\item
We have $\{d_\lambda d\}=0.$
\end{enumerate}
\end{lem}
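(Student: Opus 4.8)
The plan is to handle the three parts in order, with part (3) carrying all the real weight. For part (1) I would simply tally parities. Recall that the charged fermions are parity-reversed, so $p(\phi_a)=p(\phi^a)=p(a)+\bar{1}$, while the neutral fermions keep parity, $p(\Phi_a)=p(a)$, since $\phi_\n\simeq\Pi(\n)$ but $\Phi_{\g\left(\frac{1}{2}\right)}\simeq\g\left(\frac{1}{2}\right)$. Then each of the four summands of $d$ is odd: $\phi^\alpha u_\alpha$ has parity $(p(\alpha)+1)+p(\alpha)$; $\phi^\alpha\Phi_\alpha$ has parity $(p(\alpha)+1)+p(\alpha)$; $\phi^f$ is odd because $f$ is even; and $\phi^\alpha\phi^\beta\phi_{[u_\beta,u_\alpha]}$ has parity $(p(\alpha)+1)+(p(\beta)+1)+(p(\alpha)+p(\beta)+1)$, each $\equiv 1 \pmod 2$.

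For part (2) I would compute each bracket directly from the PVA axioms: the $\lambda$-bracket on $S(R)$ is determined by the brackets on the generators together with sesquilinearity and the left and right Leibniz rules, and the three summands $Cur_k(\g)$, $R_{ch}$, $R_{ne}$ bracket trivially with one another, so only the terms of $d$ lying in the relevant summand contribute. For $\{d_\lambda a\}$ with $a\in\g$, only $\sum_\alpha(-1)^{p(\alpha)}\phi^\alpha u_\alpha$ survives; applying the left Leibniz rule together with $\{u_\alpha{}_\lambda a\}=[u_\alpha,a]+k\lambda(u_\alpha|a)$ yields the commutator sum plus $k(-1)^{p(a)}(\partial+\lambda)\phi^a$, the second term arising because $\sum_\alpha(u_\alpha|a)\phi^\alpha=\phi^a$ reconstructs the dual element. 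The brackets with $\phi_a$, $\phi^a$, $\Phi_a$ are obtained identically, using $\{\phi_a{}_\lambda\phi^b\}=(a|b)$ and $\{\Phi_{c_1}{}_\lambda\Phi_{c_2}\}=(f|[c_1,c_2])$, with careful attention to the Koszul signs and to the projections $\pi_+$, $\pi_{1/2}$.

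For part (3) I would feed the formulas of part (2) into the Leibniz expansion of $\{d_\lambda d\}$, writing $d$ as the sum of its four summands and differentiating the second copy. Since $d$ is odd it suffices to show the resulting polynomial in $\lambda$ vanishes identically, which by the Jacobi identity is equivalent to $d_{(0)}$ squaring to zero. I would organize the verification by conformal weight, using the Hamiltonian $H$ of \eqref{Eqn:Ham}, and by the number of $\phi$-factors, so that only finitely many monomial types must be matched. The main obstacle, and the step demanding genuine care, is the cancellation among the purely fermionic contributions: bracketing $d$ against the cubic term $\frac{1}{2}\sum(-1)^{p(\alpha)}\phi^\alpha\phi^\beta\phi_{[u_\beta,u_\alpha]}$ produces cubic and quartic monomials in the $\phi$'s that must cancel against the terms generated when $\{d_\lambda\cdot\}$ hits the $u_\alpha$ in $\phi^\alpha u_\alpha$ and against the $\phi^f$ and neutral-fermion contributions; this cancellation is exactly the Jacobi identity of $\g$ repackaged, together with invariance of $(\,|\,)$ and the normalizations $(e|f)=1$.

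Finally, as a consistency check I would note the quasi-classical shortcut: $d$ is by construction the quasi-classical limit of the quantum BRST differential $d^{q}$ of \cite{KRW,DK}, for which the self-bracket vanishes, so that $\{d_\lambda d\}$ is the $\epsilon$-leading part of $[d^{q}{}_\lambda d^{q}]/\epsilon$ and must vanish as well. I would nonetheless present the direct computation as the primary argument, since it makes the role of the Jacobi identity transparent and is self-contained within the PVA $S(R)$.
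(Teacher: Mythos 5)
Your plan follows essentially the same route as the paper: parity counting for (1), Leibniz/sesquilinearity computations from the generator brackets for (2), and for (3) a direct expansion of $\{d_\lambda d\}$ whose surviving quartic fermionic term is cancelled by the Jacobi identity of $\g$ (a cyclic sum over three indices); the only difference is organizational, since the paper bundles $X_a = (-1)^{p(a)}a + \Phi_a + (a|f)$ so that the current, neutral-fermion and constant cross-terms cancel in pairs before Jacobi is invoked, whereas you propose bookkeeping by conformal weight and number of $\phi$-factors, which amounts to the same computation. One correction: your parenthetical claim that vanishing of the polynomial $\{d_\lambda d\}$ is ``equivalent by the Jacobi identity to $d_{(0)}$ squaring to zero'' is false --- only the forward implication holds, since for instance a nonzero total derivative $\{d_\lambda d\}=\partial c$ would still give $d_{(0)}^2=0$ because sesquilinearity kills it at $\lambda=\mu=0$; as your primary argument proves the polynomial identity itself, this side remark is inert, but it should be deleted.
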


\begin{proof}
 Let us denote $s(a)=(-1)^{p(a)}$ for a homogenous element $a\in S(R)$. 

(1)  Since $s(\phi^\alpha)s(u_\alpha)=-1$,  $s(\phi^\beta) s(\Phi_\beta)=-1$, and $s(\phi^f)=-1$, for $\alpha \in S$,  $\beta \in S(1/2)$, the element $\sum_{\alpha \in S} p(\alpha) \phi^\alpha u_\alpha + \sum_{a\in S(1/2)}\phi^\alpha \Phi_\alpha + \phi^f $ has the odd parity. Also, we have $s(\phi^\alpha \phi^\beta \phi_{[u_\beta, u_\alpha]})= s(\phi^\alpha)s(\phi^\beta)s(\phi_{[u_\beta, u_\alpha]})=-1.$ Hence $d$ is an odd element. 

\vskip 3mm 

(2) Observe that 
\begin{equation*}
\{d\, _\lambda a\} = \sum_{\alpha\in S} \left( s(u_\alpha) \phi^\alpha[u_\alpha, a]  + k\,  s(u_\alpha) (\partial+\lambda)(u_\alpha | a) \phi^\alpha \right).
\end{equation*}
Let $a$ be a homogeneous element. Then $(u_\alpha|a)\neq 0$ only if $p(\alpha)=p(a).$ Hence 
\[ \{d\, _\lambda a\}  = \sum_{\alpha \in S} s(u_\alpha) \phi^\alpha[u_\alpha, a] + k\, s(a)(\partial+\lambda)\phi^a.\]
If we write $X_a=s(a)a+\Phi_a + (a|f)$ for $a\in \n$ and $X_\alpha= X_{u_\alpha}$ for $\alpha\in S$ then
\begin{equation*}
\begin{aligned}
\{ d\, _\lambda \phi_a\} & =   \sum_{\alpha\in S}  \{ \phi^\alpha X_\alpha \, _\lambda \phi_a\} +  \sum_{\alpha, \beta\in S} \frac{1}{2} \{ s(u_\alpha) \phi^\alpha \phi^\beta \phi_{[u_\beta, u_\alpha]} \, _\lambda \phi_a\}.
\end{aligned}
\end{equation*}
We have 
\begin{equation} \label{Eqn:4.2_150205}
\sum_{\alpha\in S}  \{ \phi^\alpha X_\alpha \, _\lambda \phi_a\} = \sum_{\alpha\in S} \phi(u_\alpha) X_\alpha (a|u^\alpha)=\pi_+a +(a|f) + s(a) \Phi(a)
\end{equation}
and 
\begin{equation} \label{Eqn:phi_a}
\begin{aligned}
 & \sum_{\alpha, \beta \in S}  \{ s(u_\alpha) \phi^\alpha \phi^\beta \phi_{[u_\beta, u_\alpha]} \, _\lambda \phi_a\} \\
  & = \sum_{\alpha, \beta \in S} s(u_\alpha) \phi_{[u_\beta, u_\alpha]} (-1)^{p(\phi^\alpha) p(\phi^\beta)} \phi^\beta \{\phi^\alpha \, _\lambda \phi_a\} + \sum_{\alpha, \beta\in S} s(u_\alpha) \phi_{[u_\beta, u_\alpha]} \phi^\alpha \{ \phi^\beta\, _\lambda \phi_a\}.
\end{aligned}
\end{equation}
The first term in the RHS of (\ref{Eqn:phi_a}) is 
\begin{equation}
\begin{aligned}
& \sum_{\alpha, \beta \in S} s(u_\alpha) \phi_{[u_\beta, u_\alpha]} (-1)^{p(\phi^\alpha) p(\phi^\beta)} \phi^\beta \{\phi^\alpha \, _\lambda \phi_a\}  \\
& =\sum_{\alpha, \beta\in S} s(u_\alpha) \phi_{[u_\beta, u_\alpha]} (-1)^{(p(\alpha)+1)(p(\beta)+1)} \phi^\beta \{ \phi^\alpha \, _\lambda \phi_a\} = \sum_{\beta\in S}\phi^\beta\phi_{[u_\beta, \pi_+ a]}
\end{aligned}
\end{equation}
and the second term in the RHS of (\ref{Eqn:phi_a}) is 
\begin{equation}
\begin{aligned}
& \sum_{\alpha, \beta\in S} s(u_\alpha) \phi_{[u_\beta, u_\alpha]} \phi^\alpha \{ \phi^\beta\, _\lambda \phi_a\}=\sum_{\alpha\in S} s(u_\alpha)s(a)\phi_{[\pi_+ a, u_\alpha]} \phi^\alpha = \sum_{\alpha\in S} \phi^\alpha \phi_{[u_\alpha, \pi_+ a]}.
\end{aligned}
\end{equation}
By (\ref{Eqn:4.2_150205}) and (\ref{Eqn:phi_a}), we have  $\{ d\, _\lambda \phi_a\}  = \pi_+ a + (a|f)+ s(a) \Phi(a)+ \sum_{\alpha \in S} \phi^\alpha \phi_{[u_\alpha, \pi_+ a]}.$ \\
The rest of two equations in (2) can be obtained by similar computations.

\vskip 3mm

(3) The element $d= \sum_{\alpha\in S} \phi^\alpha X_\alpha+\frac{1}{2}\sum_{\alpha, \beta\in S} s(u_\alpha) \phi^\alpha \phi^\beta \phi_{[u_\beta, u_\alpha]}$. By direct computations, we have
\begin{equation} \label{Eqn:4.7_150207}
 \big\{ \ \sum_{\alpha\in S} \phi^\alpha X_\alpha \, _\lambda \sum_{\beta \in S} \phi^\beta X_\beta \ \big\} = -\sum_{\alpha, \beta \in S} s(u_\alpha)\phi^\alpha \phi^\beta X_{[u_\beta, u_\alpha]} 
\end{equation}
and
\begin{equation}\label{Eqn:4.8_150207}
\begin{aligned}
& \big\{\ \sum_{\alpha\in S} \phi^\alpha X_\alpha \, _\lambda \sum_{\gamma, \delta \in S} \frac{1}{2} s(u_\gamma) \phi^\gamma \phi^\delta \phi{[u_\delta, u_\gamma]}\ \big\}=\sum_{\gamma, \delta \in S} \frac{1}{2} s(u_\gamma)\phi^\gamma \phi^\delta X_{[u_\delta, u_\gamma]}. \\
\end{aligned}
\end{equation}
On the other hand, we have
\begin{equation} \label{Eqn:4.9_150207}
\begin{aligned}
&\left\{ \,\sum_{\alpha, \beta\in S}  \frac{1}{2} s(u_\alpha) \phi^\alpha \phi^\beta \phi{[u_\beta, u_\alpha]} \right.  \, _\lambda \left. \sum_{\gamma, \delta\in S} \frac{1}{2} s(u_\gamma) \phi^\gamma \phi^\delta \phi{[u_\delta, u_\gamma]} \, \right\} \\
& \hskip 10mm = \sum_{\alpha, \beta, \gamma, \delta \in S} \frac{1}{4} s(u_\alpha) s(u_\gamma)   \left(\phi^\alpha \phi^\beta \{ \phi_{[u_\beta, u_\alpha]} \, _\lambda \phi^\gamma \phi^\delta \} \, \phi_{[u_\delta, u_\gamma]}  \right. \\
& \hskip 60mm \left. +  \phi_{[u_\beta, u_\alpha]} \{ \phi^\alpha \phi^\beta \, _\lambda \phi_{[u_\delta, u_\gamma]} \} \phi^\gamma \phi^\delta \right).
\end{aligned}
\end{equation}
By Leibniz rule,
\begin{equation*}
\begin{aligned}
& \sum_{\alpha, \beta, \gamma, \delta \in S}  s(u_\alpha) s(u_\gamma) \phi^\alpha \phi^\beta \{ \phi_{[u_\beta, u_\alpha]}\, _\lambda \phi^\gamma \phi^\delta \} \, \phi_{[u_\delta, u_\gamma]} \\
& = -\sum_{\alpha, \beta, \delta \in S} s(u_\alpha) \phi^\delta \phi^\alpha \phi^\beta \phi_{[[u_\beta, u_\alpha], u_\delta]}- \sum_{\alpha, \beta, \gamma \in S}  s(u_\alpha) \phi^\gamma \phi^\alpha \phi^\beta \phi_{[[u_\beta, u_\alpha], u_\gamma]}\\
 & = -2 \sum_{\alpha, \beta \delta \in S} s(u_\alpha) \phi^\delta \phi^\alpha \phi^\beta \phi_{[[u_\beta, u_\alpha], u_\delta]}
\end{aligned}
\end{equation*}
and 
\begin{equation*}
\begin{aligned}
& \sum_{\alpha, \beta, \gamma, \delta \in S}  s(u_\alpha) s(u_\gamma)\phi_{[u_\beta, u_\alpha]} \{ \phi^\alpha \phi^\beta \, _\lambda \phi_{[u_\delta, u_\gamma]} \} \phi^\gamma \phi^\delta \\
& = \sum_{\beta, \gamma, \delta \in S} -s(u_\gamma) \phi^\beta \phi^\gamma \phi^\delta \phi_{[[u_\delta, u_\gamma], u_\beta]} + \sum_{\alpha, \gamma, \delta} s(u_\gamma) \phi^\alpha \phi^\gamma \phi^\delta \phi_{[[u_\delta, u_\gamma], u_\alpha]} =0.
\end{aligned}
\end{equation*}
Since $\{d_\lambda d\}= (\ref{Eqn:4.7_150207}) + 2\cdot (\ref{Eqn:4.8_150207})+(\ref{Eqn:4.9_150207})= -\frac{1}{2} \sum_{\alpha, \beta, \delta \in S} s(u_\alpha) \phi^\delta \phi^\alpha \phi^\beta \phi_{[[u_\beta, u_\alpha], u_\delta]}$, we want to show that 
\begin{equation} \label{Eqn:4.10_150207}
 \sum_{\alpha, \beta, \gamma \in S} s(u_\beta) \phi^\alpha \phi^\beta \phi^\gamma \phi_{[[u_\gamma, u_\beta], u_\alpha]}=0.
 \end{equation}
We obtain (\ref{Eqn:4.10_150207}) from the property that, for any $\alpha, \beta, \gamma \in S$, the following formula holds:
\[ s(u_\beta) \phi^\alpha \phi^\beta \phi^\gamma \phi_{[u_\gamma,[u_\beta, u_\alpha]]} 
+ s(u_\alpha) \phi^\gamma \phi^\alpha \phi^\beta \phi_{[u_\beta,[u_\alpha, u_\gamma]]} 
+s(u_\gamma) \phi^\beta \phi^\gamma \phi^\alpha \phi_{[u_\alpha,[u_\gamma, u_\beta]]}=0 .\]
\end{proof}

\begin{prop}
Let $d_{(0)}: S(R) \to S(R)$ be defined by $A \mapsto \{d\, _\lambda A\}|_{\lambda=0}.$ Then $d_{(0)}^2 =0$ and $d_{(0)}$ is a differential on $S(R)$.
\end{prop}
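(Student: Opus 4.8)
The plan is to deduce all the required properties of $d_{(0)}$ purely formally from the axioms of a Poisson vertex algebra, together with the two facts already established in the preceding Lemma: that $d$ is odd and that $\{d_\lambda d\}=0$. Concretely I want to check that $d_{(0)}$ is an odd operator, that it is a derivation compatible with both the commutative product and the $\lambda$-bracket of the PVA $S(R)$ in the sense of the differentials in Subsection \ref{Subsec:FC}, and finally that $d_{(0)}^2=0$.

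First I would settle the parity and the two Leibniz-type properties. Since $d$ is odd, the assignment $A\mapsto\{d_\lambda A\}$ raises parity by $p(d)=1$, and this is preserved after specializing $\lambda=0$; hence $d_{(0)}$ is odd. Applying the PVA Leibniz rule with first slot $d$ gives $\{d_\lambda AB\}=(-1)^{p(A)}A\{d_\lambda B\}+\{d_\lambda A\}B$, and evaluating at $\lambda=0$ yields
\[ d_{(0)}(AB)=d_{(0)}(A)\,B+(-1)^{p(A)}A\,d_{(0)}(B), \]
which is exactly the superalgebra derivation rule (i) of Subsection \ref{Subsec:FC}. Likewise, the PVA Jacobi identity with first slot $d$, specialized at the outer variable $0$ and using $p(d)=1$, gives $d_{(0)}\{A_\mu B\}=\{(d_{(0)}A)_\mu B\}+(-1)^{p(A)}\{A_\mu(d_{(0)}B)\}$, which is the Lie conformal algebra compatibility (iii). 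Thus $d_{(0)}$ is a differential of the PVA structure in the required sense.

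The key step is $d_{(0)}^2=0$. Here I would write $d_{(0)}^2(A)=\{d_\lambda\{d_\mu A\}\}\big|_{\lambda=\mu=0}$, noting that since $d$ carries no $\mu$-dependence and the bracket is $\CC$-linear in its second argument, the inner specialization $\mu=0$ commutes with the outer bracket. Applying the Jacobi identity with $a=b=d$ gives
\[ \{d_\lambda\{d_\mu A\}\}=\{\{d_\lambda d\}_{\lambda+\mu}A\}+(-1)^{p(d)p(d)}\{d_\mu\{d_\lambda A\}\}. \]
By the preceding Lemma, $\{d_\lambda d\}=0$, so the first term vanishes, and since $p(d)=1$ the sign equals $-1$, leaving $\{d_\lambda\{d_\mu A\}\}=-\{d_\mu\{d_\lambda A\}\}$. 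Setting $\lambda=\mu=0$ forces $d_{(0)}^2(A)=-d_{(0)}^2(A)$, whence $d_{(0)}^2=0$.

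I expect the only genuine subtlety — the main obstacle, though a mild one — to lie in the bookkeeping of this last step: justifying that the inner evaluation at $\mu=0$ may be pulled through the outer $\lambda$-bracket, and correctly extracting the sign $(-1)^{p(d)^2}=-1$ from the super Jacobi identity, which is precisely what turns the antisymmetry of the iterated bracket into the vanishing of $d_{(0)}^2$. Everything else reduces to substituting $\{d_\lambda d\}=0$ and the oddness of $d$ into the PVA axioms.
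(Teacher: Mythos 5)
Your proof is correct and takes essentially the same route as the paper: both apply the super Jacobi identity with both entries equal to the odd element $d$, where the sign $(-1)^{p(d)p(d)}=-1$ yields $\{d_\lambda\{d_\mu A\}\}+\{d_\mu\{d_\lambda A\}\}=\{\{d_\lambda d\}_{\lambda+\mu}A\}$, and then specialize $\lambda=\mu=0$ and use $\{d_\lambda d\}=0$ to conclude $d_{(0)}^2=0$. Your extra verifications (oddness of $d_{(0)}$, the Leibniz and Jacobi compatibilities, and the remark that evaluation at $\mu=0$ commutes with the outer bracket) merely make explicit what the paper leaves implicit, and you correctly identify the key identity as the Jacobi identity, which the paper loosely calls the Leibniz rule.
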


\begin{proof}
By the Leibniz rule, we have 
\[ \{d_\lambda \{ d_\mu A\}\} + \{ d_\mu\{d_\lambda A\}\} = \{\{ d_\lambda d\}_{\lambda+\mu} A\}. \]
If we take $\lambda=\mu=0$ then $d_{(0)}^2 A = \{\{ d_\lambda d\}_{\lambda +\mu} A\}|_{\lambda=\mu=0}$. Since $\{d_\lambda d\}=0$, we have $d_{(0)}^2=0$ and $d_{(0)}$ is a differential on $S(R).$
\end{proof}

\begin{defn}
The classical BRST complex associated to $\g$ and $f$ be $S(R)$ with the differential $d_{(0)}.$ The {\it classical affine W-algebra }  
\[ \WW_1(\g, f,k)=H(S(R), d_{(0)})\]
associated to $\g$ and $f$  is a PVA endowed with the supersymmetric product and the $\lambda$-bracket 
\[ (A+I)(B+I)=AB+I, \quad \{A+I \, _\lambda B+I \}= \{A_\lambda B\}+I, \quad \text{ for } I=\text{Im}\, d_{(0)} \subset S(R). \]
\end{defn}

\begin{note}
If we want to emphasis the W-algebra $\WW(\g,f,k)$ is associated to a Lie ``super''algebra $\g$, we call the algebra by {\it W-superalgebra}. 
\end{note}

In order to show the well definedness of $\WW$-algebras, we need the following proposition.

\begin{prop}
The product and the $\lambda$-bracket on the $\WW_1(\g,f,k)$ are well-defined.
\end{prop}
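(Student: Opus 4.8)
The goal is to show that the product and $\lambda$-bracket on $\WW_1(\g,f,k)=H(S(R),d_{(0)})$ are well-defined, meaning they do not depend on the choice of coset representatives. Since $I=\operatorname{Im} d_{(0)}$ is the relevant subspace and cohomology classes are cosets $A+I$ restricted to $\operatorname{Ker} d_{(0)}$, the plan is to verify two things: first, that $\operatorname{Ker} d_{(0)}$ is closed under both the product and the $\lambda$-bracket (so the operations make sense on cocycles), and second, that $I$ is an ideal for the product and an ``ideal'' for the $\lambda$-bracket (so the operations descend to the quotient).

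The essential mechanism is that $d_{(0)}$ is an odd derivation of both operations, a fact already built into the setup: from Subsection \ref{Subsec:FC}, an odd differential on a PVA satisfies the graded Leibniz rules
\[ d(AB)=d(A)B+(-1)^{p(A)}A\,d(B), \qquad d\{A_\lambda B\}=\{d(A)_\lambda B\}+(-1)^{p(A)}\{A_\lambda d(B)\}. \]
First I would record that $d_{(0)}=\{d\,_\lambda\,\cdot\,\}|_{\lambda=0}$ is exactly such an odd derivation of $S(R)$: it is odd because $d$ is odd by part (1) of the preceding Lemma, and it satisfies the two Leibniz rules because the PVA axioms give the Leibniz rule for $\{d_\lambda\,\cdot\,\}$ for the product, and the Jacobi identity of the Lie conformal algebra gives the derivation property for the $\lambda$-bracket.

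Given these derivation identities, closure of cocycles and the ideal property are immediate. If $A,B\in\operatorname{Ker} d_{(0)}$ then $d_{(0)}(AB)=d_{(0)}(A)B+(-1)^{p(A)}A\,d_{(0)}(B)=0$, so $AB$ is again a cocycle; the same computation with the $\lambda$-bracket shows $\{A_\lambda B\}\in\operatorname{Ker} d_{(0)}[\lambda]$. For well-definedness modulo $I$, suppose $A\in\operatorname{Ker} d_{(0)}$ and $C=d_{(0)}(E)\in I$; then
\[ AC=A\,d_{(0)}(E)=(-1)^{p(A)}\big(d_{(0)}(AE)-d_{(0)}(A)E\big)=(-1)^{p(A)}d_{(0)}(AE)\in I, \]
using $d_{(0)}(A)=0$, and likewise $CA\in I$. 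The analogous identity $\{A_\lambda C\}=\{A_\lambda d_{(0)}(E)\}=(-1)^{p(A)}d_{(0)}\{A_\lambda E\}\in I[\lambda]$ shows the bracket of a cocycle with a coboundary lands in $I$. Combining these, changing representatives $A\mapsto A+d_{(0)}(E)$ and $B\mapsto B+d_{(0)}(E')$ alters $AB$ and $\{A_\lambda B\}$ only by elements of $I$, so the induced operations on $H(S(R),d_{(0)})$ are well-defined.

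I do not expect a serious obstacle here, since the argument is the standard one showing that the product and bracket on the cohomology of a differential graded PVA descend to the quotient; everything reduces to the two graded Leibniz rules for $d_{(0)}$, which are guaranteed by the PVA structure together with $\{d_\lambda d\}=0$ from part (3) of the Lemma. The only point requiring a little care is bookkeeping of signs from the parities $p(A)$, and checking that the $\lambda$-bracket identities hold coefficient-by-coefficient in $\lambda$ (equivalently for each $A_{(n)}B$), but these are routine consequences of the sesquilinearity and Leibniz axioms rather than genuine difficulties.
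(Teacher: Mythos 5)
Your proposal is correct and follows essentially the same route as the paper: both arguments rest on the fact that $d_{(0)}$ is an odd derivation of the product and the $\lambda$-bracket (via the Leibniz rule and the Jacobi identity together with $\{d_\lambda d\}=0$), then deduce that $\operatorname{Ker} d_{(0)}$ is closed under both operations and that changing representatives by elements of $\operatorname{Im} d_{(0)}$ changes the results only by elements of $\operatorname{Im} d_{(0)}$. The paper simply writes out the explicit coboundary terms, e.g. $(A+d_{(0)}X)(B+d_{(0)}Y)=AB+d_{(0)}\bigl((-1)^{p(A)}AY+XB+Xd_{(0)}Y\bigr)$, which is exactly what your ideal-property argument unpacks to.
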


\begin{proof}
Let $d_{(0)}$ be the differential of the classical BRST complex associated to $\g$ and $f$. By the Leibniz rule and the Jacobi identity, we have 
\[ d_{(0)} (AB) =0, \quad d_{(0)}(\{A_\lambda B\}) =0 \text{ if } A,B\in \text{ker} d_{(0)}.\]
Also, if $A,B\in \text{ker} d_{(0)}$ and $X,Y \in S(R)$ then
\[ ((A+d_{(0)}X)(B+d_{(0)}Y))= AB+ d_{(0)}(s(A)AY+XB+X d_{(0)}Y)\]
and
\[ \{ A+d_{(0)}X\, _\lambda\, B+d_{(0)}Y\}= \{A_\lambda B\}+ d_{(0)}(\{s(A)A_\lambda Y\}+\{X_\lambda B\}+\{X_\lambda d_{(0)}Y\}).\]
Hence $\WW_1(\g,f,k)$ is a PVA. 
\end{proof}

\subsection{Second definition via Hamiltonian reduction}\ \\

Let $S(\CC[\partial]\otimes \g)$ be the supersymmetric algebra generated by the vector superspace $\CC[\partial]\otimes \g$. Take the associative superalgebra ideal 
\[I=MS(\CC[\partial]\otimes \g)\text{ where }  M=\{ m+\chi(m)|m\in \m, \, \chi(m)= (f|m)\}.\] 
Let
\[ \mathcal{V}(\g,f,k)= S(\CC[\partial]\otimes \g)/I\]
be the supersymmetric algebra.
Define the $\ad_\lambda \n$-action on $\mathcal{V}(\g,f,k)$ by 
\[ \ad_\lambda n\, (A) = \{ n_\lambda A\} +I[\lambda] \qquad \text{ for } n\in \n\]
where the bracket $\{ n_\lambda A\}$ is induced from the bracket of $S(Cur_k(\g)).$
Then $\ad_\lambda \n (I) \subset I[\lambda]$ and the subspace
\begin{equation} \label{Eqn:classicalW}
\WW_2(\g,f,k) = \mathcal{V}(\g,f,k)^{\ad_\lambda \n}=\{ A \in \mathcal{V}(\g,f,k) | \, \ad_\lambda n (A)=0 \text{ for any } n\in \n \} 
\end{equation}
 of $\mathcal{V}(\g,f,k)$ is well-defined.  Moreover, it is a Poisson vertex algebra endowed with the $\lambda$-bracket induced from that of $S(Cur_k(\g))$. (See Proposition \ref{Prop:Welldefined_affine}.)
 
\begin{defn} \label{Def:Walg_2}
 The {\it classical affine W-algebra} $\WW_2(\g,f,k)$  is the PVA defined in (\ref{Eqn:classicalW}) endowed with the  product  and the $\lambda$-bracket 
 \[ (A+I) \cdot (B+I)= (AB)+I , \quad \{A+I \, _\lambda B+I\} = \{A_\lambda B\}+I[\lambda] , \quad A+I,B+I\in \WW(\g,f,k).\]
\end{defn}

\begin{prop} \label{Prop:Welldefined_affine}
The product and the $\lambda$-bracket in Definition \ref{Def:Walg_2} are well-defined.
\end{prop}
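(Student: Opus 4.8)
The plan is to reduce both assertions to a single closure property of the invariant space $\WW_2(\g,f,k)=\mathcal{V}(\g,f,k)^{\ad_\lambda\n}$: that for every $A\in\WW_2(\g,f,k)$ one has $\{A_\lambda I\}\subset I[\lambda]$ and $\{I_\lambda A\}\subset I[\lambda]$. Once this is in hand, everything else follows from the Leibniz rule, skewsymmetry and the Jacobi identity for the Poisson $\lambda$-bracket on $S(Cur_k(\g))$, together with the inclusion $\ad_\lambda\n(I)\subset I[\lambda]$ already recorded before the proposition (which makes $\ad_\lambda n$ act on $\mathcal V(\g,f,k)$ in the first place).

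For the product I would first observe that, since $I$ is an associative ideal, $(A+I)(B+I)=AB+I$ is already unambiguous in $\mathcal V(\g,f,k)$; the content is closure of $\WW_2(\g,f,k)$. For $A,B\in\WW_2(\g,f,k)$ and $n\in\n$ the Leibniz rule gives $\{n_\lambda AB\}=(-1)^{p(n)p(A)}A\{n_\lambda B\}+\{n_\lambda A\}B$, and since $\{n_\lambda A\},\{n_\lambda B\}\in I[\lambda]$ and $I$ is an ideal, both summands lie in $I[\lambda]$; hence $\ad_\lambda n(AB)=0$ and $AB+I\in\WW_2(\g,f,k)$.

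For the $\lambda$-bracket the decisive point is the closure property. I would prove $\{A_\lambda I\}\subset I[\lambda]$ by writing a generator of $I$ as $ms$ with $m=m_0+\chi(m_0)\in M$, $m_0\in\m$, $s\in S(\CC[\partial]\g)$, and applying $\{A_\lambda ms\}=(-1)^{p(A)p(m)}m\{A_\lambda s\}+\{A_\lambda m_0\}s$; the first term is in $I[\lambda]$ because $m\in I$, and for the second it suffices to show $\{A_\lambda m_0\}\in I[\lambda]$ for $m_0\in\m\subset\n$. Here I would invoke skewsymmetry, $\{A_\lambda m_0\}=-(-1)^{p(A)p(m_0)}\{(m_0)_{-\partial-\lambda}A\}$, together with the invariance $\{(m_0)_\mu A\}\in I[\mu]$. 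The inclusion $\{I_\lambda A\}\subset I[\lambda]$ follows dually by skewsymmetry. These two inclusions make $\{A_\lambda B\}+I[\lambda]$ independent of the lifts of $A+I,B+I$, giving well-definedness; closure then follows from the Jacobi identity, since for $n\in\n$ one has $\{n_\mu\{A_\lambda B\}\}=\{\{n_\mu A\}_{\mu+\lambda}B\}+(-1)^{p(n)p(A)}\{A_\lambda\{n_\mu B\}\}$, and because $\{n_\mu A\},\{n_\mu B\}\in I[\mu]$ the closure inclusions place both terms in $I$, so $\{A_\lambda B\}+I[\lambda]\in\WW_2(\g,f,k)[\lambda]$.

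The main obstacle is the single step $\{A_\lambda m_0\}\in I[\lambda]$, specifically the interaction of $\partial$ with $I$ once skewsymmetry replaces the bracket variable by $-\partial-\lambda$: the substitution applies powers of $\partial$ to the $I$-valued coefficients of $\{(m_0)_\mu A\}$, so one must know these derivatives remain in $I$, i.e.\ that $\partial$ preserves $I$ (equivalently, that the differential descends to the reduction). The accompanying grading input is routine and supports this: for $n\in\n$ and $m_0\in\m$ the $\ad x$-grading forces $[n,m_0]\in\m$ and $(n|m_0)=0$, so the structure-constant terms never leave $\m$, and with $(f|[n,m_0])=0$ they in fact land in $M\subset I$. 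Organizing the bookkeeping of the $-\partial-\lambda$ substitution against the $\partial$-stability of $I$ is where the genuine work lies; the surrounding Leibniz and Jacobi manipulations are mechanical.
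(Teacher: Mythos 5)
Your proof is correct and follows essentially the same route as the paper's: the Leibniz rule gives closure under the product, the inclusions $\{A_\lambda I\},\{I_\lambda A\}\subset I[\lambda]$ give well-definedness of the bracket, and the Jacobi identity gives closure of $\WW_2(\g,f,k)$ under the $\lambda$-bracket. The only difference is one of detail: where the paper simply asserts $\{A_\lambda I\}=\{I_\lambda A\}=0+I[\lambda]$ ``by the definition of a W-algebra,'' you actually prove it---via the Leibniz rule on generators $ms$ of $I$, skewsymmetry, the $\ad_\lambda\m$-invariance of the representative $A$, and the $\partial$-stability of $I$ (correctly read as the differential ideal generated by $\{m+(f|m)\,|\,m\in\m\}$, as the paper's later examples confirm).
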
 
\begin{proof}
To see the well-definedness of the PVA $\WW_2(\g,f,k)$, we have to check that  the algebra is closed under the product and the $\lambda$-bracket. We can check this as follows:\\
(1) By the Leibniz rule, the element $AB+I$ is in $\WW_2�(\g,f,k)$ if $A+I$ and  $B+I$ are in $\WW_2(\g,f,k).$\\
(2) By the definition of a $\WW$-algebra, we have $\{A_\lambda I\} =\{I_\lambda A\}=0+ I[\lambda]$. Moreover, by the Jacobi identity, the element  $\{A_\lambda B\}+I[\lambda]$ is in $\WW_2(\g,f,k)[\lambda]$ if $A+I$ and $B+I$ are in $\WW_2(\g,f,k).$
\end{proof}

\subsection{Equivalence of the definitions of an affine classical W-algebra} \ \\
 
Recall the LCA  $R=Cur_k(\g)\oplus R_{ch}\oplus R_{ne}$. Let us consider the building block 
\[ J_a = a+ \sum_{\alpha \in S} \phi^\alpha \phi_{[u_\alpha, a]}\in S(R), \quad a\in \g.\]
Then 
\begin{equation} \label{Eqn:3.14_0907}
\begin{aligned}
& d_{(0)}(J_a)  = \sum_{\alpha\in S} s(u_\alpha)\phi^\alpha [u_\alpha, a] +  k\partial \sum_{\alpha \in S} s(u_\alpha)(u_\alpha| a) \phi^\alpha \\
& - \sum_{\alpha \in S} s(u_\alpha) \phi^\alpha \left( \pi_+ [u_\alpha, a] + ([u_\alpha, a] | f) + s([u_\alpha, a]) \Phi_{[u_\alpha, a]} \right) \\
& -\sum_{\alpha,\beta\in S} s(u_\alpha) \phi^\alpha \phi^\beta \phi_{[u_\beta, \pi_+[u_\alpha, a]]} +\frac{1}{2} \sum_{\alpha, \beta \in S} s(u_\beta) \phi^\beta \phi^{[u_\beta, u^\alpha]}\phi_{[u_\alpha, a]}.
\end{aligned}
\end{equation}
Here we recall that $s(a)=(-1)^{p(a)}$ for a homogeneous element $a.$

Since $\sum_{\gamma\in S} (u_\gamma |[u_\beta, u^\alpha]) u^\gamma  = \pi_-[u_\beta, u^\alpha]$, we have 
\begin{equation}
\sum_{\alpha, \beta\in S} s(u_\beta)\phi^\beta \phi^{[u_\beta, u^\alpha]} \phi_{[u_\alpha, a]}= \sum_{\beta, \gamma \in S} s(u_\beta) \phi^\beta \phi^\gamma \phi_{[[u_\gamma, u_\beta], a]}
\end{equation}
and
\begin{equation} \label{Eqn:4.12_150213}
[[u_\gamma, u_\beta], a]= [u_\gamma, [u_\beta, a]]- (-1)^{p(\beta)p(\gamma)}[u_\beta, [u_\gamma, a]]. 
\end{equation}
Hence
\begin{equation} \label{Eqn:3.17_0907}
\begin{aligned}
 \sum_{\alpha, \beta \in S} s(u_\beta) \phi^\beta \phi^{[u_\beta, u^\alpha]}\phi_{[u_\alpha, a]}
&= \sum_{\beta, \gamma \in S} s(u_\beta) \phi^\beta \phi^\gamma \phi_{[[u_\gamma, u_\beta], a]} \\
 &= 2 \sum_{\beta, \gamma \in S} s(u_\beta) \phi^\beta \phi^\gamma \phi_{[u_\gamma, [u_\beta, a]]}.
\end{aligned}
\end{equation}
By (\ref{Eqn:3.14_0907}) and (\ref{Eqn:3.17_0907}), we have
\begin{equation}
d_{(0)}(J_a)= \sum_{\alpha\in S}s(u_\alpha) \phi^\alpha K_{[u_\alpha, a]} + \sum_{\alpha \in S} k s(u_\alpha)  (u_\alpha|a) \partial\phi^\alpha
\end{equation}
where, for the projection map $\pi_{\leq}: \g \to \bigoplus_{i\leq 0} \g(i),$
\begin{equation} \label{Eqn:K}
K_a= J_{\pi_\leq a}  - s(a) \Phi_a -(a|f), \qquad a \in\g.
\end{equation}

Also, we have

\begin{equation}
\begin{aligned}
& \{J_a \, _\lambda  J_b\}  = \sum_{\alpha, \beta \in S} \{ a+ \phi^\alpha \phi_{[u_\alpha, a]} \, _\lambda b+ \phi^\beta \phi_{[u_\beta, b]} \} \\
&=  \{a _\lambda b\} + \sum_{\alpha, \beta \in S } \left[ \phi^\alpha \{ \phi_{[u_\alpha, a]} \, _\lambda  \phi^\beta \}\phi_{[u_\beta, b]}  - (-1)^{p(a)p(b)}  \phi^\beta \{ \phi_{[u_\beta, b]} \, _\lambda \phi^\alpha \} \phi_{[u_\alpha, a]}\right] \\
& = \{ a_\lambda b\} +\sum_{\alpha \in S} \phi^\alpha \phi_{[\pi_+ [u_\alpha, a],b]}-(-1)^{p(a)p(b)} \sum_{\beta\in S}\phi^\beta\phi_{[\pi_+[u_\beta, b],a]}\\
& = J_{[a,b]} + \lambda k (a|b) - \sum_{\alpha\in S} \left[\phi^\alpha \phi_{[\pi_\leq [u_\alpha, a],b]}-(-1)^{p(a)p(b)} \sum_{\alpha\in S}\phi^\alpha \phi_{[\pi_\leq[u_\beta, b],a]}\right].
\end{aligned}
\end{equation}
Hence if $a$ and $b$ are both in $\bigoplus_{i\geq 0} \g(i)$ or both in $\bigoplus_{i\leq 0} \g(i)$, then $\{J_a \, _\lambda J_b\} = J_{[a,b]}+k\lambda(a|b).$ Also, we can easily check that
\begin{equation} \label{Eqn:K_brac}
\{K_a\, _\lambda K_b\} = \left\{
\begin{array}{ll}
 K_{[a,b]} + \lambda k(a|b)  \qquad & \text{ if } a, b\in \bigoplus_{i\leq 0}\g(i) , \\
 - K_{[a,b]} =([a,b]|f)  \qquad & \text{ if } a,b \in \g\left(\frac{1}{2}\right),\\
 0 & \text{ otherwise}.
 \end{array} \right.
\end{equation}

Let us denote
\begin{equation}
r_+= \phi_\n \oplus d_{(0)} (\phi_\n), \qquad r_-= J_{\g_\leq} \oplus \phi^{\n_-} \oplus \Phi_{\g(1/2)}
\end{equation} 
and 
\begin{equation}
R_+=\CC[\partial]\otimes r_+, \qquad R_-= \CC[\partial]\otimes r_-. 
\end{equation}
Then $d_{(0)}|_{S(R_+)} \subset S(R_+)$ and $d_{(0)}|_{S(R_-)} \subset S(R_-).$

\begin{prop} \label{Prop:3.8_0914}
Let $d=d_{(0)}|_{S(R_-)}$. Then we have
\begin{equation}
H(S(R),d_{(0)})= H(S(R_-), d).
\end{equation}
Hence $\WW_1(\g,f,k)= H(S(R_-), d).$
\end{prop}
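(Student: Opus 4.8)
The goal is to reduce the cohomology of the full classical BRST complex $S(R)$ to that of the subcomplex $S(R_-)$ by eliminating the ``contractible'' part associated to $R_+$. The plan is to set up a tensor-product (K\"unneth) decomposition of $S(R)$ compatible with the differential, show that the $R_+$-factor is acyclic in positive degree, and then conclude via the K\"unneth lemma and the multiplicativity of cohomology on supersymmetric algebras.

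First I would verify that, as $\CC[\partial]$-modules (equivalently as differential superalgebras), $S(R)\cong S(R_+)\otimes S(R_-)$, using that $r=r_+\oplus r_-$ is a direct-sum decomposition of the generating superspace $R$ after the change of variables replacing the raw generators $a$ by the building blocks $J_a$ (and $K_a$) introduced above. The key structural input is precisely $d_{(0)}|_{S(R_\pm)}\subset S(R_\pm)$, already noted in the excerpt, so that $d_{(0)}$ respects the splitting; this lets me write $d_{(0)} = d_+\otimes 1 + 1\otimes d$ where $d_+=d_{(0)}|_{S(R_+)}$ and $d=d_{(0)}|_{S(R_-)}$. Here the definition $r_+=\phi_\n\oplus d_{(0)}(\phi_\n)$ is chosen exactly so that $d_+$ pairs each $\phi_a$ with its image $d_{(0)}(\phi_a)$.

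Next I would compute $H(S(R_+),d_+)$. On the generating space $r_+$ the differential sends $\phi_a\mapsto d_{(0)}(\phi_a)$ and sends $d_{(0)}(\phi_a)\mapsto 0$, so $(r_+,d_+)$ is a direct sum of two-step acyclic complexes $\phi_a\xrightarrow{\sim} d_{(0)}(\phi_a)$; hence $H(r_+,d_+)=0$. By the K\"unneth lemma in the form stated above, $H(S(r_+),d_+)\cong S(H(r_+,d_+))=S(0)=\CC$, concentrated in degree $0$, and the same holds after tensoring with $\CC[\partial]$ to pass from $r_+$ to $R_+$. Applying the first part of the K\"unneth lemma to $d_{(0)}=d_+\otimes 1+1\otimes d$ then yields
\[
H(S(R),d_{(0)})\cong H(S(R_+),d_+)\otimes H(S(R_-),d)\cong \CC\otimes H(S(R_-),d)=H(S(R_-),d),
\]
which is the claimed identity, and the definition $\WW_1(\g,f,k)=H(S(R),d_{(0)})$ gives the final equality.

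The main obstacle I anticipate is justifying the tensor-factorization rigorously: the elements $J_a$ mix the current generators with charged fermions, so the decomposition $R=R_+\oplus R_-$ is not the naive one on the original generators, and I must check both that $\{J_a\mid a\}$ together with the fermionic generators really form a free generating set of $S(R)$ (so the PBW-type basis factors as a tensor product) and that $d_+$ and $d$ genuinely act within their respective factors without cross terms. Once the splitting $d_{(0)}=d_+\otimes 1+1\otimes d$ is established, the acyclicity of the $R_+$-part is the only nontrivial computation, and it follows cleanly from the paired structure of $r_+$. I would also need the complex to be locally finite and the grading to be bounded appropriately so that the K\"unneth lemma applies termwise, but these are guaranteed by the conformal-weight grading coming from the Hamiltonian operator $H=L_{(1)}$.
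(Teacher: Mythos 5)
Your proposal is correct and follows essentially the same route as the paper: decompose $S(R)=S(R_+)\otimes S(R_-)$ with $d_{(0)}=d_+\otimes 1+1\otimes d$, apply the K\"unneth lemma to factor the cohomology, and use the second part of the K\"unneth lemma to identify $H(S(R_+),d_+)=S(H(R_+,d_+))=\CC$ via the acyclicity of the paired complex $\phi_a\mapsto d_{(0)}(\phi_a)\mapsto 0$. Your added remarks on the acyclicity of $(r_+,d_+)$ and on local finiteness make explicit what the paper leaves implicit, but they do not change the argument.
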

\begin{proof}
Let us denote $d_+=d_{(0)}|_{S(R_+)}$. Then $S(R)= S(R_+) \otimes S(R_-)$ and $d_{(0)} =d_+ \otimes 1 + 1\otimes d.$ Hence, by  K\"unneth lemma,  $H(S(R), d_{(0)})= H(S(R_+), d_+) \otimes H(S(R_-), d).$ Also, by K\"unneth lemma, we have $H(S(R_+), d_+) = S(H(R_+, d_+))= \CC$. Hence $H(S(R), d_{(0)})=H(S(R_-), d).$ 
\end{proof}

Let us define the $\frac{1}{2} \ZZ$-bigrading  on $S(R_-)$:
\begin{equation} \label{Eqn:bigrading}
\text{gr}(J_\alpha)= (j_\alpha-1/2, -j_\alpha+1/2), \quad \text{gr}(\phi^\beta)=(-j_\beta+1/2, j_\beta+1/2)  
\end{equation}
  and $\text{gr}(\Phi_\gamma)=\text{gr}(\partial)=(0,0),$ where $J_\alpha=J_{u_\alpha}$, $\phi^\beta= \phi^{u^\beta}$, $\Phi_\gamma= \Phi_{u_\gamma}$ and $u_\alpha \in \g(j_\alpha)$, $u_\beta \in \g(j_\beta)$, $u_\gamma \in \g(j_\gamma).$ For the first component of the bigrading (\ref{Eqn:bigrading}), we call by $p$-grading and for the second component, we call by $q$-grading. The charge on $S(R_-)$ is defined by the sum of $p$-grading and $q$-grading. Hence
\begin{equation} 
\text{charge}(J_\alpha)=0, \quad \text{charge}(\phi^\beta)=1, \quad  \text{charge}(\Phi_\gamma)=0.
\end{equation}
Consider the decreasing filtration with respect to the $p$-grading 
\begin{equation} \label{Eqn:filt}
 \cdots \subset F_{p+\frac{1}{2}} \subset F_{p} \subset F_{p-\frac{1}{2}} \subset \cdots .
 \end{equation}

Using the facts in Section \ref{Subsec:FC}, we obtain the graded differential $d^{gr}:S(R_-)\to S(R_-)$ such that  
\[ d^{gr}(J_a)=-\sum_{\alpha\in S}s(a)\phi^\alpha(u_\alpha|[a,f]), \quad d^{gr}(\Phi_{[a]})=\sum_{\alpha\in S(1/2)} \phi^\alpha(u_\alpha|[a,f]), \quad d^{gr}(\phi^a)=0.\]
Let us denote $ J_{\g_f}:=\{ \, J_a\, |\, [a,f]=0\,\}.$ Then $\ker(d^{gr}|_{R_-})=\CC[\partial]\otimes J_{\g_f} \oplus \CC[\partial]\otimes \phi^{\n_-}$ and $\text{im}(d^{gr}|_{R_-})=\CC[\partial]\otimes \phi^{\n_-}$. Hence we have 
\[H(R_-, d^{gr})=\CC[\partial]\otimes J_{\g_f}.\]

\begin{lem} \label{Lem:4.8_150307}We have the following properties: 
\begin{enumerate}
\item The complex $(S(R_-), d)$ is a direct sum of locally finite complexes.
\item $H^{pq}(S(R_-), d^{\text{gr}})= 0$ if $p+q\neq0$.
\end{enumerate}
\end{lem}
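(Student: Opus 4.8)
The plan is to treat the two assertions separately, using the conformal weight grading for (1) and the K\"unneth lemma for (2).

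For part (1), I would first introduce the Hamiltonian operator $H=L_{(1)}$ of \eqref{Eqn:Ham}, restricted to $S(R_-)$, and record the conformal weights of the generators of $R_-$. From the computation preceding \eqref{Eqn:Ham} we have $\Delta_{J_\alpha}=1-j_\alpha$ with $u_\alpha\in\g_\leq$, so $j_\alpha\leq 0$ and $\Delta_{J_\alpha}\geq 1$; $\Delta_{\phi^\beta}=j_\beta$ with $u^\beta\in\n_-$, so $j_\beta>0$; and $\Delta_{\Phi_\gamma}=\tfrac12$. Since $\partial$ raises conformal weight by $1$, every generator of $S(R_-)$ has conformal weight bounded below by $\tfrac12>0$. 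As the conformal weight of a monomial is the sum of the weights of its factors, each weight space $S(R_-)[\Delta]$ is spanned by monomials with a bounded number of factors and bounded powers of $\partial$, hence is finite-dimensional.

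Next I would verify that $d$ preserves the conformal weight decomposition. The element $d$ of \eqref{differential} is itself of conformal weight $1$: each of its four summands $\phi^\alpha u_\alpha$, $\phi^\alpha\Phi_\alpha$, $\phi^f$, and $\phi^\alpha\phi^\beta\phi_{[u_\beta,u_\alpha]}$ has total weight $1$, as the weights above show. Since $\Delta_{a_{(0)}b}=\Delta_a+\Delta_b-1$, the operator $d_{(0)}$ preserves conformal weight, and so does its restriction $d$ to $S(R_-)$. Therefore $(S(R_-),d)=\bigoplus_\Delta (S(R_-)[\Delta],d)$ is a direct sum of complexes, each of which is finite-dimensional and thus has bounded $p$-grading on every total-degree piece; each summand is consequently locally finite, which proves (1).

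For part (2), I would apply the K\"unneth lemma (the supersymmetric-algebra version) to the derivation $d^{\text{gr}}$ on $S(R_-)=S(\CC[\partial]\otimes r_-)$, obtaining $H(S(R_-),d^{\text{gr}})\simeq S\bigl(H(R_-,d^{\text{gr}})\bigr)$. The cohomology $H(R_-,d^{\text{gr}})=\CC[\partial]\otimes J_{\g_f}$ was computed just before the lemma. Inspecting the bigrading \eqref{Eqn:bigrading}, every $J_a$ with $[a,f]=0$ has charge $(j_\alpha-\tfrac12)+(-j_\alpha+\tfrac12)=0$, and $\partial$ has charge $0$, so $\CC[\partial]\otimes J_{\g_f}$ lies entirely in charge $0$. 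As the charge is additive under products, $S(\CC[\partial]\otimes J_{\g_f})$ is concentrated in charge $0$, i.e. $H^{pq}(S(R_-),d^{\text{gr}})=0$ whenever $p+q\neq 0$, giving (2). The main obstacle is the finite-dimensionality step in part (1): one must be certain that \emph{all} generators of $R_-$ have strictly positive conformal weight, as this is exactly what forces each weight space to be finite-dimensional and hence each summand to be locally finite. A secondary delicate point is confirming that $d^{\text{gr}}$ is a genuine derivation of $S(R_-)$, so that the K\"unneth lemma applies and the total cohomology, not just its generators, lies in charge $0$; both follow once additivity of the charge and of the conformal weight is in place.
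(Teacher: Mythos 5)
Your proposal is correct and follows essentially the same route as the paper's own proof: part (1) via the Hamiltonian operator $H=L_{(1)}$ restricted to $S(R_-)$ and the finite-dimensionality of its conformal-weight eigenspaces, and part (2) via the K\"unneth lemma giving $H(S(R_-),d^{\mathrm{gr}})\simeq S(\CC[\partial]\otimes J_{\g_f})$, which is concentrated in charge $0$. Your additional verifications (that all generators of $R_-$ have conformal weight at least $\tfrac12$, and that $d$ has conformal weight $1$ so $d_{(0)}$ preserves the weight decomposition) simply make explicit the facts the paper asserts without proof.
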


\begin{proof}
(1) Recall that we have the Hamiltonian operator $H$ on $S(R)$ defined in (\ref{Eqn:Ham}). Since 
\[ \Delta_{J_\alpha}= 1-j_\alpha \text{ for } J_\alpha = J_{u_\alpha},\]
the operator $H|_{S(R_-)}$ is a Hamiltonian operator on $S(R_-).$
Since $d$ preserves the conformal weight and each eigenspace
\[ S(R_-)(i)= \{ a \in S(R_-) | \Delta_a=i\}\subset S(R_-)\]
is finite dimensional, we conclude that the complex $(S(R_-),d)=\bigoplus_{i\in \frac{\ZZ}{2}}(S(R_-)(i),d)$ is a direct sum of locally finite complexes. \\
(2) By K\"unneth lemma, we have
\[ H(S(R_-), d^{gr})= S(H(R_-,d^{gr}))= S(\CC[\partial]\otimes J_{\g_f}).\]
Since any element in $ S(\CC[\partial]\otimes J_{\g_f})$ has charge $0$, we proved the lemma. 
\end{proof}

\begin{prop} \label{Prop:main}
\begin{enumerate}
\item
$\text{gr}^{pq} H(S(R_-), d) \simeq H^{pq}(S(R_-), d^{\text{gr}}).$
\item
$H(S(R_-), d)=H^0(S(R_-),d) \simeq S(\CC[\partial]\otimes J_{\g_f})$ as associative superalgebras.
\end{enumerate}
\end{prop}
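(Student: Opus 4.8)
The plan is to obtain both statements from the filtered-complex machinery of Section \ref{Subsec:FC} together with Lemma \ref{Lem:4.8_150307}, reserving the real work for upgrading a graded linear isomorphism to an algebra isomorphism. For part (1), I would simply verify the hypotheses of Proposition \ref{Prop:com_gr} for $(S(R_-), d)$ with the decreasing $p$-filtration (\ref{Eqn:filt}). Local finiteness is exactly Lemma \ref{Lem:4.8_150307}(1), which exhibits $(S(R_-), d)$ as a direct sum of locally finite complexes indexed by conformal weight. Goodness is Lemma \ref{Lem:4.8_150307}(2): since the $p$-filtration comes from a genuine bigrading, one has $\gr S(R_-)=S(R_-)$ as bigraded spaces, so $H^{pq}(\gr S(R_-), d^{\gr})=H^{pq}(S(R_-), d^{\gr})$ vanishes whenever $p+q\neq 0$. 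Proposition \ref{Prop:com_gr} then gives $\gr^{pq}H(S(R_-), d)\simeq H^{pq}(S(R_-), d^{\gr})$, which is part (1).

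For part (2), I would first combine part (1) with Lemma \ref{Lem:4.8_150307}(2) to conclude that $\gr^{pq}H(S(R_-), d)=0$ whenever $p+q\neq 0$, so that $H^n(S(R_-), d)=0$ for $n\neq 0$ and the cohomology collapses onto $H^0$. Passing to the associated graded of $H^0$ and summing part (1) along the line $p+q=0$ produces a linear isomorphism
\[ \gr H^0(S(R_-), d) \simeq \bigoplus_p H^{p,-p}(S(R_-), d^{\gr}) = H(S(R_-), d^{\gr}), \]
where the last equality uses that the $d^{\gr}$-cohomology is concentrated in charge $0$. By the K\"unneth lemma and the computation $H(R_-, d^{\gr})=\CC[\partial]\otimes J_{\g_f}$ obtained just above the statement, the right-hand side equals $S(\CC[\partial]\otimes J_{\g_f})$.

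It then remains to promote this linear isomorphism of associated graded spaces to an isomorphism of associative superalgebras. Here I would use that the $p$-filtration is multiplicative, so that $\gr H^0$ is a graded superalgebra and the displayed isomorphism is one of superalgebras. For each generator $J_a$ with $[a,f]=0$, the formula for $d_{(0)}(J_a)$ computed above shows $J_a$ is a $d^{\gr}$-cocycle but typically not a $d$-cocycle; goodness (equivalently, degeneration of the associated spectral sequence at the first page) guarantees a correction $w(a)=J_a+(\text{terms of strictly higher } p\text{-filtration})$ with $d(w(a))=0$. Because $S(\CC[\partial]\otimes J_{\g_f})$ is freely generated by $\CC[\partial]\otimes J_{\g_f}$, the assignment $J_a\mapsto [w(a)]$ extends uniquely to a superalgebra homomorphism $S(\CC[\partial]\otimes J_{\g_f})\to H^0(S(R_-), d)$ whose induced map on associated graded is the isomorphism just constructed. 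A filtered homomorphism that is an isomorphism on $\gr$ is itself an isomorphism, yielding part (2).

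The main obstacle is precisely this final step. Part (1) and the collapse of the cohomology are formal consequences of the cited results, but moving from a graded vector-space isomorphism to an algebra isomorphism requires genuinely lifting each free generator $J_a$ to an honest $d$-cocycle $w(a)$ with the prescribed leading term. That lifting is where goodness is used essentially, and the bookkeeping needed to confirm that the resulting map is filtered with the correct symbol—hence an isomorphism on $\gr$—is the crux of the argument.
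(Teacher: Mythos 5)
Your proof is correct and takes essentially the same route as the paper: part (1) is Proposition \ref{Prop:com_gr} applied to the conformal-weight eigenspaces (locally finite by Lemma \ref{Lem:4.8_150307}(1), good by Lemma \ref{Lem:4.8_150307}(2), since the filtration comes from the bigrading), summed over eigenspaces, and part (2) follows from the charge-zero concentration plus the K\"unneth identification $H(S(R_-),d^{\gr})\simeq S(\CC[\partial]\otimes J_{\g_f})$. The only difference is one of detail: the paper dispatches part (2) with a bare citation of Lemma \ref{Lem:4.8_150307}(2), leaving implicit the upgrade from a graded linear isomorphism to an isomorphism of associative superalgebras, whereas you spell out the generator-lifting argument --- which is precisely how the paper itself later uses the result in Proposition \ref{Prop:generator_aff}.
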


\begin{proof}
By Lemma \ref{Lem:4.8_150307}, we have $\text{gr}^{pq} H(S(R_-)(i), d)\simeq H^{pq}(S(R_-)(i), d^{\text{gr}}).$ By taking direct sum $\bigoplus_{i\in \frac{\ZZ}{2}}$ to the both sides, we get $\text{gr}^{pq} H(S(R_-), d) \simeq H^{pq}(S(R_-), d^{\text{gr}}).$ Also, by  Lemma \ref{Lem:4.8_150307} (2), we obtain the second assertion.
\end{proof}

\begin{thm} \label{Thm:main}
Consider the associative superalgebra homomorphism 
\begin{equation}
\bar{f}: S(R_-) \to \mathcal{V}(\g,f,k),  
\end{equation}
such that $K_a \mapsto a$ for $a \in \bigoplus_{i\leq 1} \g(i)$ and $\phi^{n_-} \mapsto 0$ for $n_- \in \n_-$.
Then we have
\begin{enumerate}
\item The map 
\begin{equation}
f: \WW_1(\g,f,k)=H(S(R_-), d) \to \WW_2(\g,f,k)
\end{equation}
is a well-defined superalgebra isomorphism.
\item Moreover, $f$ is a PVA isomorphism.
\end{enumerate}
\end{thm}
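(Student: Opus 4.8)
The plan is to establish the isomorphism in two stages: first construct the map $\bar f$ on $S(R_-)$ explicitly, verify it descends to cohomology, and prove bijectivity using the associated-graded picture from Proposition~\ref{Prop:main}; then upgrade the resulting associative-superalgebra isomorphism to a PVA isomorphism by checking compatibility of the $\lambda$-brackets. First I would record that $\bar f$ is well-defined as an associative superalgebra homomorphism: the generators $K_a$ for $a\in\bigoplus_{i\le 1}\g(i)$ together with $\phi^{n_-}$ freely generate $S(R_-)$ (by the PBW Theorem~\ref{PBW:VA} applied to the free generation of $S(R_-)$), so prescribing their images determines a unique algebra map into $\mathcal V(\g,f,k)$. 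The key point is that $\bar f$ should intertwine $d=d_{(0)}|_{S(R_-)}$ with the $\ad_\lambda\n$-action defining $\WW_2$: using formula~\eqref{Eqn:K} relating $K_a$ to $J_{\pi_\le a}$, $\Phi_a$, and the constant $(a|f)$, together with the computed $d_{(0)}(J_a)$, I would show that modulo $I$ the image of $d$ corresponds exactly to the action of $\ad_\lambda\n$. Concretely, the recipe $K_a\mapsto a$ sends $\m$ to its $\chi$-shift so that the defining ideal $I=MS(\CC[\partial]\otimes\g)$ is precisely the image of the relations killed in cohomology, and the condition $\phi^{n_-}\mapsto 0$ is consistent because $\phi^{\n_-}=\im(d^{gr}|_{R_-})$.

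Next I would prove that the induced map $f\colon H(S(R_-),d)\to\WW_2(\g,f,k)$ is a bijection. The cleanest route is to pass to the associated graded with respect to the $p$-filtration~\eqref{Eqn:filt}. By Proposition~\ref{Prop:main}(2) we already know $H(S(R_-),d)\simeq S(\CC[\partial]\otimes J_{\g_f})$ as associative superalgebras, and $J_{\g_f}$ is indexed by $\{a\mid[a,f]=0\}$, i.e.\ by a basis of $\g_f$. On the other side, I would exhibit a corresponding set of free generators of $\WW_2(\g,f,k)$: the standard structure theory of the classical affine Hamiltonian reduction gives that $\WW_2(\g,f,k)$ is freely generated by elements $w_a$ with leading term $a$ for $a$ ranging over a basis of $\g_f=\ker(\ad f)$, with conformal weights matching $\Delta_{J_a}=1-j_a$. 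I would check that $f$ sends the free generators $J_a$ (for $a\in\g_f$) of the left side to elements whose top $p$-degree part equals these $w_a$ modulo lower filtration, so that $\gr f$ is an isomorphism of free supercommutative algebras; since the filtration is bounded below on each finite-dimensional conformal-weight eigenspace (Lemma~\ref{Lem:4.8_150307}(1)), an isomorphism on $\gr$ forces $f$ itself to be an isomorphism.

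For part~(2) I would verify that $f$ respects the $\lambda$-brackets. Both $\WW_1$ and $\WW_2$ inherit their $\lambda$-brackets from $S(Cur_k(\g))$ via the building blocks $J_a$ and $K_a$, so the computation reduces to comparing $\{J_a{}_\lambda J_b\}$ with $\{a_\lambda b\}$ in $\mathcal V(\g,f,k)$. The displayed bracket computation just before~\eqref{Eqn:K_brac} shows $\{J_a{}_\lambda J_b\}=J_{[a,b]}+k\lambda(a|b)$ whenever $a,b$ lie jointly in $\bigoplus_{i\ge0}\g(i)$ or jointly in $\bigoplus_{i\le0}\g(i)$, which matches the PVA structure of $\WW_2$ defined by the current bracket in Definition~\ref{Def:Walg_2}. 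Since $f$ is already an algebra isomorphism and both brackets satisfy the Leibniz rule, it suffices to match brackets on generators; I would use~\eqref{Eqn:K} and~\eqref{Eqn:K_brac} to transfer the generator-level identities to $\WW_2$, invoking that any discrepancy lands in $I[\lambda]$ and hence vanishes in the quotient.

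The main obstacle I expect is the bijectivity argument in part~(1), specifically verifying that $\gr f$ is an isomorphism. This requires controlling how the generators $J_a$ for $a\in\g_f$ project into $\mathcal V(\g,f,k)$ and confirming that their images form a free generating set matching the Hamiltonian-reduction description of $\WW_2$. The delicate bookkeeping involves the interplay between the $p$-grading, the ideal $I$ (which imposes $m\equiv-(f|m)$ for $m\in\m$), and the passage $K_a\mapsto a$; one must ensure no collapse or overcounting occurs when restricting to $\ker(\ad f)$. Establishing the surjectivity—that every $\ad_\lambda\n$-invariant arises from a cocycle—is the crux, and the filtration argument together with Proposition~\ref{Prop:com_gr} is the tool I would lean on to reduce it to the already-computed graded cohomology.
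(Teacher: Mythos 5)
Your strategy for part (1) starts from the right idea (the intertwining of $d$ with the $\ad_\lambda\n$-action is indeed the heart of the matter), but your bijectivity argument is circular in the context of this paper. You invoke ``the standard structure theory of the classical affine Hamiltonian reduction'' to assert that $\WW_2(\g,f,k)$ is freely generated by elements $w_a$ with leading term $a$, for $a$ running over a basis of $\g_f$. For Lie \emph{super}algebras this is not an available input: in the paper that statement is Proposition \ref{Prop:generator_aff}, which is \emph{deduced from} Theorem \ref{Thm:main} together with Proposition \ref{Prop:main}. The paper's proof needs no information about $\WW_2(\g,f,k)$ at all: it verifies, first on generators and then by induction on products, that under the correspondence $A\mapsto K_A$ the ``structure constants'' agree, i.e.\ $\{u_\alpha\, _\lambda A\}=\sum_{i\geq 0}\frac{\lambda^i}{i!}(u_{\alpha(i)}A)$ holds if and only if $d(K_A)=\sum_{i\geq 0}\sum_{\alpha\in S}\frac{\partial^i\psi^\alpha}{i!}K_{u_{\alpha(i)}A}$; hence $d(K_A)=0$ if and only if $A\in\WW_2(\g,f,k)$. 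Since $A\mapsto K_A$ is a bijection from $\mathcal{V}(\g,f,k)$ onto the charge-zero subalgebra of $S(R_-)$, and $H(S(R_-),d)=H^0(S(R_-),d)$ consists precisely of charge-zero cocycles (there are no coboundaries in charge zero, as $S(R_-)$ sits in non-negative charges and $d$ raises charge), this equivalence already \emph{is} the bijection. In particular the surjectivity you single out as ``the crux'' is immediate: given $A\in\WW_2(\g,f,k)$, the element $K_A$ is a cocycle mapping to $A$. Your filtration argument via Proposition \ref{Prop:com_gr} cannot substitute for this, because comparing associated graded objects requires knowing the graded size of $\WW_2(\g,f,k)$ independently, which is exactly the circular input.

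Part (2) contains a more serious error. You claim that, since the map $f$ is an algebra isomorphism and both brackets obey the Leibniz rule, ``it suffices to match brackets on generators,'' any discrepancy lying in $I[\lambda]$. But the brackets do \emph{not} match on generators, and the discrepancy is not in $I[\lambda]$: by (\ref{Eqn:K_brac}), for $n_1,n_2\in\g(1/2)$ one has $\{K_{n_1}\, _\lambda K_{n_2}\}=(f|[n_1,n_2])$, whereas in $\mathcal{V}(\g,f,k)$ one has $\{n_1\, _\lambda n_2\}=[n_1,n_2]+k\lambda(n_1|n_2)\equiv -(f|[n_1,n_2])$ modulo $I[\lambda]$, because $[n_1,n_2]\in\m$ and $(n_1|n_2)=0$. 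The two sides differ by the constant $2(f|[n_1,n_2])$, which is nonzero in general and certainly not in $I[\lambda]$. So $\bar f$ is not a morphism of Poisson structures on the ambient algebras, and bracket preservation can only hold after restriction to invariants. That is exactly what the paper's proof of (2) establishes: writing invariant elements as $\sum_j A_jM_j$ with $A_j\in S(\CC[\partial]\otimes\bigoplus_{i\leq 0}\g(i))$ and $M_j\in S(\CC[\partial]\otimes\g(\tfrac{1}{2}))$, it uses invariance to eliminate the cross terms in $\mathcal{V}(\g,f,k)$, which produces the relative minus sign between (\ref{Eqn:brack_1}) and (\ref{Eqn:brack_2}); that sign exactly compensates the neutral-fermion sign flip $\{K_{M_j}\, _\lambda K_{N_k}\}=-\sum_{i\geq 0}K_{C_i}\lambda^i$ for $\{M_j\, _\lambda N_k\}=\sum_{i\geq 0}C_i\lambda^i$. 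Without this cancellation mechanism, your argument for (2) either halts at the sign mismatch or asserts a false identity.
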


\begin{proof}
(1) Since $H(S(R_-), d)= H^0(S(R_-), d)$, any element in $H(S(R_-),d)$ has a representative in $S(\partial^n K_a)$ for $n\in \ZZ_{\geq 0}$ and $a \in \bigoplus_{i\leq \frac{1}{2}} \g(i).$

Now, let us prove that the map $f$ is a well-defined isomorphism. In order to do that, we observe that 
\begin{equation} \label{Eqn:def1_1}
d(K_a) =\sum_{\alpha \in S} \psi^\alpha K_{[u_\alpha, a]}+\sum_{\alpha \in S} k \partial (u_\alpha|a) \psi^\alpha
\end{equation}
where $\psi^\alpha = s(u_\alpha)\phi^\alpha$ and
\begin{equation} \label{Eqn:def1_2}
\begin{aligned}
& d(\partial^{n_a}K_a \cdot \partial^{n_b}K_b)  \\
& = s(a) \partial^{n_a} K_a \cdot \sum_{\alpha \in S} \partial^{n_b} (\psi^\alpha K_{[u_\alpha, b]}+ k \partial (u_\alpha|b) \psi^\alpha) + \sum_{\alpha \in S}\partial^{n_a} (\psi^\alpha K_{[u_\alpha, a]}+ k \partial (u_\alpha|a) \psi^\alpha) \cdot \partial^{n_b}K_b \\
& = \sum_{\alpha\in S} (-1)^{p(a)p(\alpha)}\left[\sum_{i=0}^{n_b}\left( { n_b \choose i} \partial^i \psi^\alpha \cdot\partial^{n_a}K_a\cdot \partial^{n_b-i}K_{[u_\alpha, b]}\right) + k  (u_\alpha|b) \partial^{n_b+1}\psi^\alpha \cdot \partial^{n_a}K_a \right] \\
& +\sum_{\alpha \in S} \left[ \sum_{j=0}^{n_a} \left( {n_a \choose j} \partial^j \psi^\alpha \cdot \partial^{n_a-j} K_{[u_\alpha, a]}\cdot \partial^{n_b} K_b\right)+ k (u_\alpha|a)\partial^{n_a+1} \psi^\alpha \cdot \partial^{n_b}K_b\right].
\end{aligned}
\end{equation} 
On the other hand, we have
\begin{equation}\label{Eqn:def2_1}
 \{u_\alpha\, _\lambda a\}= [u_\alpha, a]+k \lambda (u_\alpha|a)
\end{equation}
and
\begin{equation}\label{Eqn:def2_2} 
\begin{aligned}
&  \{u_\alpha\, _\lambda\, \partial^{n_a}a\cdot \partial^{n_b}b\} \\
 &= (-1)^{p(a)p(\alpha)} \partial^{n_a} a \cdot (\lambda+\partial)^{n_b}\{u_\alpha \, _\lambda \, b\}+ (\lambda+\partial)^{n_a} \{u_\alpha \, _\lambda \, a\} \cdot \partial^{n_b}b\\
 & = (-1)^{p(a)p(\alpha)}  \sum_{i=0}^{n_b}  \partial^{n_a} a \cdot \left[{n_b \choose i}\lambda^i \partial^{n_b-i} [u_\alpha,  b]  +  k \lambda^{n_b+1}(u_\alpha|b) \right]\\
 & +  \sum_{j=0}^{n_a} \left[ {n_a \choose j}  \lambda^j \partial^{n_a-j} [u_\alpha, a] +  k  \lambda^{n_a+1} (u_\alpha|a) \right] \cdot \partial^{n_b}b.
\end{aligned}
\end{equation}

Let us  denote 
\begin{equation}
 K_{AB}=K_A K_B, \quad K_{\partial A}= \partial K_A, \quad K_{A+B}=K_A+K_B, \quad K_C=C  
\end{equation}
for  $A,B \in S ( \CC[\partial]\otimes \bigoplus_{i\leq 1} \g(i) )$ and $C\in \CC$. Assume that 
\[  \{ u_\alpha\, _\lambda A\} = \sum_{i\geq 0} \frac{\lambda^i}{i!} (u_{\alpha (i)} A), \quad  \{ u_\alpha\, _\lambda B\} = \sum_{i\geq 0} \frac{\lambda^i}{i!} (u_{\alpha (i)} B) \] 
 for some $ u_{\alpha (i)} A$, $u_{\alpha(i)}B \in S ( \CC[\partial]\otimes \g)$
 and
\[ d(K_A)= \sum_{i\geq 0} \sum_{\alpha \in S} \frac{\partial^i  \psi^\alpha}{i!}  K_{u_{\alpha (i)} A}, \quad d(K_B)= \sum_{i\geq 0}  \sum_{\alpha \in S} \frac{\partial^i  \psi^\alpha}{i!}  K_{u_{\alpha (i)} B}.\]
Then 
\begin{equation}
\begin{aligned}
& \{u_\alpha \, _\lambda AB\}  =\sum_{i\geq 0} \left(\frac{(\lambda+\partial)^i}{i!} u_{\alpha(i)}A\right) B +(-1)^{p(\alpha)p(A)} A \sum_{i\geq 0} \left(\frac{(\lambda+\partial)^i}{i!} u_{\alpha(i)}B\right) \\
& =\sum_{i\geq 0}\sum_{j=0}^i \frac{1}{i!} {i \choose j} \lambda^j \left[ (\partial^{i-j} u_{\alpha(i)} A)B 
+ (-1)^{p(\alpha)p(A)} A (\partial^{i-j} u_{\alpha(i)} B)\right]
\end{aligned}
\end{equation}
and 
\begin{equation}
\begin{aligned}
& d(K_A K_B) = (-1)^{p(A)} K_A d(K_B) + d(K_A) K_B \\
&=  \sum_{i\geq 0} \sum_{\alpha \in S} (-1)^{p(A)} K_A \frac{\partial^i}{i!} (\psi^\alpha   K_{u_{\alpha(i)}B}) 
+\sum_{i\geq 0} \sum_{\alpha \in S}  \frac{\partial^i}{i!} (\psi^\alpha   K_{u_{\alpha(i)}A}) K_B\\
& = \sum_{i\geq 0}\sum_{\alpha \in S} \sum_{j=0}^i \frac{1}{i!} {i \choose j}  \partial^j \psi^\alpha  \left[ (-1)^{p(\alpha)p(A)}   K_A  (\partial^{i-j}   K_{u_{\alpha(i)}B}) +     (\partial^{i-j}   K_{u_{\alpha(i)}A}) K_B\right].
\end{aligned}
\end{equation}
Hence 
\[\{ u_\alpha\, _\lambda AB\} = \sum_{i\geq 0} \frac{\lambda^i}{i!} (u_{\alpha (i)} AB) \Longleftrightarrow d(K_{AB})= \sum_{i\geq 0} \sum_{\alpha \in S} \frac{\partial^i  \psi^\alpha }{i!} K_{u_{\alpha (i)} AB}.\]
Since we have (\ref{Eqn:def1_1}), (\ref{Eqn:def1_2}), (\ref{Eqn:def2_1}), and (\ref{Eqn:def2_2}), we conclude that 
\[  \{ u_\alpha\, _\lambda A\} = \sum_{i\geq 0} \frac{\lambda^i}{i!} (u_{\alpha (i)} A)\quad \Longleftrightarrow \quad  d(K_A)= \sum_{i\geq 0} \sum_{\alpha \in S} \frac{\partial^i \psi^\alpha }{i!}  K_{u_{\alpha (i)} A}\]
for any $A \in S(\CC[\partial]\otimes(\bigoplus_{i\leq 1} \g(i))).$
Therefore the followings are equivalent
\begin{enumerate}[(i)]
\item $A\in \WW_2(\g,f,k)$,
\item $u_{\alpha(i)}A=0$ in $S(\CC[\partial]\otimes \g)/I$ for any $\alpha \in S$ and $i\geq 0$, 
\item $K_{u_{\alpha(i)} A}=0$ for any $i\geq 0$ and $\alpha \in S$,
\item $K_A\in \WW_1(\g,f,k).$
\end{enumerate} \ \\
(2) 
Let $\sum_{j\in J} A_j M_j$ and $\sum_{k\in K} B_k N_k$ be elements in $\WW_2(\g,f,k)$, such that $A_j, B_k \in S(\CC[\partial]\otimes (\bigoplus_{i\leq 0} \g(i)))$ and $M_j, N_k \in S(\CC[\partial]\otimes \g(\frac{1}{2})).$ Then
\begin{equation} \label{Eqn:brack_1}
\begin{aligned}
& \{\sum_{j\in J} A_j M_j \, _\lambda \sum_{k\in K} B_k N_k \} \\
& = \sum_{j\in J, k\in K} (-1)^{p(N_k)(p(A_j)+p(M_j)+p(B_k))+p(M_j)p(B_k)} N_k\{A_j\, _{\lambda+\partial}B_k\}_{\to}M_j \\
&\textstyle -\sum_{j\in J, k\in K}(-1)^{p(M_j)p(B_k)+p(A_j)(p(N_k)+p(M_j)+p(B_k))}B_k\{M_j\, _{\lambda+\partial}N_k\}_{\to}A_j.
\end{aligned}
\end{equation}
Here, we used that $\sum_{j\in J} \{A_jM_j\, _\lambda N_k\}= \sum_{k\in K} \{ M_j\, _\lambda B_k N_k\}=0.$\\
On the other hand, 
\begin{equation} \label{Eqn:brack_2}
\begin{aligned}
& \{\sum_{j\in J} K_{A_j} K_{M_j} \, _\lambda \sum_{k\in K} K_{B_k} K_{N_k} \} \\
& = \sum_{j\in J, k\in K} (-1)^{p(N_k)(p(A_j)+p(M_j)+p(B_k))+p(M_j)p(B_k)}K_{N_k}\{K_{A_j} \, _\lambda  K_{B_k}\} K_{M_j}\\
&\textstyle+ \sum_{j\in J, k\in K}(-1)^{p(M_j)p(B_k)+p(A_j)(p(N_k)+p(M_j)+p(B_k))}K_{B_k}\{K_{M_j}\, _{\lambda+\partial}K_{N_k}\}_{\to}K_{A_j}.
\end{aligned}
\end{equation}

We can see that there exist $c_i \in \mathcal{V}(\g,f,k)$ for $i\geq 0$ such that 
\[\{A_j \,_\lambda B_k\} =\sum_{i\geq 0} c_i \lambda^i \in \mathcal{V}(\g,f,k)[\lambda]\quad  \Longleftrightarrow \quad \{K_{A_j}\, _\lambda K_{B_k}\} = \sum_{i\geq 0} K_{c_i} \lambda^i\]
and there exist $C_i\in \mathcal{V}(\g,f,k)$ for $i\geq 0$ such that 
\[\{M_j \,_\lambda N_k\} =\sum_{i\geq 0} C_i \lambda^i \in \mathcal{V}(\g,f,k)[\lambda]\quad  \Longleftrightarrow \quad \{K_{M_j}\, _\lambda K_{N_k}\} = -\sum_{i\geq 0} K_{C_i} \lambda^i.\]
The last equivalence comes from the fact that $\{n_1\, _\lambda n_2\} = -(f|[n_1, n_2])$ and $\{K_{n_1}\, _\lambda K_{n_2}\} = (f|[n_1, n_2])$ for $n_1, n_2 \in \g(\frac{1}{2}).$ Hence (\ref{Eqn:brack_1}) and (\ref{Eqn:brack_2}) imply that the map $f$ is a PVA isomorphism.

\end{proof}

\begin{prop} \label{Prop:generator_aff}
Suppose $\{v_\alpha\}_{\alpha\in J}$ is a basis of $\g_f$ such that $v_\alpha \in \g(j_\alpha)$ and $v_\alpha$ has the conformal weight $\Delta_\alpha$. If we have a subset $A=\{v_\alpha+a_\alpha |\alpha\in J\}\subset \WW(\g,f,k)$ such that  
\begin{equation} \label{a_assump}
a_\alpha\in S(\CC[\partial]\otimes(\bigoplus_{i>j_\alpha} \g(i)))
\end{equation}
 then $A$ is a set of  free generators of $\WW(\g,f,k).$
\end{prop}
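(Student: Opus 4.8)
The plan is to work inside $\mathcal{V}(\g,f,k)=S(\CC[\partial]\otimes\bigoplus_{i\leq 1/2}\g(i))$, which carries \emph{two} gradings: the conformal-weight grading $\Delta$ (a genuine grading, since the defining relations $m+(f|m)$ preserve $\Delta$), and the $\ad x$-grading $\deg$ with $\deg(\g(i))=i$, $\deg(\partial)=0$ and $\deg$ of a constant $0$. Note that $\WW(\g,f,k)$ is graded by $\Delta$ but \emph{not} by $\deg$; the latter serves only to detect leading terms in the ambient $\mathcal{V}(\g,f,k)$. By Theorem \ref{Thm:main} and Proposition \ref{Prop:main}, $\WW(\g,f,k)\cong S(\CC[\partial]\otimes\g_f)$, so it has a set of free generators $\{w_\beta\}_{\beta\in J}$, each homogeneous of conformal weight $\Delta_\beta=1-j_\beta$; tracing the isomorphism $\bar f$ (which sends $K_a\mapsto a$) against the bigrading \eqref{Eqn:bigrading}, the lowest-$\deg$ component of $w_\beta$ is exactly $v_\beta\in\g(j_\beta)$. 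My strategy is to show that $A$ is related to $\{w_\beta\}$ by a change of generators that is triangular with respect to $\deg$, and then finish with a graded-dimension count.

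Since $\WW(\g,f,k)$ is $\Delta$-graded and $v_\alpha$ is homogeneous of weight $\Delta_\alpha$, I may assume each $v_\alpha+a_\alpha$ is conformal-weight homogeneous of weight $\Delta_\alpha$. For a monomial $\partial^{n_1}g_1\cdots\partial^{n_r}g_r$ with $g_s\in\g(i_s)$ one has $\Delta=\sum_s(n_s+1-i_s)$ and $\deg=\sum_s i_s$, hence $\deg=\sum_s n_s+r-\Delta$. Applying this to the monomials of $a_\alpha$, all of which satisfy $i_s>j_\alpha$ and have total weight $\Delta_\alpha=1-j_\alpha$, yields $\deg-j_\alpha=\sum_s n_s+r-1\geq 0$, with equality only when $r=1$ and $n_1=0$, which would force $i_1=j_\alpha$, contradicting $i_1>j_\alpha$. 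Therefore every monomial of $a_\alpha$ has $\deg>j_\alpha$, so the lowest-$\deg$ component of $v_\alpha+a_\alpha$ is precisely $v_\alpha$, matching that of $w_\alpha$.

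Now fix a conformal weight $d$ and set $j=1-d$. The same bookkeeping shows that the part of $\mathcal{V}(\g,f,k)$ of bidegree $(\Delta,\deg)=(d,j)$ is exactly $\g(j)$ (it forces $r=1$, $n_1=0$, $i_1=j$). A parallel computation of the lowest $\deg$ of decomposables of weight $d$ --- either $\partial^n w_\beta$ with $n\geq 1$, giving $\deg=j_\beta=j+n\geq j+1$, or a product $\prod_s\partial^{n_s}w_{\beta_s}$ of at least two factors, giving $\deg=\sum_s j_{\beta_s}=(\#s)+\sum_s n_s-d\geq 2-d=j+1$ --- shows that every decomposable of weight $d$ has lowest $\deg\geq j+1$. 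Consequently the projection $\pi_j\colon\WW(\g,f,k)_d\to\g(j)$ onto the $\deg=j$ component kills all decomposables and satisfies $\pi_j(w_\beta)=v_\beta$ and $\pi_j(v_\alpha+a_\alpha)=v_\alpha$. Since $\{v_\beta:\Delta_\beta=d\}$ and $\{v_\alpha:\Delta_\alpha=d\}$ are both bases of $\g_f\cap\g(j)$, the map $\pi_j$ descends to an isomorphism $\WW(\g,f,k)_d/(\text{decomposables})\cong\g_f\cap\g(j)$ carrying the classes of both families to bases. Inducting on $d$ (decomposables lie in the subalgebra $\langle A\rangle$ generated by $A$ by the inductive hypothesis), this forces each $w_\beta\in\langle A\rangle$, and hence $A$ generates $\WW(\g,f,k)$.

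Finally, the free differential supersymmetric algebra on the set $A$ has the same conformal-weight graded dimension as $S(\CC[\partial]\otimes\g_f)\cong\WW(\g,f,k)$, because $A$ has exactly the same number of generators of each weight $\Delta_\alpha$; moreover each weight space is finite-dimensional, as $\g$ is finite-dimensional and $\partial$ raises $\Delta$. A surjective $\Delta$-graded homomorphism between graded spaces of equal finite dimension in each degree is an isomorphism, so $A$ is a set of free generators. I expect the main obstacle to be the $\deg$-bookkeeping of the third paragraph, which must simultaneously push the corrections $a_\alpha$ \emph{and} all decomposables strictly above $\deg=j$, thereby isolating the generators in the single lowest $\ad x$-degree space $\g(j)$; once this separation and the identification of $\mathrm{lead}(w_\beta)=v_\beta$ coming from the structure theorem are established, the inductive assembly and the dimension count are routine.
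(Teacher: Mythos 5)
Your treatment of the weight-homogeneous case (paragraphs 2--4) is essentially sound, and it takes a genuinely different route from the paper: you isolate the generators by projecting onto the bidegree $(\Delta,\deg)=(d,1-d)$ component of $\mathcal{V}(\g,f,k)$, observe that this projection kills all decomposables, and then induct \emph{upward} on conformal weight, finishing with a graded dimension count for freeness (a point the paper leaves implicit). The paper instead inducts \emph{downward} on the $\ad x$-grading, comparing $A$ with the reference generators $B_\alpha=v_\alpha+b_\alpha$ supplied by Proposition \ref{Prop:main} inside the subalgebras $\WW(\g,f,k)\cap S(\CC[\partial]\otimes(\bigoplus_{i\geq k}\g(i)))$.

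However, your opening reduction --- ``I may assume each $v_\alpha+a_\alpha$ is conformal-weight homogeneous'' --- is a genuine gap, not a harmless normalization. The proposition imposes no homogeneity on $a_\alpha$, only the letter constraint \eqref{a_assump}. Homogenizing produces a set $A'$ that still satisfies the hypotheses, and your argument shows $A'$ freely generates; but this does not yield the conclusion for $A$. Recovering $A$ from $A'$ means adding to each $P'_\alpha$ tails $t_\alpha\in\WW(\g,f,k)$ of other weights, and adding tails that are differential polynomials in the generators can destroy generation: in the free differential algebra $\CC_{\text{diff}}[w]$ the element $w+w^2$ does \emph{not} generate, since a differential polynomial $F$ with $F(w+w^2)=w$ would need nonzero homogeneous components $F_n=c_nX^n$ for every $n$ (the $c_n$ are the signed Catalan coefficients of the compositional inverse of $X+X^2$), so no finite $F$ exists. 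What rules out this phenomenon here is precisely \eqref{a_assump} applied to the inhomogeneous components $(a_\alpha)_d$; but to exploit it you must know that every element of $\WW(\g,f,k)\cap S(\CC[\partial]\otimes(\bigoplus_{i>j_\alpha}\g(i)))$ is a differential polynomial in the generators $w_\beta$ with $j_\beta>j_\alpha$ --- exactly the intersection identity on which the paper's downward induction runs, and which your weight induction never establishes. Note also that your paragraph-2 bookkeeping itself fails without homogeneity: a monomial of $a_\alpha$ of weight $\neq\Delta_\alpha$, for instance a product of two letters in $\g(j_\alpha+\tfrac{1}{2})$ when $j_\alpha\leq-1$, has $\deg=2j_\alpha+1\leq j_\alpha$, so the leading-term separation breaks and the decomposable part of $P_\alpha$ may involve generators of weight higher than $\Delta_\alpha$, derailing the induction on $d$. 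So you must either treat inhomogeneous $a_\alpha$ directly (essentially the paper's filtration argument) or prove the intersection identity as a lemma before homogenizing.
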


\begin{proof}
We already showed in Proposition \ref{Prop:main} that there is a set of free generators $\{v_\alpha+ b_\alpha|\alpha \in J\}$ of $\WW(\g,f,k)$ such that $\gr(v_\alpha+b_\alpha)=v_\alpha.$ In other words, using the filtration (\ref{Eqn:filt}) and Theorem \ref{Thm:main}, we have   $v_\alpha \in F_{j_\alpha-\frac{1}{2}} \backslash F_{j_\alpha}$ and $b_\alpha \in F_{j_\alpha}.$ It is not hard to see that $F_{j_\alpha} \subset  S(\CC[\partial]\otimes(\bigoplus_{i>j_\alpha} \g(i))).$ Hence we proved that the existence of such generating sets.

Now let us assume there is another subset  $A=\{v_\alpha+a_\alpha |\alpha\in J\}\subset \WW(\g,f,k)$ satisfying (\ref{a_assump}). We denote by $B_\alpha = v_\alpha+ b_\alpha$ and $A_\alpha=v_\alpha+a_\alpha$. 
If $v_\alpha \in \g(0)$ then $a_\alpha-b_\alpha\in S(\CC[\partial]\otimes(\bigoplus_{i>0} \g(i))).$ However, since $a_\alpha-b_\alpha\in \WW(\g,f,k)$ and $\WW(\g, f,k)\cap S(\CC[\partial]\otimes(\bigoplus_{i>0} \g(i)))=0$, we have $A_\alpha=B_\alpha$ and 
\[\CC_{\text{diff}}[A_\alpha|\, \alpha\in J, \,  v_\alpha \in \g_f(0)]= \CC_{\text{diff}}[B_\alpha\, |\, \alpha\in J, \,  v_\alpha \in \g_f(0)].\]
 Here $\CC_{\text{diff}}[S]$ for a set $S$ denotes supersymmetric algebra generated by $\{\partial^n s\, |\, n\in \ZZ_+,\, s\in S\}.$

Suppose $\CC_{\text{diff}}[A_\alpha|\, \alpha\in J, \,  v_\alpha \in \bigoplus_{i\geq k} \g_f(k)]= \CC_{\text{diff}}[B_\alpha\, |\, \alpha\in J, \,  v_\alpha \in \bigoplus_{i\geq k} \g_f(k)]$ for some $k\leq 0.$ Let $v_\beta\in \g_f(k-\frac{1}{2})$ for some $\beta\in J$. Since $A_\beta-B_\beta \in \WW(\g,f,k)\cap S(\CC[\partial]\otimes(\bigoplus_{i\geq k} \g(i)))$ and 
\begin{equation*}
\begin{aligned}
\textstyle \WW(\g,f,k)\cap S(\CC[\partial]\otimes(\bigoplus_{i\geq k} \g(i)))& \textstyle   = \CC_{\text{diff}}[A_\alpha|\, \alpha\in J, \,  v_\alpha \in \bigoplus_{i\geq k} \g_f(k)] \\
  & \textstyle   = \CC_{\text{diff}}[B_\alpha\, |\, \alpha\in J, \,  v_\alpha \in \bigoplus_{i\geq k} \g_f(k)]
 \end{aligned}
 \end{equation*}
 we have 
 \begin{equation*}
 \begin{aligned}
&  \textstyle A_\beta \in  \CC_{\text{diff}}[B_\beta, B_\alpha\, |\, \alpha\in J, \,  v_\alpha \in \bigoplus_{i\geq k} \g_f(k)], \\
 &\textstyle B_\beta \in  \CC_{\text{diff}}[A_\beta, A_\alpha\, |\, \alpha\in J, \,  v_\alpha \in \bigoplus_{i\geq k} \g_f(k)].
 \end{aligned}
 \end{equation*}
 Hence 
 \[ \textstyle \CC_{\text{diff}}[ B_\alpha\, |\, \alpha\in J, \,  v_\alpha \in \bigoplus_{i\geq k-\frac{1}{2}} \g_f(k)]= \CC_{\text{diff}}[A_\alpha\, |\, \alpha\in J, \,  v_\alpha \in \bigoplus_{i\geq k-\frac{1}{2}} \g_f(k)].\]
 By an induction, we have $\CC_{\text{diff}}[v_\alpha+a_\alpha|\alpha\in J]=\CC_{\text{diff}}[v_\alpha+b_\alpha|\alpha\in J]=\WW(\g,f,k).$ Hence $A$ is a set of generators of $\WW(\g,f,k).$
\end{proof}

\vskip 3mm

\section{Relations between finite and affine W-superalgebras} \label{Sec:finite}

\subsection{Finite W-superalgebras}\ \\

For finite W-superalgebras, we can obtain an analogous result to Theorem \ref{Thm:main} and Proposition \ref{Prop:generator_aff}. (cf. appendix of \cite{DK} and \cite{GG})

\begin{defn} \label{Def:QF_W_1}
Let  $(\phi^{\n_-} \oplus \phi_\n)$ and $\Phi_{\n/\m}$ be nonlinear Lie superalgebras such that 
\begin{enumerate}
\item as vector superspaces
\[ \phi^{\n_-} \simeq \Pi(\n_-), \quad \phi_{\n} \simeq \Pi(\n), \quad \Phi_{[\n, \m]} \simeq \g(1/2)\]
 where $\Pi$ denotes parity reversing, 
\item for $a,b \in \n_-$ and $c,d, n_1, n_2 \in \n,$ 
\[ [\phi^a, \phi^b]=[\phi_c,\phi_d]=0, \quad [\phi_c, \phi^a]=(c|a), \quad [\Phi_{n_1}, \Phi_{n_2}]= (f| [n_1, n_2]).\]
\end{enumerate}
Let us denote  $r=\g\oplus (\phi^{\n_-} \oplus \phi_\n) \oplus \Phi_{\n/\m}$ and let 
\[d= \sum_{\alpha \in S} s(\alpha) \phi^\alpha u_\alpha + \sum_{\alpha \in S(1/2)} \phi^\alpha \Phi_\alpha + \phi^f +\frac{1}{2} \sum_{\alpha,\beta \in S} \phi^\alpha \phi^\beta \phi_{[u_\beta, u_\alpha]} \in U(r) \]
for $\phi^\alpha = \phi^{u^\alpha}$, $s(\alpha)=s(u_\alpha)$ and $\Phi_\alpha = \Phi_{u_\alpha}.$ If the adjoint map $\ad\, d :U(r) \to U(r)$ is defined by the Lie bracket on $r$ and Leibniz rules, the associative superalgebra
\[ W^{fin}_1(\g,f)= H(U(r), \ad\, d)\]
is called the quantum finite W-superalgebra associated to $\g$ and $f$.
\end{defn}

In order to see Definition \ref{Def:QF_W_1} makes sense, we have to show the following lemma.

\begin{lem}
\begin{enumerate}
\item We have $(\ad\, d)^2=0$ and $d$ is an odd element in $U(r).$ 
\item The associative product of $U(r)$ induces the product of $W^{fin}_1.$
\end{enumerate}
\end{lem}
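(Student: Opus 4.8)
The plan is to prove the two claims by reducing them to the corresponding statements already established (or available via the cited literature) in the affine setting, exploiting the fact that $W^{fin}_1$ is the finite analogue of the affine construction from Section \ref{Subsec:superW-BRST}. For part (1), I would first verify that $d$ is odd by the same parity bookkeeping as in the affine case: each summand $\phi^\alpha u_\alpha$, $\phi^\alpha \Phi_\alpha$, $\phi^f$, and $\phi^\alpha\phi^\beta\phi_{[u_\beta,u_\alpha]}$ carries odd total parity, since $s(\phi^\alpha)s(u_\alpha)=-1$, $s(\phi^\alpha)s(\Phi_\alpha)=-1$, $s(\phi^f)=-1$, and the triple fermion product is again odd. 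This is essentially identical to part (1) of the earlier lemma, transposed from $S(R)$ to $U(r)$.

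The substantive content is $(\ad d)^2 = 0$. The key observation is that $(\ad d)^2 = \ad([d,d]/2)$ combined with the super-Jacobi identity, so it suffices to show $[d,d]=0$ in $U(r)$ (where $[d,d]=2d^2$ since $d$ is odd, so equivalently $d^2=0$). The plan is to compute $[d,a]$, $[d,\phi_a]$, $[d,\phi^a]$, and $[d,\Phi_a]$ for generators $a\in\g$, obtaining formulas strictly parallel to part (2) of the affine lemma but with the $k\lambda^2$ and $(\partial+\lambda)$ terms absent (there is no $\partial$ and no conformal weight grading in the finite case). Then $[d,d]$ decomposes, exactly as in equations (\ref{Eqn:4.7_150207})--(\ref{Eqn:4.9_150207}), into a quadratic-in-$d$ piece and the cubic fermion piece, and the vanishing reduces to the cyclic Jacobi identity
\[ s(u_\beta)\phi^\alpha\phi^\beta\phi^\gamma\phi_{[u_\gamma,[u_\beta,u_\alpha]]} + s(u_\alpha)\phi^\gamma\phi^\alpha\phi^\beta\phi_{[u_\beta,[u_\alpha,u_\gamma]]} + s(u_\gamma)\phi^\beta\phi^\gamma\phi^\alpha\phi_{[u_\alpha,[u_\gamma,u_\beta]]}=0, \]
which is the finite shadow of (\ref{Eqn:4.10_150207}). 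I expect the main obstacle here to be the careful tracking of signs in the nonlinear universal enveloping superalgebra: because $U(r)$ is only a \emph{nonlinear} Lie superalgebra (Definition \ref{Def:nonlinearLA}), the bracket lands in the tensor algebra and associativity relations hold only modulo $\mathcal{M}(r)$, so one must check that the cancellations survive reordering. The cleanest route is to observe that $d$ is, up to the quasi-classical limit, the image of the affine differential; since $\{d_\lambda d\}=0$ was proved in $S(R)$ and the Zhu-type finitization sends the affine bracket to the finite commutator (as summarized in the diagram at the end of Section \ref{Sec:background}), the finite identity $[d,d]=0$ follows formally from the affine one.

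For part (2), the point is that $\ad d$ is an odd derivation of the associative product of $U(r)$, which is immediate from the Leibniz rule defining $\ad d$ together with $d$ being odd. Consequently $\ker(\ad d)$ is a subalgebra of $U(r)$ and $\mathrm{im}(\ad d)$ is a two-sided ideal of $\ker(\ad d)$: if $a=\ad d(x)$ and $b\in\ker(\ad d)$, then $ab = \ad d(x)\,b = \ad d(xb) - (-1)^{p(x)} x\,\ad d(b) = \ad d(xb)$, and symmetrically on the other side. Therefore the induced product on $H(U(r),\ad d)=\ker(\ad d)/\mathrm{im}(\ad d)$ is well-defined, which is exactly the assertion. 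This step is routine and I anticipate no real difficulty; the only care needed is to confirm the derivation sign convention matches the odd-differential axioms (i) stated in Section \ref{Subsec:FC}, namely $d(ab)=d(a)b+(-1)^{p(a)}a\,d(b)$, so that the ideal computation closes with the correct parity factors.
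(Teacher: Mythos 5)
Your parity check in part (1) and your argument for part (2) are correct and are exactly what the paper intends: the paper's entire proof is the remark that the affine classical argument carries over, and your verification that $\ad d$ is an odd derivation, that $\ker(\ad d)$ is a subalgebra, and that $\operatorname{im}(\ad d)$ is a two-sided ideal of $\ker(\ad d)$ is the precise finite analogue of the paper's well-definedness proposition. Your \emph{primary} plan for $(\ad d)^2=0$ (compute $[d,\cdot]$ on generators and reduce $[d,d]=0$ to the cyclic Jacobi identity, mirroring the affine lemma) is likewise sound. Your worry about working ``only modulo $\mathcal{M}(r)$'' is misplaced: $U(r)=\mathcal{T}(r)/\mathcal{M}(r)$ is an honest associative superalgebra, so the super-Jacobi identity for commutators holds exactly, $(\ad d)^2=\tfrac{1}{2}\ad([d,d])$ is unproblematic, and the commutator Leibniz rule is exact; if anything the finite computation is simpler than the affine one because the $k(\partial+\lambda)$-terms are absent.

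The genuine gap is the route you call ``cleanest'': deducing $[d,d]=0$ in $U(r)$ from $\{d_\lambda d\}=0$ in $S(R)$. This argument runs in the wrong direction. $S(R)$ is the \emph{quasi-classical limit} of the quantum affine complex $V(R)$, while $U(r)$ is a \emph{finitization} (Zhu-type) of $V(R)$; both are degenerations of the quantum affine object, and the diagram at the end of Section 2 provides no arrow from the classical affine corner to the quantum finite corner. Concretely, in a regular family one has $\{d_\lambda d\}=[d_\lambda d]_\epsilon/\epsilon \bmod \epsilon$, so the classical identity $\{d_\lambda d\}=0$ only says that the quantum bracket vanishes to leading order in $\epsilon$; the quantum object $[d,d]$ contains double-contraction terms that are invisible in the classical limit, and their cancellation must be verified directly (this is precisely what the generator-by-generator computation accomplishes, and it is the content of the Kostant--Sternberg-type statement the paper is invoking). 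So discard the shortcut and run your first plan; that is what the paper does.
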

\begin{proof}
The proof is almost same as that of the lemma in the affine classical W-superalgebra case.
\end{proof}

We introduce another definition of quantum finite W-superalgebras. Let $U(\g)$ be the universal enveloping algebra of $\g$ and consider the Lie bracket $[\, , \, ]$ on $U(\g)$ defined by the Lie bracket on $\g$ and Leibniz rules. 

\begin{defn} \label{Def:QF_W_2}
Let $I^{fin}$ be the associative superalgebra ideal of $U(\g)$ generated by $\{m +(f|m) | m\in \m\}.$ We denote 
\[ \mathcal{U}(\g,f)=U(\g)/I^{fin}.\]
The adjoint action of $\n$ on $\mathcal{U}(\g,f)$ is defined by $\ad\, n\, (A)= [a,A]$ and the invariant space $(U(\g)/I^{fin})^{\ad\, \n}$ is called the {\it quantum finite W-superalgebra} associated to $\g$ and $f$. Also, we write 
\[ W^{fin}_2(\g,f)= \mathcal{U}(\g,f)^{\ad\, \n}\]
and the associative product of $W^{fin}_2$ is defined by 
\[ (A+I^{fin})\cdot (B+I^{fin}) = AB +I^{fin} .\]
\end{defn}

In order to see Definition \ref{Def:QF_W_2} makes sense, we have to show the following lemma.

\begin{lem}
\begin{enumerate}
\item We have $[n, I^{fin}] \in I^{fin}$.
\item If $A$ and $B$ are in $W^{fin}_2(\g,f)$ then $AB$ is  in $W^{fin}_2(\g,f).$
\end{enumerate}
\end{lem}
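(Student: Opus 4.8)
The plan is to mirror the affine argument of Proposition \ref{Prop:Welldefined_affine}, replacing Poisson $\lambda$-brackets by ordinary commutators and the PVA structure by the associative structure of $U(\g)$. For part (1), I would first use linearity together with the super-Leibniz rule for the adjoint action (recall that $\ad n\,(XY) = \ad n(X)\,Y + (-1)^{p(n)p(X)} X\,\ad n(Y)$ on $U(\g)$) to reduce the claim to a single term $A\,(m+(f|m))\,B$ with $A,B \in U(\g)$ and $m \in \m$. Expanding $\ad n$ on this product, the two outer summands retain the factor $m+(f|m)$ and hence lie in $I^{fin}$ automatically; the only summand requiring attention is the middle one, $(-1)^{p(n)p(A)}\,A\,[n, m+(f|m)]\,B = (-1)^{p(n)p(A)}\,A\,[n,m]\,B$, where the scalar $(f|m)$ commutes with $n$.

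The crux is therefore to show $[n,m] \in I^{fin}$, and here I would invoke the $\ad x$-grading. Writing $n \in \g(i)$ with $i>0$ and $m \in \g(j)$ with $j \geq 1$, the bracket $[n,m]$ lies in $\g(i+j)$ with $i+j \geq 3/2$, so $[n,m] \in \m$. Moreover, since $f \in \g(-1)$ pairs nontrivially only with $\g(1)$ under the invariant form and $i+j \geq 3/2 \neq 1$, we get $(f|[n,m]) = 0$. Consequently $[n,m] = [n,m] + (f|[n,m])$ is precisely one of the generators of $I^{fin}$, whence $A\,[n,m]\,B \in I^{fin}$. This yields $[n, I^{fin}] \subseteq I^{fin}$, and in particular shows that $\ad n$ descends to a well-defined operator on $\mathcal{U}(\g,f) = U(\g)/I^{fin}$.

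For part (2), with $\ad n$ now acting on $\mathcal{U}(\g,f)$, I would apply the super-Leibniz rule once more: for $A,B \in W^{fin}_2(\g,f)$ we have $\ad n\,(AB) = \ad n(A)\,B + (-1)^{p(n)p(A)} A\,\ad n(B)$. Since $A$ and $B$ are $\ad\n$-invariant, both summands vanish in $\mathcal{U}(\g,f)$, so $AB$ is again $\ad\n$-invariant; equivalently, $[n, AB] \in I^{fin}$ whenever $[n,A],[n,B] \in I^{fin}$, because $I^{fin}$ is a two-sided ideal. Hence $W^{fin}_2(\g,f)$ is closed under multiplication.

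The only genuinely non-formal step is the grading computation giving $[n,m] \in \m$ and $(f|[n,m])=0$ in part (1); everything else is a direct application of the super-Leibniz rule and the two-sided ideal property. The main thing to watch is the bookkeeping of the Koszul signs $(-1)^{p(\cdot)p(\cdot)}$ in the super-derivation identity, but these never obstruct the argument, since each problematic factor either carries the generator $m+(f|m)$ or collapses to $[n,m]$.
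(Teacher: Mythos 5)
Your proof is correct and follows essentially the same route as the paper: the paper disposes of this lemma by citing the affine case (Proposition \ref{Prop:Welldefined_affine}), whose content is exactly your super-Leibniz argument for closure under products together with the invariance of the ideal under $\ad\, \n$. The one piece of substance you add --- the $\ad\, x$-grading computation showing $[n,m]\in\m$ and $(f|[n,m])=0$, so that $[n,m]$ is itself a generator of $I^{fin}$ --- is precisely the detail the paper leaves implicit, and you carry it out correctly.
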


\begin{proof}
The proof is almost same as that of the lemma in the affine classical W-superalgebra case.
\end{proof}

Let us consider the building blocks
\[ J_a= a+\sum_{\alpha\in S}\phi^\alpha\phi_{[u_\alpha, a]}\qquad \text{ for } a\in \g,\]
where $\phi_g=\phi_{\pi_+  g}$  and $\pi_+:\g\to \n$ is the canonical projection map.  We denote  
\begin{equation}
r_+= \phi_\n \oplus \ad \, d\,  (\phi_\n)\quad \text{ and }\quad   r_-= J_{\g_\leq} \oplus \phi^{\n_-} \oplus \Phi_{\g(1/2)}.
\end{equation} 
Then $\ad \, d\, |_{U(r_+)} \subset U(r_+)$ and $\ad\, d\, |_{U(r_-)} \subset U(r_-).$  As a consequence, we get the following proposition.

\begin{prop}
Let $d=\ad\, d\, |_{U(r_-)}$. Then we have
\begin{equation}
H(U(r),\ad \,  d)= H(U(r_-), d).
\end{equation}
Hence $\WW_1(\g,f,k)= H(U(r_-), d).$
\end{prop}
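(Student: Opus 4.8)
The plan is to follow the strategy of Proposition~\ref{Prop:3.8_0914} essentially verbatim, replacing the symmetric algebra $S(R)$ and the PVA Künneth step (part (2) of the Künneth lemma) by the universal enveloping algebra $U(r)$ and the associative analogue of Künneth recorded in the last Remark of Section~\ref{Subsec:FC}, namely $H(U(\g),d)\simeq U(H(\g,d))$. First I would record that, as vector superspaces, $r=r_+\oplus r_-$: the generators $\phi_\n$ and $\ad\, d\,(\phi_\n)$ of $r_+$ are linearly independent from the building blocks $J_{\g_\leq}$, the $\phi^{\n_-}$, and the $\Phi_{\g(1/2)}$ that span $r_-$, because $\ad\, d\,(\phi_a)=a+(\cdots)$ has leading term $a\in\g$ while $\phi_\n$ lies in the purely fermionic part. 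By the PBW theorem for nonlinear Lie superalgebras the multiplication map then gives a vector superspace isomorphism $U(r)\simeq U(r_+)\otimes U(r_-)$.

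Next I would upgrade this to an isomorphism of complexes. Since $\ad\, d$ is an odd derivation of the associative product and, by hypothesis, preserves both $U(r_+)$ and $U(r_-)$, setting $d_+=\ad\, d\,|_{U(r_+)}$ and $d=\ad\, d\,|_{U(r_-)}$ and applying the Leibniz rule to a product $xy$ with $x\in U(r_+)$ and $y\in U(r_-)$ shows that $\ad\, d$ corresponds to $d_+\otimes 1+1\otimes d$ under the identification $U(r)\simeq U(r_+)\otimes U(r_-)$. Part (1) of the Künneth lemma then yields $H(U(r),\ad\, d)\simeq H(U(r_+),d_+)\otimes H(U(r_-),d)$.

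It remains to see that the first factor is trivial. On $r_+$ the differential sends $\phi_a\mapsto \ad\, d\,(\phi_a)\in\ad\, d\,(\phi_\n)$ and annihilates $\ad\, d\,(\phi_\n)$, since $(\ad\, d)^2=0$; because $\ad\, d\,(\phi_a)=a+(\cdots)$ depends injectively on $a\in\n$ via its leading term, the map $d_+$ is injective on $\phi_\n$ with image exactly $\ad\, d\,(\phi_\n)$, so $H(r_+,d_+)=0$. Invoking the associative Künneth remark gives $H(U(r_+),d_+)\simeq U(H(r_+,d_+))=U(0)=\CC$, whence $H(U(r),\ad\, d)\simeq H(U(r_-),d)$. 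The concluding identity is then immediate from the definition $W^{fin}_1(\g,f)=H(U(r),\ad\, d)$ given in Definition~\ref{Def:QF_W_1}.

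The main obstacle I anticipate is the passage from the PBW vector space decomposition to a decomposition of \emph{complexes}: one must check that $r_+$ and $r_-$ genuinely inherit nonlinear Lie superalgebra structures, so that $U(r_\pm)$ make sense and the hypothesis $\ad\, d\,|_{U(r_\pm)}\subset U(r_\pm)$ is consistent, and that the cited Remark is applicable to $r_+$, that is, that the $\Gamma'_+$-grading and the induced nonlinear bracket on $r_+$ satisfy the compatibility conditions guaranteeing $H(U(r_+),d_+)\simeq U(H(r_+,d_+))$. Everything else is a transcription of the affine computation, in which the ordering conventions and the signs $s(a)=(-1)^{p(a)}$ must be carried through but present no conceptual difficulty.
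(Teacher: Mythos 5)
Your proposal is correct and follows essentially the same route as the paper: the paper's own proof is just the remark that the argument of Proposition~\ref{Prop:3.8_0914} carries over, i.e.\ the decomposition $U(r)\simeq U(r_+)\otimes U(r_-)$ with $\ad d = d_+\otimes 1 + 1\otimes d$, K\"unneth, and the triviality $H(U(r_+),d_+)\simeq U(H(r_+,d_+))=\CC$ via the associative analogue of the K\"unneth/enveloping-algebra remark, exactly as you spell out. Your explicit verification that $d_+$ maps $\phi_\n$ injectively onto $\ad d(\phi_\n)$ (so $H(r_+,d_+)=0$) is a detail the paper leaves implicit, but it is the same argument, not a different one.
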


\begin{proof}
The proof is almost same as that of the Proposition \ref{Prop:3.8_0914} in the affine classical W-superalgebra case.
\end{proof}

For
\begin{equation} \label{Eqn:K}
K_a= J_{\pi_\leq a}  - s(a) \Phi_a -(a|f), \qquad a \in \bigoplus_{i\leq 1} \g(i),
\end{equation}
we have the following theorem.

\begin{thm} \label{Thm:main_2}
Consider the associative superalgebra homomorphism 
\begin{equation}
\bar{f}: U(r_-) \to \mathcal{U}(\g,f),  
\end{equation}
such that $K_a \mapsto a$ for $a \in \bigoplus_{i\leq 1} \g(i)$ and $\phi^{n_-} \mapsto 0$ for $n_- \in \n_-$. Then 
\begin{equation}
f: W_1^{fin}(\g,f)=H(U(r_-), d_-) \to W^{fin}_2(\g,f)
\end{equation}
is a well-defined superalgebra isomorphism.
\end{thm}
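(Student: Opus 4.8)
The plan is to transcribe the proof of Theorem \ref{Thm:main}(1) into the finite setting, dropping every occurrence of $\partial$ and of the $\lambda$-bracket: in $U(r_-)$ one has only the associative product and the odd differential $d_-=\ad\,d\,|_{U(r_-)}$, so there is no analogue of part (2) of Theorem \ref{Thm:main} to worry about. As in the affine case the argument splits into two stages. First one shows that $H(U(r_-),d_-)$ is concentrated in charge zero, so that every cohomology class has a representative lying in the subalgebra generated by the $K_a$; then one checks that the assignment $K_a\mapsto a$ intertwines the differential $d_-$ with the adjoint action of $\n$, which identifies the charge-zero part of $\ker d_-$ with the invariants $W_2^{fin}(\g,f)$.

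For the first stage I would equip $U(r_-)$ with the restriction of the bigrading (\ref{Eqn:bigrading}) and the associated decreasing filtration (\ref{Eqn:filt}), and compute the graded differential $d^{\gr}$ exactly as before, obtaining $d^{\gr}(J_a)=-\sum_{\alpha\in S}s(a)\phi^\alpha(u_\alpha|[a,f])$, $d^{\gr}(\Phi_{[a]})=\sum_{\alpha\in S(1/2)}\phi^\alpha(u_\alpha|[a,f])$ and $d^{\gr}(\phi^a)=0$, whence $H(r_-,d^{\gr})=J_{\g_f}$. Using the Kazhdan weight (the finite counterpart of the conformal weight, under which each graded piece is finite dimensional and $d_-$ is homogeneous) one proves the finite analogue of Lemma \ref{Lem:4.8_150307}: the complex is a direct sum of locally finite complexes and is good. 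Here the vanishing off the charge-zero line follows from the identification $H(U(r_-),d^{\gr})\simeq U\big(H(r_-,d^{\gr})\big)=U(J_{\g_f})$, which is the universal-enveloping version of the K\"unneth lemma recorded in the Remark preceding Section \ref{Sec:superW}. Proposition \ref{Prop:com_gr} then gives $\gr^{pq}H(U(r_-),d_-)\simeq H^{pq}(U(r_-),d^{\gr})$, so $H(U(r_-),d_-)=H^0(U(r_-),d_-)\simeq U(J_{\g_f})$.

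For the second stage I would record the two derivation identities $d_-(K_a)=\sum_{\alpha\in S}\psi^\alpha K_{[u_\alpha,a]}$, with $\psi^\alpha=s(u_\alpha)\phi^\alpha$, and $d_-(K_AK_B)=(-1)^{p(A)}K_A\,d_-(K_B)+d_-(K_A)K_B$, alongside their mirror images $\ad\,u_\alpha\,(a)=[u_\alpha,a]$ in $\mathcal{U}(\g,f)$ and the graded Leibniz rule for $\ad\,u_\alpha$ on products. These show that, writing $d_-(K_A)$ as a sum over $\alpha\in S$ of terms $\psi^\alpha K_{(\,\cdot\,)}$, the coefficient of $\psi^\alpha$ is precisely $K_{\ad\,u_\alpha(A)}$. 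Arguing by induction on the length of a monomial $A$ in the $K_a$, exactly as in Theorem \ref{Thm:main}(1), one obtains the chain of equivalences: (i) $A\in W_2^{fin}(\g,f)$; (ii) $\ad\,u_\alpha(A)=0$ in $\mathcal{U}(\g,f)$ for all $\alpha\in S$; (iii) $K_{\ad\,u_\alpha(A)}=0$ for all $\alpha\in S$; (iv) $K_A\in\ker d_-$, i.e.\ $K_A$ determines a class in $W_1^{fin}(\g,f)$. Because the cohomology is concentrated in charge zero, each class has a unique charge-zero cocycle representative, so $f([K_A]):=\bar f(K_A)=A+I^{fin}$ is well defined; since $\bar f$ is an associative superalgebra homomorphism by construction, the equivalences above then show that $f$ is a bijective superalgebra homomorphism $W_1^{fin}(\g,f)\to W_2^{fin}(\g,f)$, that is, a superalgebra isomorphism.

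The main obstacle is the first stage rather than the second. The intertwining computation is purely combinatorial and strictly easier than its affine counterpart, since there are no powers of $\lambda$ or derivatives $\partial$ to track; once $H=H^0$ is known, it is essentially bookkeeping of the signs $s(a)=(-1)^{p(a)}$. By contrast, the homological reduction requires setting up the correct finite-dimensional weight grading so that $(U(r_-),d_-)$ is genuinely good and locally finite, and invoking the PBW/K\"unneth isomorphism for universal enveloping algebras of nonlinear Lie superalgebras to pin the cohomology to charge zero; this is where the real content sits.
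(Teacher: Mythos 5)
Your plan is the paper's own plan: the paper proves this theorem by deferring verbatim to the affine argument of Theorem \ref{Thm:main}, and your first stage is a faithful transcription of it. In particular you correctly identify the one substitution that stage requires, namely replacing the symmetric-algebra K\"unneth lemma used for $S(R_-)$ by the enveloping-algebra isomorphism $H(U(\g),d)\simeq U(H(\g,d))$ from the Remark at the end of Section \ref{Subsec:FC}, together with the Kazhdan-weight decomposition into finite-dimensional subcomplexes; with Proposition \ref{Prop:com_gr} this pins $H(U(r_-),d_-)$ to charge zero, exactly as in Lemma \ref{Lem:4.8_150307} and Proposition \ref{Prop:main}.

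The gap is in your second stage, which you declare ``purely combinatorial and strictly easier'' than the affine case: it is not, because $U(r_-)$ is not supercommutative, and this is precisely the feature the classical computation never has to face. In $S(R_-)$ one moves $\psi^\alpha$ freely to the left of any $K_A$; in $U(r_-)$ one has $[J_a,\phi^\beta]=\sum_{\gamma\in S}\phi^\gamma\,([u_\gamma,a]|u^\beta)\neq 0$, so after applying the Leibniz rule to $d_-(K_AK_B)$ the normal ordering of the $\psi$'s to the left produces correction terms, and the coefficient of $\psi^\alpha$ is \emph{not} what the naive copy of the classical computation gives. Your asserted identity (``the coefficient of $\psi^\alpha$ is precisely $K_{\ad\,u_\alpha(A)}$'') is true, but only because these reordering corrections exactly match the corrections arising on the $\mathcal{U}(\g,f)$ side when $\ad\,u_\alpha(A)$ is brought to the normal form compatible with $\bar f$, i.e.\ when elements of $\m$ created inside a monomial are commuted past the remaining factors and then replaced by $-(f|m)$ modulo $I^{fin}$; proving this matching is an extra lemma, not sign bookkeeping. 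Concretely, for $\g=\sll_2$ with $u_1=e$, $u^1=f$, $\phi^1=\phi^{u^1}$: one computes $[J_h,\phi^1]=-2\phi^1$, whence $d_-(J_hJ_f)=\phi^1\bigl(J_h^2-2J_h+2J_f\bigr)$, while on the other side $[e,hf]=h^2-2ef$ and $ef=fe+h\equiv h-f$ modulo the ideal, so $[e,hf]\equiv h^2-2h+2f$; the two agree under $\bar f$, but the terms $-2J_h$ and $-2h+2f$ are invisible in the classical computation you propose to transcribe (classically one gets $\phi^1(J_h^2+2J_f)$ matching $h^2+2f$). So your induction ``exactly as in Theorem \ref{Thm:main}(1)'' breaks at this point unless the quantum-correction matching is supplied; correspondingly, your closing assessment is backwards --- stage one is the routine part, and the real content of the quantum finite case sits in stage two.
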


\begin{proof}
The proof is almost same as that of Theorem \ref{Thm:main} in the affine classical W-superalgebra case.
\end{proof}

Recall that the bigrading (\ref{Eqn:bigrading}) is defined on $S(R_-)$. Suppose we have the bigrading $\gr$ of $r_-$ which is induced from (\ref{Eqn:bigrading}). We call the first component of of $\gr$ by $p$-grading.

\begin{prop} \label{Prop:generator_fin_qt}
Suppose $\{v_\alpha\}_{\alpha\in J}$ is a basis of $\g_f$ such that $v_\alpha \in \g(j_\alpha)$. If we have a subset $A=\{v_\alpha+a_\alpha |\alpha\in J\}\subset W^{fin}(\g,f)$ such that   $p$-grading of $a_i$ is greater than that of $v_\alpha$,
 then $A$ is a set of  free generators of $W^{fin}(\g,f).$
\end{prop}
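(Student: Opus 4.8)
The plan is to transport the proof of Proposition~\ref{Prop:generator_aff} to the associative (noncommutative) setting through the isomorphism $f\colon W^{fin}_1(\g,f)=H(U(r_-),d_-)\xrightarrow{\ \sim\ }W^{fin}_2(\g,f)$ of Theorem~\ref{Thm:main_2}, replacing the differential polynomial algebra $S(\CC[\partial]\otimes-)$ by the enveloping algebra $U(-)$ and the filtration~(\ref{Eqn:filt}) by its finite analogue, namely the decreasing filtration of $U(r_-)$ by $p$-grading. The first step is to produce a \emph{standard} free generating set. The finite analogue of Proposition~\ref{Prop:main} identifies the associated graded $\gr W^{fin}(\g,f)$ (with respect to the $p$-filtration) with the supersymmetric algebra $S(J_{\g_f})\cong S(\g_f)$; hence there is a free generating set $\{B_\alpha=v_\alpha+b_\alpha\mid\alpha\in J\}$ with $\gr B_\alpha=v_\alpha$, so that $b_\alpha$ has $p$-grading strictly larger than that of $v_\alpha$. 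Since $\g_f\subset\bigoplus_{i\le 0}\g(i)$, larger $p$-grading means $b_\alpha\in U(\bigoplus_{i>j_\alpha}\g(i))$, which matches the hypothesis on $A$ after the same translation (the finite version of $F_{j_\alpha}\subset S(\CC[\partial]\otimes\bigoplus_{i>j_\alpha}\g(i))$).

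The second step is the comparison, carried out by descending induction on $j_\alpha$ starting from the top level $j_\alpha=0$. Write $A_\alpha=v_\alpha+a_\alpha$. For $v_\alpha\in\g_f(0)$ both $a_\alpha$ and $b_\alpha$ lie in $U(\bigoplus_{i>0}\g(i))$, so $A_\alpha-B_\alpha$ lies in $W^{fin}(\g,f)\cap U(\bigoplus_{i>0}\g(i))$; the finite version of the vanishing (read off from $\gr W^{fin}(\g,f)\cong S(\g_f)$, whose nonzero elements have leading term in $\g_f\subset\bigoplus_{i\le 0}\g(i)$) forces $A_\alpha=B_\alpha$ at this level. At a general level $j_\beta=k-\tfrac12$, assuming inductively that the subalgebras generated by $\{A_\alpha\}$ and by $\{B_\alpha\}$ over the already-treated indices ($v_\alpha\in\bigoplus_{i\ge k}\g_f(i)$) both coincide with $W^{fin}(\g,f)\cap U(\bigoplus_{i\ge k}\g(i))$, the difference $A_\beta-B_\beta$ lies in this common subalgebra, so each of $A_\beta,B_\beta$ is expressible through the other together with the higher generators. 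This extends the equality of generated subalgebras down to level $k-\tfrac12$, and after finitely many steps yields that $A$ generates $W^{fin}(\g,f)$. Freeness in the PBW sense then follows because the leading symbols $\gr A_\alpha=v_\alpha$ form a basis of $\g_f$, so the ordered monomials in the $A_\alpha$ have linearly independent images in $S(\g_f)$.

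The point requiring the most care is that, unlike the super-commutative PVA of Proposition~\ref{Prop:generator_aff}, $W^{fin}(\g,f)$ is noncommutative, so ``free generators'' must be read as a PBW property rather than as polynomial generation; the clean way to reuse the commutative argument is to perform every comparison after passing to the super-commutative associated graded $\gr W^{fin}(\g,f)$. Consequently the real work lies in verifying that the $p$-filtration is compatible with the associative product and with the $\ad\,\n$-action, so that $\gr W^{fin}(\g,f)\cong S(\g_f)$ with leading terms behaving multiplicatively, and so that $W^{fin}(\g,f)\cap U(\bigoplus_{i\ge k}\g(i))$ is \emph{exactly} the subalgebra generated by the generators of $p$-grading at least that level. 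Once this compatibility and the finite analogue of Proposition~\ref{Prop:main} are established, the remaining bookkeeping is identical to the affine case, and I expect no further obstacle.
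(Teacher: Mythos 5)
Your proposal is correct and takes essentially the same route as the paper: the paper omits this proof entirely, remarking only that it follows by the analogous argument to Proposition~\ref{Prop:generator_aff}, and your write-up is precisely that analogy carried out through the isomorphism of Theorem~\ref{Thm:main_2}, with $S(\CC[\partial]\otimes -)$ replaced by $U(-)$ and the filtration by the $p$-filtration. Your added care about the noncommutative setting (reading freeness as a PBW property and checking that commutators drop strictly in the $p$-filtration, so that $\gr W^{fin}(\g,f)$ is supercommutative) is exactly the bookkeeping the paper's ``analogous proof'' implicitly requires.
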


\begin{proof}
The proof is omitted here. It can be proved by the analogous proof of Proposition \ref{Prop:generator_aff}.
\end{proof}

By substituting universal enveloping algebras of Lie superalgebras with supersymmetric algebras of Lie superalgebras, and the ideal $I^{fin}$ generated by the subset $\{m+(f|m)|m\in \m\}$ of $U(\g)$ by  the ideal $\mathcal{I}^{fin}$ generated by the subset $\{m+(f|m)|m\in \m\}$ of $S(\g)$, we get the following theorem about the classical finite W-algebra $\WW^{fin}(\g,f)$ associated to $\g$ and $f$. Also, we denote 
\[ \mathcal{S}(\g,f)=S(\g)/\mathcal{I}^{fin}.\]

\begin{thm} \label{Thm:main_2}
Consider the associative superalgebra homomorphism 
\begin{equation}
\bar{f}: S(r_-) \to \mathcal{S}(\g,f),  
\end{equation}
such that $K_a \mapsto a$ for $a \in \bigoplus_{i\leq 1} \g(i)$ and $\phi^{n_-} \mapsto 0$ for $n_- \in \n_-$. Then 
\begin{equation}
f: \WW_1^{fin}(\g,f)=H(S(r_-), d_-) \to \WW^{fin}_2(\g,f)=\mathcal{S}(\g,f)^{\ad\, \n}
\end{equation}
is a well-defined Poisson superalgebra isomorphism.
\end{thm}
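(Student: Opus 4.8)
The plan is to mirror the proof of Theorem \ref{Thm:main}, specializing every formula by deleting the derivation $\partial$ and the spectral parameter $\lambda$ (equivalently, retaining only the $0$-th products) and dropping the central terms proportional to $k$; this is legitimate because the finite Poisson bracket is simply $\{a,b\}=[a,b]$, with no $\lambda(a|b)K$ contribution. First I would establish the finite analogue of Proposition \ref{Prop:main}. The complex $(S(r_-),d_-)$ carries the $p$-grading induced from (\ref{Eqn:bigrading}), and since $\g$ is finite dimensional the complex is locally finite. Computing the associated graded differential $d^{\gr}$ exactly as in the affine case gives $\ker(d^{\gr}|_{r_-})=J_{\g_f}\oplus\phi^{\n_-}$ and $\text{im}(d^{\gr}|_{r_-})=\phi^{\n_-}$, whence $H(r_-,d^{\gr})=J_{\g_f}$ and, by the K\"unneth lemma, $H(S(r_-),d^{\gr})=S(J_{\g_f})$ is concentrated in charge $0$. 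Thus the complex is good, Proposition \ref{Prop:com_gr} applies, and $\gr^{pq}H(S(r_-),d_-)\simeq H^{pq}(S(r_-),d^{\gr})$ vanishes for $p+q\neq0$; hence $H(S(r_-),d_-)=H^0(S(r_-),d_-)\simeq S(J_{\g_f})$, so every class has a representative in $S(K_a)$ with $a\in\bigoplus_{i\leq 1/2}\g(i)$.

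Next I would show that $f$ is a well-defined isomorphism of commutative superalgebras. The engine is the transfer identity, valid for every $A\in S(\bigoplus_{i\leq1}\g(i))$,
\[ d_-(K_A)=\sum_{\alpha\in S}\psi^\alpha\,K_{\{u_\alpha,A\}},\qquad \psi^\alpha=s(u_\alpha)\phi^\alpha, \]
where $\{u_\alpha,A\}$ is computed in $\mathcal{S}(\g,f)$. On generators this is the finite shadow of (\ref{Eqn:def1_1}) and (\ref{Eqn:def2_1}), namely $d_-(K_a)=\sum_{\alpha\in S}\psi^\alpha K_{[u_\alpha,a]}$ and $\{u_\alpha,a\}=[u_\alpha,a]$, now free of the $\partial$- and $k$-contributions. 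Propagating it through products by induction --- the finite counterparts of (\ref{Eqn:def1_2}) and (\ref{Eqn:def2_2}), which are strictly simpler since no binomial coefficients arise from $(\lambda+\partial)^n$ --- reproduces the chain of equivalences of Theorem \ref{Thm:main}: $A\in\mathcal{S}(\g,f)^{\ad\n}$ iff $[u_\alpha,A]=0$ in $\mathcal{S}(\g,f)$ for all $\alpha\in S$, iff $K_{\{u_\alpha,A\}}=0$, iff $K_A\in\WW_1^{fin}(\g,f)$. Consequently $\bar f$ restricts to a bijection between the charge-$0$ cocycles and $\mathcal{S}(\g,f)^{\ad\n}$, and as $\bar f$ is an algebra homomorphism carrying the free generators $K_{v_\alpha}$, $v_\alpha\in\g_f$, onto a generating set, $f$ is a superalgebra isomorphism.

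The genuinely new step, relative to the quantum finite case treated above, is that $f$ must intertwine the Poisson brackets; this is the finite counterpart of part (2) of Theorem \ref{Thm:main}. I would write a general W-algebra element as $\sum_j A_jM_j$ with $A_j\in S(\bigoplus_{i\leq0}\g(i))$ and $M_j\in S(\g(1/2))$, expand $\{\sum_j A_jM_j,\sum_k B_kN_k\}$ by the Leibniz rule, and use $\{A_jM_j,N_k\}=\{M_j,B_kN_k\}=0$ to collapse it to the two surviving summands, mirroring (\ref{Eqn:brack_1}); the parallel computation on $S(r_-)$ reproduces (\ref{Eqn:brack_2}). The two then agree once one records the single sign reversal $\{n_1,n_2\}=-(f|[n_1,n_2])$ against $\{K_{n_1},K_{n_2}\}=(f|[n_1,n_2])$ on the neutral sector $\g(1/2)$. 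I expect the sign bookkeeping here to be the main obstacle: one must check that the parity factors $(-1)^{p(\cdot)p(\cdot)}$ match on both sides and that the sign flip on the $\Phi$-variables exactly compensates, so that the bracket formulas coincide term by term. Everything else is a routine transcription of the affine argument with $\partial$ and $\lambda$ deleted.
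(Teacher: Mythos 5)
Your proposal is correct and follows essentially the same route as the paper: the paper obtains this theorem precisely by transcribing the proof of Theorem \ref{Thm:main} (via the building blocks $K_a$, the bigrading (\ref{Eqn:bigrading}), the graded differential and K\"unneth argument, the chain of equivalences, and the sign flip $\{n_1,n_2\}=-(f|[n_1,n_2])$ versus $\{K_{n_1},K_{n_2}\}=(f|[n_1,n_2])$ on $\g(1/2)$) into the finite setting, replacing universal enveloping (vertex) algebras by supersymmetric algebras and dropping the $\partial$-, $\lambda$-, and $k$-terms. Your filling in of the finite analogue of Proposition \ref{Prop:main} and of the bracket comparison is exactly the intended argument.
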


\begin{prop} \label{Prop:generator_fin}
Take the $\Delta$-grading on $S(\g)$ defined by $\Delta_a=1-j_a$ for $a\in \g(j_a)$.
Suppose $\{v_\alpha\}_{\alpha\in J}$ is a basis of $\g_f$ such that $v_\alpha \in \g(j_\alpha)$. If we have a subset $A=\{v_\alpha+a_\alpha |\alpha\in J\}\subset \WW^{fin}(\g,f)$ such that  
\[a_\alpha \in S(\bigoplus_{i>j_\alpha} \g(i))\]
 then $A$ is a set of  free generators of $\WW^{fin}(\g,f).$
\end{prop}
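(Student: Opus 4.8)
\section*{Proof proposal}

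The plan is to transcribe the proof of Proposition~\ref{Prop:generator_aff} to the finite setting, where the absence of the derivation $\partial$ is a genuine simplification: the differential-algebra generation $\CC_{\text{diff}}[\,\cdot\,]$ is replaced throughout by ordinary symmetric-algebra generation $\CC[\,\cdot\,]$, and the $p$-filtration (\ref{Eqn:filt}) is replaced by the filtration induced by the $\Delta$-grading $\Delta_a=1-j_a$ on $S(\g)$. First I would produce a distinguished free generating set. The finite analogue of Proposition~\ref{Prop:main}, available because Theorem~\ref{Thm:main_2} identifies $\WW^{fin}(\g,f)$ with $H(S(r_-),d)$ whose associated graded cohomology is $S(J_{\g_f})$, furnishes free generators $B_\alpha=v_\alpha+b_\alpha$ whose leading term is $v_\alpha\in\g(j_\alpha)$ and whose tail satisfies $b_\alpha\in S(\bigoplus_{i>j_\alpha}\g(i))$. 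This settles existence; it remains to prove that an arbitrary system $A=\{A_\alpha=v_\alpha+a_\alpha\}$ obeying the hypothesis is again a free generating set.

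Next I would run a downward induction on the grading $j_\alpha$. Since $f\in\g(-1)$ and $\g_f=\ker(\ad f)$ consists of lowest weight vectors, one has $\g_f\subset\bigoplus_{i\leq 0}\g(i)$, so the induction starts at the top piece $\g_f(0)$. The engine of the argument is the identity
\[ \textstyle\WW^{fin}(\g,f)\cap S\big(\textstyle\bigoplus_{i\geq k}\g(i)\big)=\CC\big[\,B_\alpha\mid v_\alpha\in\textstyle\bigoplus_{i\geq k}\g_f(i)\,\big], \]
a consequence of the freeness of the $B_\alpha$ and the grading bound on the $b_\alpha$. For the base case one uses $\WW^{fin}(\g,f)\cap S(\bigoplus_{i>0}\g(i))=0$: for $v_\beta\in\g_f(0)$ the difference $a_\beta-b_\beta$ lies in this zero intersection, forcing $A_\beta=B_\beta$. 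In the inductive step, assuming the $A$- and $B$-generators of grading $\geq k$ generate the same subalgebra, the difference $A_\beta-B_\beta$ for $v_\beta\in\g_f(k-\frac{1}{2})$ lies in $\WW^{fin}(\g,f)\cap S(\bigoplus_{i\geq k}\g(i))$, which by the identity is exactly that common subalgebra; hence $A_\beta$ and $B_\beta$ are mutually expressible modulo the grading-$\geq k$ generators, and the generated subalgebras agree up to grading $\geq k-\frac{1}{2}$. The induction then yields $\CC[A_\alpha\mid\alpha\in J]=\CC[B_\alpha\mid\alpha\in J]=\WW^{fin}(\g,f)$, so $A$ generates; and freeness of $A$ follows from the standard filtered-freeness fact that a generating system whose leading symbols $v_\alpha$ form a basis of $\g_f$—the free generators of the associated graded $S(J_{\g_f})$—is itself free.

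The main obstacle I anticipate is the intersection identity, i.e.\ verifying in the finite (non-differential) picture that freeness of the $B_\alpha$ together with the estimate $b_\alpha\in S(\bigoplus_{i>j_\alpha}\g(i))$ pins down $\WW^{fin}(\g,f)\cap S(\bigoplus_{i\geq k}\g(i))$ as precisely the subalgebra on the high-grading generators. This is the one point where the finite analogue of Proposition~\ref{Prop:main} is genuinely used, since it requires the cohomological computation underlying Theorem~\ref{Thm:main_2} to be compatible with the $\Delta$-filtration. Once that compatibility is granted, the remainder is a formal copy of the affine argument of Proposition~\ref{Prop:generator_aff} with every occurrence of $\partial$ and of $\CC_{\text{diff}}$ suppressed.
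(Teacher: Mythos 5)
Your proposal is correct and follows essentially the same route the paper intends: the paper gives no separate argument for this proposition but (as it states explicitly for the quantum finite analogue, Proposition \ref{Prop:generator_fin_qt}) relies on transcribing the proof of Proposition \ref{Prop:generator_aff}, which is exactly what you do — existence of a distinguished generating set from the cohomological computation behind Theorem \ref{Thm:main_2}, then downward induction on the grading using the base case $\WW^{fin}(\g,f)\cap S(\bigoplus_{i>0}\g(i))=0$ and the intersection identity, with $\CC_{\text{diff}}$ replaced by ordinary polynomial generation. No gaps.
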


Also,  classical finite W-superalgebras can be understood as finitizations of classical affine W-superalgebras via classical Zhu map. 


\begin{thm} \label{Thm:aff_fin}
Given Lie superalgebra $\g$ and a nilpotent $f\in \g$, there is a Poisson algebra isomorphism 
\[ Zhu_H(\WW(\g,f,k))\simeq \WW^{fin}(\g,f)\]
  where $H=L_{(1)}$ and $L\in \WW(\g,f,k)$ is the image of  $L_\g=\sum_{\alpha\in \bar{S}}\frac{1}{2k}u^\alpha u_\alpha +\partial x \in S(Cur_k(\g))$ in $\mathcal{V}(\g,f,k).$ 
\end{thm}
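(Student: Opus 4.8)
The plan is to prove the theorem through the Hamiltonian reduction descriptions, showing that the $H$-twisted Zhu functor intertwines the affine reduction $\WW(\g,f,k)=\mathcal{V}(\g,f,k)^{\ad_\lambda\n}$ with the finite reduction $\WW^{fin}(\g,f)=\mathcal{S}(\g,f)^{\ad\,\n}$ of Theorem \ref{Thm:main_2}. First I would compute the Zhu algebra of the ambient reduction $\mathcal{V}(\g,f,k)=S(\CC[\partial]\otimes\g)/I$. The Hamiltonian $H=L_{(1)}$ descends to $\mathcal{V}(\g,f,k)$ because every generator $m+(f|m)$ of $I$ (with $m\in\m\subset\bigoplus_{i\ge1}\g(i)$) is $H$-homogeneous: for $m\in\g(1)$ both $m$ and the scalar $(f|m)$ have conformal weight $0$, while for $m\in\g(i)$ with $i>1$ one has $(f|m)=0$ and $\Delta_m=1-i$. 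Since by Example \ref{Ex:Zhu_Alge}(2) the $H$-twisted Zhu algebra of $S(\CC[\partial]\otimes\g)$ is $S(\g)$ with bracket $\{a,b\}=[a,b]$, and since the twist $v_a=a-k(x,a)$ is trivial on $\m$ because $(x|m)=0$ for $m\in\m$, the Zhu projection $\pi$ sends each generator $m+(f|m)$ to itself. Hence $\pi(I)$ is exactly the finite ideal $\mathcal{I}^{fin}$ and
\[ Zhu_H(\mathcal{V}(\g,f,k))\;=\;S(\g)/\pi(I)\;=\;\mathcal{S}(\g,f)\]
as Poisson superalgebras, the $\ad\,\n$-action on $\mathcal{S}(\g,f)$ being the one induced by $\pi$.

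Next I would show that $\pi$ carries $\WW(\g,f,k)$ onto $\WW^{fin}(\g,f)$. In one direction, if $A\in\mathcal{V}(\g,f,k)^{\ad_\lambda\n}$ then $n_{(j)}A\in I$ for all $n\in\n$ and $j\ge0$, so the Zhu bracket formula gives $\{\overline n,\overline A\}=\sum_{j}\binom{\Delta_n-1}{j}\,\overline{n_{(j)}A}=0$; thus $\pi(A)$ is $\ad\,\n$-invariant and $\pi(\WW)\subseteq\WW^{fin}$. For the reverse inclusion and bijectivity I would use the free generators. By Proposition \ref{Prop:generator_aff} and Theorem \ref{Thm:main} there are free generators $w_\alpha=v_\alpha+a_\alpha$ of $\WW(\g,f,k)$, where $\{v_\alpha\}$ is a homogeneous basis of $\g_f$ with $v_\alpha\in\g(j_\alpha)$, $a_\alpha\in S(\CC[\partial]\otimes\bigoplus_{i>j_\alpha}\g(i))$ and conformal weight $\Delta_\alpha=1-j_\alpha$. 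Since $\pi(\partial a)=-\Delta_a\,\pi(a)$ for homogeneous $a$ and $\pi$ preserves the $\g(i)$-grading, $\pi(w_\alpha)=v_\alpha+\pi(a_\alpha)$ with $\pi(a_\alpha)\in S(\bigoplus_{i>j_\alpha}\g(i))$. By Proposition \ref{Prop:generator_fin} such a family is a set of free generators of $\WW^{fin}(\g,f)$, so $\pi$ maps $\WW$ onto $\WW^{fin}$; and because $\WW(\g,f,k)\cong S(\CC[\partial]\otimes J_{\g_f})$ collapses under the Zhu quotient precisely to the polynomial algebra $\CC[\pi(w_\alpha)]$, the induced map $Zhu_H(\WW(\g,f,k))\to\WW^{fin}(\g,f)$ is an isomorphism of commutative superalgebras.

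Finally, to upgrade this to a Poisson isomorphism I would invoke functoriality of the Zhu bracket: since $\pi$ preserves conformal weight and the $j$-products, one has $\pi(\{\overline a,\overline b\})=\{\pi(a),\pi(b)\}$ in $\mathcal{S}(\g,f)$, and restricting to $\WW^{fin}=\mathcal{S}(\g,f)^{\ad\,\n}$ yields a Poisson superalgebra isomorphism. I expect the main obstacle to be the middle step, namely establishing that $Zhu_H$ genuinely commutes with the Hamiltonian reduction: the Zhu algebra is a quotient while the reduction takes invariants of a quotient, and these operations do not commute on formal grounds. The resolution is the explicit comparison of free generating sets provided by Propositions \ref{Prop:generator_aff} and \ref{Prop:generator_fin}, together with the triviality of the twist $v_a=a-k(x,a)$ on $\m$, which forces $I$ to project exactly onto $\mathcal{I}^{fin}$. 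A secondary technical point to handle with care is that $\mathcal{V}(\g,f,k)$ is only a commutative superalgebra with an $\ad_\lambda\n$-action rather than a full PVA, so its Zhu quotient must be read as $S(\CC[\partial]\otimes\g)/(I+J_0)$, where $J_0$ is the Zhu ideal of the ambient current algebra.
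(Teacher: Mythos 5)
Your proposal is correct and follows essentially the same route as the paper's proof: both compute the $H$-twisted Zhu algebra of the ambient current PVA via Example \ref{Ex:Zhu_Alge}(2), use the vanishing $(x|m)=0$ on $\m$ (and $\n$) to identify $Zhu_H(I)$ with $\mathcal{I}^{fin}$ and to match the $\ad\,\n$-actions, deduce that invariants map to invariants, and then conclude surjectivity and bijectivity by comparing the free generating sets supplied by Propositions \ref{Prop:generator_aff} and \ref{Prop:generator_fin}. Your added remarks (well-definedness of the induced map on $Zhu_H(\WW(\g,f,k))$ and the reading of the quotient as $S(\CC[\partial]\otimes\g)/(I+J_0)$) only make explicit points the paper leaves implicit.
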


\begin{proof}
As we showed in Example \ref{Ex:Zhu_Alge}, the $H$-twisted Zhu algebra $Zhu_H(S(Cur_k(R)))$ of $S(Cur_k(R))=S(\CC[\partial]\otimes \g)$ endowed with the Poisson $\lambda$-bracket 
\[\{a_\lambda b\}=[a,b]+k\lambda(a|b), \qquad a,b\in \g, \quad k\in \CC\backslash\{0\}, \]
is the supersymmetric algebra generated by $\g$ endowed with the Poisson bracket 
  \[ \{v_a,v_b\}=v_{[a,b]}, \qquad v_a=a-k(x|a).\]
  
 If $\phi\in \WW(\g,f,k)$ then $\{n_\lambda \phi\} \in I[\lambda]$ and hence $\{ n, \overline{\phi}\} \in Zhu_H(I)$, for any $n\in\n$ and  $\overline{\phi}\in Zhu_H(\WW(\g,f,k)).$ Hence we have
  \[ \overline{\phi}\in ( Zhu_H(S(Cur_k(\g)))/Zhu_H(I))^{\ad \n}\]

 Observe the following facts: 
 \begin{enumerate}[(i)]
  \item  Suppose $v$ is an associative superalgebra isomorphism $v: S(\g)\to Zhu_H(S(Cur_k(\g)))$ defined by $a\mapsto v_a$ for $a\in \g$. Then $Zhu_H(S(Cur_k(\g)))$ is isomorphic to  $S(\g)$ as Poisson superalgebras via $v$. 
 \item Since $v_m=m$ for $m\in \m$, the ideal $\mathcal{I}^{fin}$ of $S(\g)$ such that $\mathcal{S}(\g,f)=S(\g)/\mathcal{I}^{fin}$ is $Zhu_H(I)$.
 \item Since $n=v_n$ for any $n\in \n,$ the adjoint action $\ad_n$ on $S(\g)$ is same as $\ad_{v_n}$ on $S(\g).$
 \end{enumerate}
 Hence  \[ Zhu_H(S(Cur_k(\g)))/Zhu_H(I))^{\ad \n}\simeq \WW^{fin}(\g,f)\]
and  $Zhu_H(\WW(\g,f,k))\hookrightarrow \WW^{fin}(\g,f).$ On the other hand, we obtain generating sets of $\WW^{fin}(\g,f)$ by taking the image of generators of $\WW(\g,f,k)$ in $Zhu_H(\WW(\g,f,k)).$ (See Proposition \ref{Prop:generator_aff} and Proposion \ref{Prop:generator_fin}.) Hence $\WW^{fin}(\g,f)\simeq Zhu_H(\WW(\g,f,k))$ as Poisson superalgebras via the map $v.$
 \end{proof}

By Example \ref{Ex:Zhu_Alge} and Theorem \ref{Thm:aff_fin}, we have 
\[\xymatrix{
S(Cur_k(\g)) \ar[d]^{Zhu_H} \ar[r] & \mathcal{V}(\g,f,k)  \supset  \WW(\g,f,k)  \ \ \ \ \   \ \ \ar[d]^{Zhu_H} \\
S(\g) \ar[r] & \mathcal{S}(\g,f)   \supset  \WW^{fin}(\g,f)\ \ \ \ \ \  \   }. \]

Moreover, by the following theorem, we can easily obtain $\WW^{fin}(\g,f)$  from $\WW(\g,f,k).$

\begin{thm} \cite{DK} \label{Thm:affine-finite}
Let $(R, \{ \, _\lambda \, \})$ be a  nonlinear Lie conformal algebra and $(S(R), \{ \, _\lambda \, \})$ be the Poisson vertex algebra generated by $R$. Then the $H$-twisted Zhu algebra $Zhu_H(S(R))$ is isomorphic to the supersymmetric algebra $S(R/\partial R)$ endowed with the bracket defined by  $\{ \bar{a} , \bar{b}\}= \overline{\{a_\lambda b\}|_{\lambda=0}},$ where $a,b\in R$ and $\bar{a}, \bar{b}$ are the images of $a,b$ in $R/\partial R.$
\end{thm}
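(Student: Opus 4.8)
The plan is to realize $Zhu_H(S(R))$ concretely as the quotient $S(R)/J$, where $J$ is the associative ideal generated by $(\partial+H)S(R)$, and then to compare it with $S(R/\partial R)$ in two stages: first as commutative superalgebras, then as Poisson superalgebras. Throughout I fix homogeneous representatives $\{u_i\}$ of a basis of $R/\partial R$ with $Hu_i=\Delta_i u_i$, so that (in the cases of interest, where $R$ is $\CC[\partial]$-free) $R$ has $\CC$-basis $\{\partial^n u_i : n\geq 0\}$ and $S(R)$ is the free supersymmetric algebra on these.

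First I would set up the associative isomorphism. Since $\Delta_{ab}=\Delta_a+\Delta_b$, the operator $D:=\partial+H$ is an even derivation of $S(R)$, so $J$ is already generated by the elements $D(\partial^n u_i)=\partial^{n+1}u_i+(\Delta_i+n)\partial^n u_i$. These relations give, by induction, $\overline{\partial^n u_i}=(-1)^n\Delta_i(\Delta_i+1)\cdots(\Delta_i+n-1)\,\overline{u_i}$ in $Zhu_H(S(R))$, so the algebra map $S(R/\partial R)\to Zhu_H(S(R))$, $\bar u_i\mapsto\overline{u_i}$, is surjective. For injectivity I would build the algebra homomorphism $\Phi:S(R)\to S(R/\partial R)$ determined on generators by $\partial^n u_i\mapsto(-1)^n\Delta_i\cdots(\Delta_i+n-1)\bar u_i$ and check $\Phi\circ D=0$ on each generator (a one-line cancellation); since $D$ is a derivation and $\Phi$ an algebra map, $\Phi(J)=0$, so $\Phi$ descends to a two-sided inverse. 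This yields the commutative superalgebra isomorphism $Zhu_H(S(R))\cong S(R/\partial R)$.

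Next I would verify that $\{\bar a,\bar b\}:=\overline{a_{(0)}b}$ is a well-defined Poisson structure on $S(R/\partial R)$: independence of the lift of $\bar a$ comes from sesquilinearity $(\partial a)_{(0)}b=0$, independence of the lift of $\bar b$ from $a_{(0)}\partial b=\partial(a_{(0)}b)$ (which dies in $S(R/\partial R)$), and skewsymmetry and Jacobi from the LCA axioms, the Leibniz rule extending it from $R/\partial R$ to all of $S(R/\partial R)$. The essential task is then to match this with the Zhu bracket. Using $\{\overline a,\overline b\}_{Zhu}=\sum_{j\geq 0}\binom{\Delta_a-1}{j}\overline{a_{(j)}b}$ together with $\Delta_{a_{(j)}b}=\Delta_a+\Delta_b-j-1$, the $j=0$ term $\overline{a_{(0)}b}$ is exactly the top-conformal-weight part, while the $j\geq 1$ terms have strictly smaller weight. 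Filtering $Zhu_H(S(R))$ decreasingly by conformal weight, the associated graded Poisson structure is precisely $(S(R/\partial R),\overline{a_{(0)}b})$, since $a_{(0)}b$ is homogeneous of weight $\Delta_a+\Delta_b-1$.

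The main obstacle is upgrading this associated-graded identification to an honest Poisson isomorphism, i.e. absorbing the lower-weight ``anomaly'' $\sum_{j\geq 1}\binom{\Delta_a-1}{j}\overline{a_{(j)}b}$. The strategy is to correct the naive assignment $\bar u_i\mapsto\overline{u_i}$ by terms of strictly lower conformal weight, built by induction on the weight so as to cancel the anomalous contributions stage by stage; this is the general-nonlinear-LCA analogue of the explicit substitution $v_a=a-k(x\,|\,a)$ in Example \ref{Ex:Zhu_Alge}(2), which converts the anomalous bracket $[u_\alpha,u_\beta]-j_\alpha k(u_\alpha|u_\beta)$ into the clean $v_{[a,b]}$. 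I expect the delicate point to be verifying that this inductive correction is unobstructed, the candidate obstruction at each weight being forced to vanish by skewsymmetry, the Jacobi identity, and the freeness of $S(R/\partial R)$. Note finally that when every generator has conformal weight $1$, all the binomial coefficients with $j\geq 1$ vanish, the anomaly is absent, and the naive map is already a Poisson isomorphism, recovering the statement with no twist.
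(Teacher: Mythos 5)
You should first note that the paper itself offers no proof of Theorem \ref{Thm:affine-finite}: it is quoted from \cite{DK}, and the only place where a statement of this kind is actually verified in the paper is Example \ref{Ex:Zhu_Alge}(2), where the case $R=Cur_k(\g)$ is settled by the explicit change of variables $v_a=a-k(x|a)$. Measured against that, your first three steps are correct and essentially complete: the associative superalgebra isomorphism $Zhu_H(S(R))\cong S(R/\partial R)$ via the derivation $D=\partial+H$ and the map $\Phi$, the well-definedness of the bracket $\overline{a_{(0)}b}$ on $S(R/\partial R)$, and the identification of this bracket with the associated graded of the Zhu bracket for the conformal-weight filtration.

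The genuine gap is your final step, which is the entire content of the theorem, and it cannot be closed in the way you propose: the inductive lower-weight correction you postulate does not exist under the hypotheses you allow yourself, so no argument from ``skewsymmetry, the Jacobi identity, and freeness'' can make the obstruction vanish. Concretely, let $\g=\CC a\oplus\CC b$ be the abelian even Lie algebra with invariant form $(a|b)=1$, $(a|a)=(b|b)=0$, graded by $j_a=\tfrac12$, $j_b=-\tfrac12$, and let $R=Cur_k(\g)$, so that $\{a_\lambda b\}=k\lambda$ and the Hamiltonian determined by $\Delta=1-j$ gives $\Delta_a=\tfrac12$, $\Delta_b=\tfrac32$; all axioms of a Hamiltonian operator hold, since the only nonzero product $a_{(1)}b=k$ has weight $0=\Delta_a+\Delta_b-2$. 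Then
\[
\{\bar a,\bar b\}_{Zhu}=\binom{\Delta_a-1}{1}\,k(a|b)=-\tfrac{k}{2}\neq 0,
\]
so $Zhu_H(S(R))$ is $\CC[\bar a,\bar b]$ with a nonzero constant bracket (a symplectic Poisson structure), while $S(R/\partial R)$ with the bracket $\overline{a_{(0)}b}=\overline{[a,b]}=0$ is the zero Poisson structure; these are not isomorphic, since a Poisson isomorphism onto an algebra with zero bracket forces the bracket to vanish identically. This is exactly the phenomenon your induction assumes away: a filtered Poisson algebra that is a nontrivial deformation of its associated graded. What makes the cases actually used in the paper work is extra structure your proof never invokes: for $R=Cur_k(\g)$ with $\g$ as in Section \ref{Sec:superW}, the conformal grading is inner, $j=\ad\, x$ with $x\in\g$, and that is precisely what produces the one-step correction $v_a=a-k(x|a)$ of Example \ref{Ex:Zhu_Alge}(2) (in the counterexample above no such $x$ exists, because $\g$ is abelian and the grading is an outer derivation). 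So a correct proof must either add such a hypothesis and make the correction explicit — equivalently, reduce to the weight-one Hamiltonian $L-\partial x$, where your own final remark applies — or else prove only the unconditional statement, namely $Zhu_H(S(R))\cong S(R/(\partial+H)R)$ endowed with the full bracket $\sum_{j\geq 0}\binom{\Delta_a-1}{j}\overline{a_{(j)}b}$ on generators (essentially what your first two steps already give), and treat the passage to the $0$-th product bracket as the separate, conditional claim that it really is.
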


\begin{cor}  \label{Cor:affine-finite}
Let us denote by  $\g_{<1}=\bigoplus_{i<1}\g(i)$ and $G= S(\CC[\partial]\otimes \g_{<1})$. Then the differential algebra $G$ is isomorphic to $\mathcal{V}(\g,f,k)$.  Consider the associative superalgebra homomorphism $p:G\to S(\g_{<1})$ such that 
\[  \partial^n a \mapsto  \delta_{n0}a, \quad \text{ for } a\in g,\, n\in \ZZ_{\geq 0}.\]
Then $p(\WW(\g,f,k))=\WW^{fin}(\g,f).$ The Poisson bracket on $\WW^{fin}(\g,f)$ is defined by 
\[\{p(w_1), p(w_2)\}=p(\{w_{1\, \lambda} w_2\}|_{\lambda=0}).\]
\end{cor}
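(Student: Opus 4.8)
The plan is to realize $p$ as the descent of one global algebra map and to reduce the whole statement to two compatibilities that can be verified on generators. First I would record the identification $G\simeq\mathcal{V}(\g,f,k)$. Writing $\g=\g_{<1}\oplus\m$, inside $\mathcal{V}(\g,f,k)=S(\CC[\partial]\otimes\g)/I$ each $m\in\m$ is identified with the constant $-(f|m)$, so $\partial^n m\equiv 0$ for $n\geq 1$; the surviving free generators are therefore exactly $\{\partial^n a\mid a\in\g_{<1},\ n\geq 0\}$, giving $\mathcal{V}(\g,f,k)\simeq S(\CC[\partial]\otimes\g_{<1})=G$ as differential algebras. I then introduce the associative superalgebra homomorphism $\bar p\colon S(\CC[\partial]\otimes\g)\to S(\g)$, $\partial^n a\mapsto\delta_{n0}a$, and check that the map $p$ of the statement is the one induced by $\bar p$ on the quotients.

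The technical core is two properties of $\bar p$, each proved on generators and propagated by the super Leibniz rule. First, $\bar p(I)=\mathcal{I}^{fin}$: since $\bar p$ kills $\partial^n(m+(f|m))$ for $n\geq 1$ and fixes $m+(f|m)$, the image of the ideal generated by $\{m+(f|m)\}$ in $S(\CC[\partial]\otimes\g)$ is precisely the ideal $\mathcal{I}^{fin}$ of $S(\g)$, so $\bar p$ descends to $p\colon\mathcal{V}(\g,f,k)\to\mathcal{S}(\g,f)$. Second, for all $v,w\in S(\CC[\partial]\otimes\g)$,
\[ \bar p\bigl(\{v_\lambda w\}\big|_{\lambda=0}\bigr)=\{\bar p(v),\bar p(w)\}, \]
the right-hand bracket being the Poisson bracket of $S(\g)$. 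On generators this is the identity $\bar p(\{\partial^m a\,_\lambda\,\partial^l b\}|_{\lambda=0})=\delta_{m0}\delta_{l0}[a,b]$: a derivative in the first slot contributes a factor $(-\lambda)^m$ that vanishes at $\lambda=0$, the central term $k\lambda(a|b)$ vanishes at $\lambda=0$, and a derivative in the second slot is annihilated by $\bar p$; the general case follows because both sides satisfy Leibniz.

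Granting these, the corollary is immediate. If $w\in\WW(\g,f,k)$ and $n\in\n$ then $\{n_\lambda w\}\in I[\lambda]$, so $\{n_\lambda w\}|_{\lambda=0}\in I$ and the two properties give $\{n,p(w)\}=p(\{n_\lambda w\}|_{\lambda=0})=0$ in $\mathcal{S}(\g,f)$; hence $p(w)\in\mathcal{S}(\g,f)^{\ad\n}=\WW^{fin}(\g,f)$, so $p(\WW(\g,f,k))\subseteq\WW^{fin}(\g,f)$. For the reverse inclusion I would use the generator descriptions: by Proposition \ref{Prop:generator_aff} there are free generators $w_\alpha=v_\alpha+a_\alpha$ of $\WW(\g,f,k)$ with $\{v_\alpha\}$ a basis of $\g_f$, $v_\alpha\in\g(j_\alpha)$, and $a_\alpha\in S(\CC[\partial]\otimes\bigoplus_{i>j_\alpha}\g(i))$. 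Then $p(w_\alpha)=v_\alpha+p(a_\alpha)$ with $p(a_\alpha)\in S(\bigoplus_{i>j_\alpha}\g(i))$, which by Proposition \ref{Prop:generator_fin} is a free generating set of $\WW^{fin}(\g,f)$. Since $\bar p$ annihilates every $\partial$-derivative, $p(\WW(\g,f,k))=\CC_{\text{diff}}[p(w_\alpha)]=\CC[p(w_\alpha)]=\WW^{fin}(\g,f)$, and the bracket formula is exactly the second property restricted to $\WW(\g,f,k)$.

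The step I expect to be the main obstacle is the bracket compatibility, i.e.\ verifying that evaluation at $\lambda=0$ followed by $\bar p$ reproduces the naive finite bracket: one must see that the two distinct annihilation mechanisms---the sesquilinearity factor $(-\lambda)^m$ killing first-slot derivatives at $\lambda=0$, and $\bar p$ killing second-slot derivatives---are consistent and dispose of the central term simultaneously. One should also be careful that the identity $\{n,p(w)\}=p(\{n_\lambda w\}|_{\lambda=0})$ and the final bracket formula are asserted only after descending to $\mathcal{S}(\g,f)$, where well-definedness rests on the $\ad\n$-invariance of the arguments (note that $\mathcal{I}^{fin}$ itself is not a Poisson ideal). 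This is the concrete, $p$-level shadow of the Zhu-algebra isomorphism of Theorem \ref{Thm:affine-finite} combined with Theorem \ref{Thm:aff_fin}.
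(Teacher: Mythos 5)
Your proposal is correct, and its skeleton matches the paper's: identify $\mathcal{V}(\g,f,k)\simeq G$, realize $p$ as the descent of the ``kill all derivatives'' map, establish that $p$ intertwines the $\lambda$-bracket at $\lambda=0$ with the finite Poisson bracket, and then get the equality $p(\WW(\g,f,k))=\WW^{fin}(\g,f)$ from the generator propositions. The difference is in how the bracket compatibility is obtained. The paper simply specializes Theorem \ref{Thm:affine-finite} (cited from [DK]) at $R=Cur_k(\g)$: the $H$-twisted Zhu algebra of $S(R)$ is $S(R/\partial R)$ with bracket $\{\bar a,\bar b\}=\overline{\{a_\lambda b\}|_{\lambda=0}}$, so $p$ is the Zhu quotient map, and the quotient identifications $S(R)/MS(R)\simeq\mathcal{V}(\g,f,k)$ and $S(R/\partial R)/MS(R/\partial R)\simeq S(\g_{<1})$, together with Proposition \ref{Prop:generator_fin}, finish the argument. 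You instead re-derive this special case by hand: you prove $\bar p(I)=\mathcal{I}^{fin}$ and the identity $\bar p(\{v_\lambda w\}|_{\lambda=0})=\{\bar p(v),\bar p(w)\}$ on generators, propagating by Leibniz. This is legitimate, and your two ``annihilation mechanisms'' are exactly right, with one point worth making explicit: to push the identity through the \emph{right} Leibniz rule one needs $\bar p\circ\partial=0$ on all of $S(\CC[\partial]\otimes\g)$ (not just on generators), since $\{ab_\lambda c\}$ produces terms $\{a_{\lambda+\partial}c\}_\to b$ in which $\partial$ acts on $b$; this follows from the Leibniz rule for $\partial$ and the definition of $\bar p$, but it is the hinge of your induction and deserves a line. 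What each route buys: the paper's proof is shorter and situates the corollary inside the twisted-Zhu formalism, consistently with Theorem \ref{Thm:aff_fin}; yours is self-contained, avoids the Zhu machinery entirely, and makes transparent why the formula $\{p(w_1),p(w_2)\}=p(\{w_{1\,\lambda}w_2\}|_{\lambda=0})$ holds and why it is only asserted on the $\ad\,\n$-invariant subalgebras (your caution that $\mathcal{I}^{fin}$ is not a Poisson ideal is exactly the right one, and is the same reason the paper restricts to invariants). Your use of Proposition \ref{Prop:generator_aff} to produce affine generators whose images satisfy the hypotheses of Proposition \ref{Prop:generator_fin} is a more detailed rendering of the paper's one-line appeal to Proposition \ref{Prop:generator_fin}, and both yield the surjectivity the same way.
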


\begin{proof}
If we take $R=Cur_k(\g)$ in Theorem \ref{Thm:affine-finite} then $Zhu_H(S(R))\simeq S(R/\partial R)$ for the Hamiltnonian operator $H$ in Theorem \ref{Thm:aff_fin}. Denote $M= \{m+(f|m)|m\in \m\}$ then $S(R)/M S(R)\simeq G\simeq \mathcal{V}(\g,f,k)$ as differential algebras and $S(R/\partial R)/MS(R/\partial R)\simeq S(\g_{<1})$ as associative superalgebras. Since $S(R/\partial R)\simeq Zhu_H(S(R))$ and the Poisson bracket on $S(R/\partial R)$ is induced from the $\lambda$-bracket on $S(R)$, it is easy to see that   $p(\WW(\g,f,k))\subset \WW^{fin}(\g,f).$ Moreover, using Proposition \ref{Prop:generator_fin}, we can see that $p(\WW(\g,f,k))= \WW^{fin}(\g,f).$ 

The Poisson $\lambda$ bracket on $\WW(\g,f,k)$ is induced from that of $S(R)$ and Poisson bracket on $\WW^{fin}(\g,f)$ is induced from that of $S(R/\partial R)$. Hence the Poisson bracket on $\WW^{fin}(\g,f)$ is \[\{p(w_1), p(w_2)\}=p(\{w_{1\, \lambda} w_2\}|_{\lambda=0})\] by Theorem  \ref{Thm:affine-finite}.
\end{proof}

\section{Generators of  classical W-superalgebras} \label{Sec:Example}
\subsection{A W-superalgebra associated to a minimal nilpotent $f$}\ \\

Let $f$ be an even minimal nilpotent in $\g$ and let  $\{z_\alpha| \alpha\in S(1/2)\}$ and $\{z^*_\alpha|\alpha\in S(1/2)\}$ be bases of $\g\left(\frac{1}{2}\right)$ such that $[z_\alpha, z^*_\beta]=-e$. Denote by $\ad_\lambda n (A)$ or by $\{ n\, _\lambda\, A \}$ the $\ad_\lambda n$-action on $A\in \mathcal{V}(\g, f,k).$ Then $\{z_\alpha\, _\lambda z^*_\beta\}=-e= \delta_{\alpha,\beta}$ and $\g_f=\g_f(0) \oplus \g(-\frac{1}{2}) \oplus \CC f,$ where $\g_f=\{ g\in \g |[f,g]=0\}$ and $\g_f(0)=\g_f\cap \g(0).$

\begin{prop} \label{Prop:generator}
The affine classical  $\WW$-superalgebra $\WW(\g,f,k)$ has free generators (as a differential algebra) 
\begin{equation}
\begin{aligned}
& \phi_v= v-\frac{1}{2}\sum_{\alpha\in S(1/2)} z^*_\alpha[z_\alpha, v], \\ 
& \phi_w=w-\sum_{\alpha\in S(1/2)} z^*_\alpha[z_\alpha, w]+\frac{1}{3} \sum_{\alpha, \beta\in S(1/2)} z^*_\alpha z^*_\beta[z_\beta, [z_\alpha, w]]-\sum_{\alpha\in S(1/2)} k(z_\alpha| w) \partial z^*_\alpha, \\
& \phi_f= [\text{image of $(-L_\g)$ in $\mathcal{V}(\g,f,k)$}] + k \frac{1}{2}\sum_{\alpha\in S(1/2)}(\partial z^*_\alpha) z_\alpha
\end{aligned}
\end{equation}
where $v\in \g_f(0)$ and $w\in \g(-\frac{1}{2}).$
\end{prop}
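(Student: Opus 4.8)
The strategy is to reduce the whole statement to Proposition \ref{Prop:generator_aff}. Using the decomposition $\g_f=\g_f(0)\oplus\g(-\tfrac12)\oplus\CC f$ to supply the prescribed basis of $\g_f$ (a basis $\{v\}$ of $\g_f(0)$, a basis $\{w\}$ of $\g(-\tfrac12)$, and $f$ itself), it suffices to verify two things for each candidate generator: that $\phi_v,\phi_w,\phi_f$ genuinely lie in $\WW(\g,f,k)=\mathcal V(\g,f,k)^{\ad_\lambda\n}$, and that each has the shape (basis vector) $+$ (a correction lying in $S(\CC[\partial]\otimes\bigoplus_{i>j}\g(i))$), where $j$ is the $\ad x$-weight of the corresponding basis vector.

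First I would dispose of the grading condition by inspecting $\ad x$-weights, recalling that for a minimal $f$ the grading is supported on $\g(\pm1),\g(\pm\tfrac12),\g(0)$ with $\g(1)=\CC e$, $\g(-1)=\CC f$, and that both $z_\alpha$ and $z^*_\alpha$ lie in $\g(\tfrac12)$. In $\phi_v$ the correction is assembled from $z^*_\alpha\in\g(\tfrac12)$ and $[z_\alpha,v]\in\g(\tfrac12)$, all of weight $\tfrac12>0=j_v$; in $\phi_w$ every correction term is built from factors in $\g(\tfrac12)$ and $\g(0)$, all of weight $\geq0>-\tfrac12=j_w$; and in $\phi_f$ the term $\tfrac k2\sum(\partial z^*_\alpha)z_\alpha$ sits in weight $1>-1=j_f$. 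For the image of $-L_\g$ I would decompose by $\ad x$-weight and use $e\equiv-1$ modulo $I$: the only weight $-1$ contribution comes from the dual pair $\{e,f\}$ inside $\sum\frac1{2k}u^\alpha u_\alpha$, producing a nonzero scalar multiple of $f$, whereas every other term has weight $>-1$. After rescaling $\phi_f$ if necessary, its lowest-weight part is $f$, so all three elements satisfy the hypothesis of Proposition \ref{Prop:generator_aff}.

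The substantive step is membership, i.e. $\ad_\lambda\n$-invariance. Here I would first reduce to testing only against $z\in\g(\tfrac12)$ (together with $e$): since $[\g(\tfrac12),\g(\tfrac12)]=\g(1)=\CC e$ exhausts $\n$ and $(\g(\tfrac12)|\g(\tfrac12))=0$, so that $\{z_\lambda z'\}=[z,z']\in\CC e$, the Jacobi identity $\{z_\lambda\{z'_\mu A\}\}-(-1)^{p(z)p(z')}\{z'_\mu\{z_\lambda A\}\}=\{[z,z']_{\lambda+\mu}A\}$ combined with $\ad_\lambda\n(I)\subset I[\lambda]$ shows that $\g(\tfrac12)$-invariance forces $e$-invariance whenever $\g(\tfrac12)\neq0$ (and when $\g(\tfrac12)=0$ one checks $e$-invariance directly). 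It then remains to compute $\{z_\lambda\phi_v\}$, $\{z_\lambda\phi_w\}$ and $\{z_\lambda\phi_f\}$ for $z\in\g(\tfrac12)$ using the current bracket $\{a_\lambda b\}=[a,b]+k\lambda(a|b)$ and the Leibniz rule, and to verify that each vanishes in $\mathcal V(\g,f,k)[\lambda]$ after reduction modulo $I$.

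The main obstacle is precisely this last verification for $\phi_w$ and $\phi_f$. For $\phi_w$ the quadratic and cubic corrections generate several contributions that must cancel against $\{z_\lambda w\}$; the cancellations hinge on $[z_\alpha,z^*_\beta]=-\delta_{\alpha\beta}e$ together with $e\equiv-1$, and every super-sign $(-1)^{p(\cdot)p(\cdot)}$ incurred in commuting the odd generators past one another must be tracked exactly, while the term $-\sum k(z_\alpha|w)\partial z^*_\alpha$ is present precisely to annihilate the order-$\lambda$ part. For $\phi_f$ the computation is heavier because the image of $L_\g$ is a sum over all of $\bar{S}$: I would organize $\{z_\lambda L_\g\}$ via the dual-basis identities and the relation $\{L_{\g\,\lambda}u_\alpha\}=\partial u_\alpha+(1-j_\alpha)\lambda u_\alpha-k\lambda^2(x|u_\alpha)$ recorded earlier, and then check that the surviving order-$\lambda$ and order-$\lambda^2$ terms are killed by the contribution of $\tfrac k2\sum(\partial z^*_\alpha)z_\alpha$. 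Once invariance and the grading condition are both established, Proposition \ref{Prop:generator_aff} delivers freeness immediately.
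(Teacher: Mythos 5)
Your proposal is correct and follows essentially the same route as the paper's proof: membership in $\WW(\g,f,k)$ is checked by computing $\{z_{\gamma\,\lambda}\,\phi\}=0$ only for $\gamma\in S(1/2)$, the computations being driven by $\{z_{\gamma\,\lambda}\, z^*_\alpha\}=\delta_{\gamma\alpha}$ (i.e.\ $[z_\gamma,z^*_\alpha]=-\delta_{\gamma\alpha}e$ together with $e\equiv -1$ modulo $I$), and freeness then follows from Proposition \ref{Prop:generator_aff} applied to the basis supplied by $\g_f=\g_f(0)\oplus\g\left(-\tfrac{1}{2}\right)\oplus\CC f$. The differences are only ones of emphasis: you spell out two steps the paper leaves implicit (the reduction of $\ad_\lambda\n$-invariance to $\g\left(\tfrac{1}{2}\right)$-invariance via the Jacobi identity, $[\g(\tfrac{1}{2}),\g(\tfrac{1}{2})]=\CC e$ and $\ad_\lambda\n(I)\subset I[\lambda]$, and the verification of the grading hypothesis of Proposition \ref{Prop:generator_aff}, including the rescaling needed because the weight $-1$ part of $\phi_f$ is a scalar multiple of $f$), whereas you only sketch the explicit sign-tracking cancellations that the paper writes out in full for $\phi_v$ and $\phi_w$ (for $\phi_f$ the paper, like you, defers to ``direct computations'').
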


\begin{proof}
It is enough to show that $\{z_{\gamma\, \lambda} \phi_v\}= \{z_{\gamma\, \lambda} \phi_w\}= \{z_{\gamma\, \lambda} \phi_f\}=0$ for any $\gamma\in S(1/2)$. Since $\{z_{\gamma \, \lambda} z^*_\alpha\}=\delta_{\alpha,\gamma}$ in $\mathcal{V}(\g,f,k)$, we have 
\begin{equation*}
\begin{aligned}
& \{\ z_{\gamma\ \lambda} \sum_{\alpha\in S(1/2)}z^*_\alpha [z_\alpha, v]\ \}=[z_\gamma,v]+\sum_{\alpha\in S(1/2)}(-1)^{p(\alpha)p(\gamma)} z^*_\alpha[z_\gamma,[z_\alpha, v]]\\
&=[z_\gamma,v]+\sum_{\alpha\in S(1/2)}(-1)^{p(\alpha)p(\gamma)} z^*_\alpha[[z_\gamma,z_\alpha], v]+\sum_{\alpha\in S(1/2)}z^*_\alpha[z_\alpha,[z_\gamma, v]],
\end{aligned}
\end{equation*}
where $p(\alpha)=p(z_\alpha), p(\beta)=p(z_\beta), p(\gamma)=p(z_\gamma)$.
Note $[e,v]=0$. Hence $[[z_\gamma,z_\alpha], v]$=0. Using the fact that $\sum_{\alpha\in S(1/2)}z^*_\alpha[z_\alpha,[z_\gamma, v]]=[z_\gamma,v]$, we conclude 
\begin{equation*}
 \{\ z_{\gamma\ \lambda} \sum_{\alpha\in S(1/2)}z^*_\alpha [z_\alpha, v]\ \}=2[z_\gamma,v]
\end{equation*}
and 
\begin{equation*}
\{z_{\gamma\, \lambda\, }\phi_v\}=\{ \ z_{\gamma\ \lambda} \ v-\frac{1}{2}  \sum_{\alpha\in S(1/2)}z^*_\alpha [z_\alpha, v]\ \}=0.
\end{equation*}
Now we shall show  $\{z_{\gamma\, \lambda} \phi_w\}=0.$ We have
\begin{equation} \label{Eqn:5.1_0708}
\begin{aligned}
& \{z_{\gamma\, \lambda }\, w\}=[z_\gamma, w]+k\lambda(z_\gamma | w),\\
& \{\, z_{\gamma\ \lambda }\, \sum_{\alpha\in S(1/2)} z^*_\alpha[z_\alpha, w] \, \}=[z_\gamma, w]+(-1)^{p(\alpha)p(\gamma)} z^*_\alpha[z_\gamma, [z_\alpha, w]], \\
\end{aligned}
\end{equation}
and
\begin{equation} \label{Eqn:5.2_0708}
\begin{aligned}
& \{ \, z_{\gamma\ \lambda}\, \sum_{\alpha, \beta\in S(1/2)} z^*_\alpha z^*_\beta[ z_\beta, [z_\alpha, w]]\, \} =\sum_{ \beta\in S(1/2)}z^*_\beta[z_\beta, [z_\gamma, w]] \\
 &+\sum_{ \alpha,\beta\in S(1/2)}\left((-1)^{p(\alpha)p(\gamma)} z^*_\alpha[z_\gamma, [z_\alpha, w]] +(-1)^{(p(\alpha)+p(\beta))p(\gamma)} z^*_\alpha z^*_\beta [z_\gamma,[z_\beta, [z_\alpha, w]]]\right).
\end{aligned}
\end{equation}
Since
\begin{equation}
\begin{aligned}
& \sum_{\beta\in S(1/2)}z^*_\beta[z_\beta, [z_\gamma,w]]= \sum_{\beta\in S(1/2)} \left( z^*_\beta[[z_\beta, z_\gamma],w]+ (-1)^{p(\beta)p(\gamma)}z^*_\beta[z_\gamma, [z_\beta,w]]\right); 
\end{aligned}
\end{equation}
and
\begin{equation}
\begin{aligned}
& \sum_{\alpha,\beta\in S(1/2)}(-1)^{(p(\alpha)+p(\beta))p(\gamma)} z^*_\alpha z^*_\beta [z_\gamma,[z_\beta, [z_\alpha, w]]] \\
& = -\sum_{\beta\in S(1/2)} z^*_\beta [[z_\beta, z_\gamma], w] +\sum_{\alpha \in S(1/2)}(-1)^{p(\alpha)p(\gamma)} z^*_\alpha[z_\gamma, [z_\alpha, w]],
\end{aligned}
\end{equation}
the equation (\ref{Eqn:5.2_0708}) implies 
\begin{equation} \label{Eqn:5.5_0708}
 \{ \, z_{\gamma\ \lambda}\, \sum_{\alpha, \beta\in S(1/2)} z^*_\alpha z^*_\beta[ z_\beta, [z_\alpha, w]]\, \} =3\sum_{\alpha \in S(1/2)}(-1)^{p(\alpha)p(\gamma)} z^*_\alpha[z_\gamma, [z_\alpha, w]].
\end{equation}
By (\ref{Eqn:5.1_0708}) and (\ref{Eqn:5.5_0708}), we have 
\begin{equation*}
\{z_{\gamma\, \lambda} \phi_w\}= \{z_{\gamma\, \lambda }\ w-\sum_{\alpha\in S(1/2)}z^*_\alpha [z_\alpha, w] +\frac{1}{3} \sum_{\alpha, \beta\in S(1/2)} z^*_\alpha z^*_\beta[z_\beta,[z_\alpha, w]]-\sum_{\alpha}k(z_\alpha| w) \partial z^*_\alpha\}=0.
\end{equation*}
Also, by direct computations, we can show that $\{z_{\gamma\ \lambda\ } \phi_f\}=0.$
\end{proof}

\begin{prop} \label{Prop:aff_bracket}
Let $f$ be a minimal nilpotent in $\g$. Let $v,v_1, v_2$ be elements in $\g_f(0)$ and $w,w_1, w_2$ be elements in $\g(-\frac{1}{2}).$
The $\lambda$-brackets between generators in Proposition \ref{Prop:generator} are as follows:
\begin{equation} \label{Eqn:5.7_0708}
\begin{aligned}
& \{ \phi_{v_1\, \lambda} \phi_{v_2}\}= \phi_{[v_1, v_2]} +k\lambda(v_1|v_2),\\
& \{\phi_{v\, \lambda}\phi_w\}=\phi_{[v,w]},\\
& \{\phi_{f\, \lambda}\phi_v\}=-(\partial+\lambda)\phi_v,\\
& \{\phi_{f\, \lambda}\phi_w\}=-\left(\partial+\frac{3}{2} \lambda\right)\phi_w\\
& \{\phi_{f\, \lambda}\phi_f\}=-(\partial+2\lambda)\phi_f,\\
& \{\phi_{w_1\, \lambda}\phi_{w_2}\}=(e|[w_1,w_2])(\phi_f+\sum_{i\in I}\frac{1}{2k}\phi_{a_i}\phi_{b_i}), \\
& +\sum_{\alpha\in S(1/2)}(-1)^{p(w_1)p(w_2)}\phi_{[w_2,z^*_\alpha]^\#}\phi_{[z_\alpha,w_1]^\#}-\sum_{\alpha, \beta\in S(1/2)}k^2\lambda^2(z_\alpha,w_1)(z_\beta,w_2)[z^*_\alpha,z^*_\beta],
\end{aligned}
\end{equation}
where $\{a_i | i\in I\}$ and $\{b_i | i\in I\}$ are bases of $\g_f(0)$ such that $(a_i | b_j)=\delta_{ij}$ and $g^\# \in \g_f(0)$ is the projection of $g\in \g$ onto $\g_f(0).$ 
\end{prop}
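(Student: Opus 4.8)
The plan is to compute each bracket directly inside $\mathcal{V}(\g,f,k)=S(\CC[\partial]\otimes\g)/I$, using the PVA axioms (sesquilinearity, skewsymmetry and the Leibniz rule) together with the base $\lambda$-bracket $\{a_\lambda b\}=[a,b]+k\lambda(a|b)$ for $a,b\in\g$, always reducing modulo the ideal $I$. Since each of $\phi_v,\phi_w,\phi_f$ lies in $\WW(\g,f,k)=\mathcal{V}(\g,f,k)^{\ad_\lambda\n}$, by Proposition \ref{Prop:Welldefined_affine} every bracket between them again lies in $\WW(\g,f,k)$, and is therefore a differential polynomial in the generators of Proposition \ref{Prop:generator}. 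Because $H=L_{(1)}$ is a Hamiltonian operator with $\Delta_{\phi_v}=1$, $\Delta_{\phi_w}=\tfrac32$, $\Delta_{\phi_f}=2$, the $\lambda^n$-coefficient of $\{\phi_{a\,\lambda}\phi_b\}$ has conformal weight $\Delta_a+\Delta_b-1-n$; this severely restricts the admissible monomials and lets me pin down each bracket by matching a single computed leading (smallest $p$-grading, in the filtration (\ref{Eqn:filt})) term against a unique combination of generators.

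For the brackets involving $\phi_f$ I would first argue that $\phi_f$ is, up to sign, the conformal vector of $\WW(\g,f,k)$: it is the image of $-L_\g$ plus the neutral-fermion correction $k\tfrac12\sum_\alpha(\partial z^*_\alpha)z_\alpha$, where $L_\g$ is the energy--momentum field of Example \ref{Ex:Zhu_Alge}. Consequently $\phi_{f\,(0)}=-\partial$ on $\WW(\g,f,k)$ and $\phi_{f\,(1)}$ acts as minus the Hamiltonian. Since $\phi_v,\phi_w,\phi_f$ are conformal-weight eigenvectors, sesquilinearity forces $\{\phi_{f\,\lambda}\phi_a\}=-(\partial+\Delta_a\lambda)\phi_a$, which gives the three middle identities at once; the absence of a central $\lambda^3$ term in $\{\phi_{f\,\lambda}\phi_f\}$ is exactly the classical normalization $\{L_\lambda L\}=(\partial+2\lambda)L$ recorded in Example \ref{Ex:Zhu_Alge}(1).

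The brackets $\{\phi_{v_1\,\lambda}\phi_{v_2}\}$ and $\{\phi_{v\,\lambda}\phi_w\}$ I would obtain by expanding the correction terms $-\tfrac12\sum_\alpha z^*_\alpha[z_\alpha,\cdot]$ and $-\sum_\alpha z^*_\alpha[z_\alpha,\cdot]$ with the Leibniz rule. Since $v_1,v_2\in\g(0)$, $w\in\g(-\tfrac12)$, and $[v_1,v_2]\in\g_f(0)$, $[v,w]\in\g(-\tfrac12)$, the leading parts are $[v_1,v_2]+k\lambda(v_1|v_2)$ and $[v,w]$ respectively; the key point is that the surviving $z^*_\alpha[z_\alpha,\cdot]$-contributions reassemble precisely into the correction terms of $\phi_{[v_1,v_2]}$ and $\phi_{[v,w]}$, using $[e,v]=0$ and the minimality relation $[z_\alpha,z^*_\beta]=-e$ exactly as in the proof of Proposition \ref{Prop:generator}. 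Contributions proportional to $\g(\tfrac12)\cdot\g(1)$ drop modulo $I$.

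The genuinely hard case is $\{\phi_{w_1\,\lambda}\phi_{w_2}\}$, and I expect this to be the main obstacle. Here, as $\g(-1)=\CC f$ for minimal $f$, the bracket $[w_1,w_2]$ lies in $\CC f$ and produces the $\phi_f$-term with coefficient $(e|[w_1,w_2])$; the Leibniz expansion of the cubic correction $\tfrac13\sum z^*_\alpha z^*_\beta[z_\beta,[z_\alpha,\cdot]]$ together with the quadratic one generates the bilinear generator terms $\sum_i\tfrac1{2k}\phi_{a_i}\phi_{b_i}$ and $\sum_\alpha\phi_{[w_2,z^*_\alpha]^\#}\phi_{[z_\alpha,w_1]^\#}$, while the $-\sum_\alpha k(z_\alpha|\cdot)\partial z^*_\alpha$ pieces, through sesquilinearity, yield the $\lambda^2$ term. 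The delicate bookkeeping is threefold: tracking the many parity signs $(-1)^{p(\alpha)p(\gamma)}$, projecting intermediate weight-one expressions onto $\g_f(0)$ via $g\mapsto g^\#$ (only these components survive as genuine generators), and checking that all non-generator debris cancels or lies in $I$. I would organize the computation by first working modulo the filtration (\ref{Eqn:filt}) to fix the $\phi_f$ and $\lambda^2$ coefficients, then resolving the weight-one quadratic part by comparison with the generators, and finally verifying $\ad_\lambda\n$-invariance of the result as a consistency cross-check.
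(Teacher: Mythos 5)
Your plan diverges from the paper's proof in two places, and both divergences contain genuine gaps. The more serious one is your treatment of the three brackets involving $\phi_f$. Even granting that $\phi_{f\,(0)}=-\partial$ and that $\phi_{f\,(1)}$ acts as minus the Hamiltonian on $\WW(\g,f,k)$ (claims that themselves need proof; they require, e.g., showing that the correction $\frac{k}{2}\sum_\alpha(\partial z^*_\alpha)z_\alpha$ has $\lambda$-bracket in $I[\lambda]$ with every element of $\WW(\g,f,k)$, which follows from skewsymmetry and $z_\alpha,z^*_\alpha\in\g(1/2)\subset\n$), this only pins down the coefficients of $\lambda^0$ and $\lambda^1$ in $\{\phi_{f\,\lambda}\phi_a\}$: sesquilinearity says nothing about the coefficients of $\lambda^n$ for $n\ge 2$. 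Your own weight count leaves room for constants in exactly the dangerous spots, since the $\lambda^2$-coefficient of $\{\phi_{f\,\lambda}\phi_v\}$ and the $\lambda^3$-coefficient of $\{\phi_{f\,\lambda}\phi_f\}$ both have conformal weight $2+1-2-1=0$ and $2+2-3-1=0$. These constants are genuinely threatened: by Example~\ref{Ex:Zhu_Alge}(2) one has $\{L_{\g\,\lambda}u_\alpha\}=(\partial+(1-j_\alpha)\lambda)u_\alpha-k\lambda^2(x|u_\alpha)$, and a short computation gives $\{L_{\g\,\lambda}L_\g\}=(\partial+2\lambda)L_\g-k(x|x)\lambda^3$, so appealing to the normalization $\{L_\lambda L\}=(\partial+2\lambda)L$ of Example~\ref{Ex:Zhu_Alge}(1) is not enough ($\phi_f$ involves $-L_\g=-L-\partial x$ plus the neutral correction, not $-L$). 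Ruling the central terms out requires precisely the computations the shortcut was meant to avoid: $(x|v)=\tfrac12(e|[f,v])=0$ for $v\in\g_f(0)$, and a cancellation between the $\partial x$-shift's central term and the contribution of the neutral correction. In short, ``$-\phi_f$ is a conformal vector for which the generators are primary'' is part of what is being proven, so assuming it is circular.

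The second gap is the assertion that contributions proportional to $\g(1/2)\cdot\g(1)$ ``drop modulo $I$''. They do not: modulo $I$ an element $m\in\g(1)$ equals the nonzero constant $-(f|m)$ (this is the whole point of the reduction; e.g.\ $e\equiv-1$, which is exactly what turns $\{z_{\alpha\,\lambda}z^*_\beta\}=-\delta_{\alpha\beta}e$ into $\delta_{\alpha\beta}$). Hence terms such as $[z^*_\alpha,z^*_\beta]\,[z_\alpha,v_1]\,[z_\beta,v_2]$ survive as products of $\g(1/2)$-elements and must be shown to reassemble into the corrections of $\phi_{[v_1,v_2]}$ and $\phi_{[v,w]}$; treating them as zero would falsify even the first bracket. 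The single missing idea that repairs and drastically shortens your whole plan is the one the paper uses: the projection $p:\mathcal{V}(\g,f,k)\to S(\CC[\partial]\otimes\g_f)$ along the decomposition $\bigoplus_{i\le 1/2}\g(i)=\g_f\oplus\CC x\oplus\g\big(\tfrac12\big)$ restricts to a differential algebra isomorphism $\iota$ on $\WW(\g,f,k)$, with $\iota(\phi_u)=u$ for $u\in\g_f(0)\oplus\g\big(-\tfrac12\big)$ and $\iota(\phi_f)=f-\sum_i\frac{1}{2k}a_ib_i$. Consequently every bracket equals $\iota_\lambda^{-1}$ applied to its $\g_f$-part: one computes only the terms of $\{\cdot\,_\lambda\,\cdot\}$ containing no $x$ and no $\g(1/2)$-variables, and all six identities (including the $\phi_f$ rows) follow uniformly, with no conformal-vector argument, no leading-term induction, and no invariance cross-check. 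Your ``match against a unique combination of generators via freeness'' is the germ of this, but without isolating $\iota$ as an isomorphism the uniqueness step is unjustified: matching only the smallest $p$-grading term cannot distinguish, say, $\phi_f$ from $\phi_f+c\sum_i\phi_{a_i}\phi_{b_i}$, whereas matching the full image under $\iota_\lambda$ does.
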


\begin{proof}
Let us consider the decomposition $\bigoplus_{i\leq \frac{1}{2}}\g(i)= \g_f\oplus \CC x \oplus \g\left( \frac{1}{2} \right)$ and the differential algebra homomorphism $p:\mathcal{V}(\g,f,k) \to S(\CC[\partial]\otimes \g_f)$ be induced by the projection map $\bigoplus_{i\leq \frac{1}{2}}\g(i)\to \g_f$. Then the map $\iota:=p|_{\WW(\g,f,k)} :\WW(\g,f,k) \to S(\CC[\partial]\otimes \g_f)$ is a differential algebra isomorphism defined by $\phi_u\to u$ for any $u\in \g_f(0)\oplus\g(-\frac{1}{2})$ and $\phi_f\mapsto f-\sum_{i\in I} \frac{1}{2k}a_i b_i$. 
The map $\iota$ naturally induces the one to one correspondence $\iota_\lambda: \WW(\g,f,k)[\lambda] \to S(\CC[\partial]\otimes \g_f)[\lambda].$

It is not hard  to see 
 $\iota_\lambda (\{ \phi_{v_1\, \lambda} \phi_{v_2}\})= [v_1, v_2] +k\lambda(v_1|v_2)$
 and $\iota_{\lambda}^{-1} ([v_1, v_2] +k\lambda(v_1|v_2))=\phi_{ [v_1, v_2]} +k\lambda(v_1|v_2).$
 Hence  $\{ \phi_{v_1\, \lambda} \phi_{v_2}\}=\phi_{ [v_1, v_2]} +k\lambda(v_1|v_2).$ All the equations in   (\ref{Eqn:5.7_0708}) can be proved in similar ways. So we shall show the last one which is most complicate. By taking terms which are not in $\ker \iota_\lambda$, we get 
 \begin{equation} \label{Eqn:5.8_0708}
 \begin{aligned}
 &  \{\phi_{w_1\, \lambda}\phi_{w_2}\}  =  \iota_\lambda^{-1}  \left([w_1,w_2]  -\sum_{\alpha\in S(1/2)}[w_1,z^*_\alpha]^\#[z_\alpha,w_2]^\# \right. \\
 & +\sum_{\alpha\in S(1/2)}(-1)^{p(w_2)(p(\alpha)+p(w_1)}[z^*_\alpha, w_2]^\#[z_\alpha, w_1]^\# \\
& + \sum_{\alpha,\beta\in S(1/2)} (-1)^{p(\beta)(p(\alpha)+p(w_1))}[z^*_\alpha, z^*_\beta][z_\alpha, w_1]^\#[z_\beta, w_2]^\#\\
& +\left. k^2\sum_{\alpha,\beta\in S(1/2)}(z_\alpha| w_1)(z_\beta|w_2)\{\partial z^*_{\alpha\ \lambda} \partial z^*_\beta\}\right).
 \end{aligned}
 \end{equation}
We have 
\begin{equation}\label{Eqn:5.9_0708}
\begin{aligned}
&  \iota_\lambda\left( (e|[w_1,w_2])(\phi_f+\sum_{i\in I}\frac{1}{2k}\phi_{a_i}\phi_{b_i})\right)= [w_1, w_2]; \\
& \iota_\lambda\left(-k^2\lambda^2(z_\alpha|w_1)(z_\beta|w_2)[z^*_\alpha,z^*_\beta]\  \right)=  k^2(z_\alpha| w_1)(z_\beta|w_2)\{\partial z^*_{\alpha\ \lambda} \partial z^*_\beta\}.
\end{aligned}
\end{equation}
Also, we have 
\begin{equation}\label{Eqn:5.10_0708}
  \sum_{\alpha,\beta\in S(1/2)} (-1)^{p(\beta)(p(\alpha)+p(w_1))}[z^*_\alpha, z^*_\beta][z_\alpha, w_1][z_\beta, w_2]=\sum_{\alpha \in S(1/2)} [w_1 , z^*_\alpha][z_\alpha, w_2]
  \end{equation}
 and 
\begin{equation}\label{Eqn:5.11_0708}
\iota_\lambda\left( (-1)^{p(w_1)p(w_2)}\phi_{[w_2,z^*_\alpha]^\#}\phi_{[z_\alpha,w_1]^\#}\right)=(-1)^{p(w_2)(p(\alpha)+p(w_1))}[z^*_\alpha, w_2]^\#[z_\alpha, w_1]^\# .
\end{equation}
By equations (\ref{Eqn:5.8_0708}), (\ref{Eqn:5.9_0708}), (\ref{Eqn:5.10_0708}), (\ref{Eqn:5.11_0708}), we proved our assertion.  
 \end{proof}
 
 Analogously, we can obtain a generating set (as an associative superalgebra) of a finite W-superalgebra associated to a minimal nilpotent and commutators between them. 

\begin{prop} \label{Prop:AA_gen}
Let $f$ be a minimal nilpotent in $\g$. Suppose $v\in \g_f(0)$ and $w\in \g(-\frac{1}{2})$ then the followings are free generators of $W^{fin}(\g,f)$:
\begin{equation}
\begin{aligned}
& \Psi_v= v-\frac{1}{2}\sum_{\alpha\in S(1/2)} z^*_\alpha[z_\alpha, v], \\ 
& \Psi_w=w-\sum_{\alpha\in S(1/2)} z^*_\alpha[z_\alpha, w] \\
& \hskip 10mm+\frac{1}{3} \sum_{\alpha, \beta\in S(1/2)}  \left(\, z^*_\alpha z^*_\beta [z_\beta, [z_\alpha, w]]+(f|[z^*_\alpha, z^*_\beta])[z_\beta,[z_\alpha, w]]\, \right), \\
& \Psi_f= [\text{image of $-\sum_{\alpha \in \bar{S}}u_\alpha u^\alpha$ in $\mathcal{U}(\g,f)$}],   
\end{aligned}
\end{equation}
where $\{u_\alpha\}_{\alpha \in \bar{S}}$ and $\{u_\alpha\}_{\alpha \in \bar{S}}$ are dual bases of $\g$ with respect to $(\, |\, )$. 
\end{prop}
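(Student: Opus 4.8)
The plan is to deduce the statement from the free-generation criterion for quantum finite $W$-superalgebras, Proposition \ref{Prop:generator_fin_qt}. Since $f$ is minimal we have $\g_f = \g_f(0)\oplus\g(-\tfrac12)\oplus\CC f$, so as $v$ ranges over a basis of $\g_f(0)$, $w$ over a basis of $\g(-\tfrac12)$, and together with $f$, the elements $v,w,f$ form a basis of $\g_f$. By Proposition \ref{Prop:generator_fin_qt} it therefore suffices to verify two things: (a) each of $\Psi_v,\Psi_w,\Psi_f$ lies in $W^{fin}(\g,f)=\mathcal{U}(\g,f)^{\ad\,\n}$, and (b) writing each generator as its leading term ($v$, $w$, $f$ respectively) plus a correction, the correction has strictly higher $p$-grading than the leading term.

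For the membership (a), I would first note that it is enough to check $\ad\, z_\gamma(\Psi)=0$ for all $\gamma\in S(1/2)$: indeed $\{z_\gamma\}_{\gamma\in S(1/2)}$ spans $\g(\tfrac12)$, and since $e\in[\g(\tfrac12),\g(\tfrac12)]$ (as $[z_\alpha,z^*_\beta]=-e$) while $\ad\colon\g\to\mathrm{Der}(\mathcal{U}(\g,f))$ is a Lie superalgebra homomorphism for the commutator, vanishing of $\ad\,z_\gamma$ on $\Psi$ forces $\ad\,e$, and hence all of $\ad\,\n$, to annihilate $\Psi$. For $\Psi_v$ and $\Psi_w$ the verification $[z_\gamma,\Psi_v]=[z_\gamma,\Psi_w]=0$ in $\mathcal{U}(\g,f)$ runs parallel to the $\lambda$-bracket computation in the proof of Proposition \ref{Prop:generator}, using $[z_\gamma,z^*_\alpha]=\delta_{\alpha\gamma}$ in $\mathcal{U}(\g,f)$ and $[e,v]=0$; the only new feature is that reordering the monomials $z^*_\alpha z^*_\beta$ inside $U(\g)$ produces extra commutator terms $[z^*_\alpha,z^*_\beta]\in\CC e$, and it is precisely these terms that the quantum correction $\tfrac13\sum_{\alpha,\beta}(f|[z^*_\alpha,z^*_\beta])[z_\beta,[z_\alpha,w]]$ in $\Psi_w$ is designed to cancel. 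For $\Psi_f$, membership is immediate: $\sum_\alpha u_\alpha u^\alpha$ is (up to the appropriate super-sign convention) the quadratic Casimir of $\g$ attached to the invariant form $(\,|\,)$, hence central in $U(\g)$, so its image commutes with all of $\n$ in $\mathcal{U}(\g,f)$; alternatively one checks $[z_\gamma,\Psi_f]=0$ directly as in the $\phi_f$ computation of Proposition \ref{Prop:generator}.

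For the filtration condition (b), I would use that the $p$-grading of a product of elements $a_1,\dots,a_r$ with $a_j\in\g(i_j)$ equals $\sum_j(i_j-\tfrac12)$, so it increases with the $\ad x$-eigenvalue. Then one checks term by term: the corrections to $\Psi_v$ (leading $p$-grading $-\tfrac12$) and to $\Psi_w$ (leading $p$-grading $-1$) are built from factors lying in $\g(\tfrac12)$ or $\g(0)$ and strictly exceed the respective leading $p$-gradings. For $\Psi_f$ one reduces $\sum_\alpha u_\alpha u^\alpha$ modulo $I^{fin}$ (where $e\equiv -1$): the summands pairing $\g(1)$ with $\g(-1)$ collapse to a nonzero multiple of $f$, which carries the minimal $p$-grading $-\tfrac32$, while all remaining summands (from $\g(\tfrac12)\otimes\g(-\tfrac12)$ and $\g(0)\otimes\g(0)$, together with the lower-order corrections produced by reordering in $U(\g)$) have $p$-grading $>-\tfrac32$. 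Thus $\Psi_f=c\,f+(\text{higher }p\text{-grading})$ with $c\neq0$, and after rescaling this meets the hypothesis of Proposition \ref{Prop:generator_fin_qt}, completing the argument.

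The hard part will be the bookkeeping in (a) for $\Psi_w$: carrying out the noncommutative reordering in $U(\g)$ with the correct super-signs and confirming that the cubic $z^*_\alpha z^*_\beta$-terms together with the added correction term combine, via the $\g(\tfrac12)$-Jacobi identities already used in Proposition \ref{Prop:generator}, to give exactly zero. The identification of the leading term of the Casimir image in (b) is secondary but also requires care with the $e\equiv-1$ reduction.
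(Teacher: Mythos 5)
Your proposal is correct and follows essentially the same route as the paper: the paper's proof likewise reduces the whole statement to checking $[z_\gamma,\Psi_v]=[z_\gamma,\Psi_w]=[z_\gamma,\Psi_f]=0$ for $\gamma\in S(1/2)$ and then carries out, for the hardest case $\Psi_w$, exactly the reordering computation you describe, in which the terms $[z^*_\alpha,z^*_\beta]\in\CC e$ produced by noncommutativity of $U(\g)$ cancel against the added correction term. The only differences are matters you make explicit that the paper leaves implicit: the justification that $\g(\tfrac12)$-invariance forces $\ad\,\n$-invariance, the appeal to Proposition \ref{Prop:generator_fin_qt} together with the $p$-grading check of the leading terms, and the Casimir-centrality shortcut for $\Psi_f$ (which the paper covers only by asserting the analogous direct computation).
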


\begin{proof}
It is enough to show that $[z_\gamma, \Psi_v]=[z_\gamma, \Psi_w]=[z_\gamma, \Psi_f]=0$ for any $\gamma \in S(1/2).$ Here we show the most complicate case: $[z_\gamma, \Phi_w]=0.$ We have 
\begin{equation} \label{Eqn:5.13_0726}
\begin{aligned}
&  \sum_{\alpha \in S(1/2)}   [z_\gamma, z^*_\alpha[z_\alpha, w]]= [z_\gamma, w]+\sum_{\alpha \in S(1/2)}(-1)^{p(\alpha)p(\gamma)}[z_\gamma,[z_\alpha, w]] \, ; \\
& \sum_{\alpha, \beta \in S(1/2)}   [z_\gamma, z^*_\alpha z^*_\beta[z_\beta, [z_\alpha, w]]] \\
&=\sum_{\beta \in S(1/2)} z^*_\beta[[z_\beta, z_\gamma],w]  +\sum_{\alpha \in S(1/2)}(-1)^{p(\alpha)p(\gamma)} z^*_\alpha[z_\gamma,[z_\alpha, w]] \\
&  +\sum_{\alpha, \beta \in S(1/2)}(-1)^{p(\gamma)(p(\alpha)+p(\beta))}z^*_\alpha z^*_\beta[z_\gamma, [z_\beta, [z_\alpha, w]]].
\end{aligned}
\end{equation}
The first term and the third term of the second equation in (\ref{Eqn:5.13_0726}) are 
\begin{equation} \label{Eqn:5.14_0726}
\sum_{\beta\in S(1/2)} z^*_\beta[z_\beta,[z_\gamma, w]]=\sum_{\beta\in S(1/2)}z^*_\beta[[z_\beta, z_\gamma], w]+(-1)^{p(\beta)p(\gamma)}z^*_\beta[z_\gamma, [z_\beta, w]] ;
\end{equation}
\begin{equation} \label{Eqn:5.15_0726}
\begin{aligned}
& \sum_{\alpha, \beta\in S(1/2)}(-1)^{p(\gamma)(p(\alpha)+p(\beta))}z^*_\alpha z^*_\beta[z_\gamma,[z_\beta, [z_\alpha, w]]] \\
&= \sum_{\alpha \in S(1/2)} z^*_\alpha[z_\gamma, [z_\alpha , w]]-\sum_{\alpha, \beta\in S(1/2)} (-1)^{p(\gamma)(p(\alpha)+p(\beta))}z^*_\alpha z^*_\beta[[z_\alpha, w], [z_\gamma, z_\beta]] \\
&= \sum_{\alpha \in S(1/2)} z^*_\alpha[z_\gamma, [z_\alpha , w]]-  \sum_{\beta \in S(1/2)}z^*_\beta[[z_\beta, z_\gamma], w] \\
&-\sum_{\alpha, \beta\in S(1/2)} [z^*_\alpha, z^*_\beta][[z_\alpha, w], [z_\gamma, z_\beta]].
\end{aligned}
\end{equation}
Here, from second line to third line, we used $[z_\alpha, z_\beta]=z_\alpha z_\beta-(-1)^{p(\alpha)}z_\beta z_\alpha.$
Also, we have
\begin{equation} \label{Eqn:5.16_0726}
\sum_{\alpha, \beta\in S(1/2)} [z^*_\alpha, z^*_\beta][[z_\alpha, w], [z_\gamma, z_\beta]]=-\sum_{\alpha, \beta\in S(1/2)} [z^*_\alpha, z^*_\beta][z_\gamma,[z_\beta, [z_\alpha, w]]].
\end{equation}
By (\ref{Eqn:5.13_0726}),(\ref{Eqn:5.14_0726}),(\ref{Eqn:5.15_0726}),(\ref{Eqn:5.16_0726}), we conclude that $[z_\gamma, \Psi_w]=0$ for any $\gamma\in S(1/2).$
\end{proof}

\begin{prop}
Let $f$ be a minimal nilpotent in $\g$. Let $v,v_1, v_2$ be elements in $\g_f(0)$ and $w,w_1, w_2$ be elements in $\g(-\frac{1}{2}).$ Let $z_{w}=\sum_{\alpha, \beta\in S(1/2)}(f|[z^*_\alpha, z^*_\beta])[z_\beta, [z_\alpha, w]].$
The commutators between generators in Proposition \ref{Prop:AA_gen} are as follows:
\begin{equation} \label{Eqn:4.17_0727}
\begin{aligned}
& [ \Psi_{v_1}, \Psi_{v_2} ]= \Psi_{[v_1, v_2]},\\
& [ \Psi_v, \Psi_w ]=\Psi_{[v,w]},\\
& [ \Psi_f, \WW^{fin}(\g,f) ]=0,\\
& [ \Psi_{w_1},\Psi_{w_2} ]=(e|[w_1,w_2])(\Psi_f+\sum_{i\in I}\frac{1}{2k}\Psi_{a_i}\Psi_{b_i}), \\
& +\sum_{\alpha\in S(1/2)}(-1)^{p(w_1)p(w_2)}\Psi_{[w_2,z^*_\alpha]^\#}\Psi_{[z_\alpha,w_1]^\#}-(f|[z_{w_1},z_{w_2}]),
\end{aligned}
\end{equation}
where $\{a_i | i\in I\}$ and $\{b_i | i\in I\}$ are bases of $\g_f(0)$ such that $(a_i | b_j)=\delta_{ij}$ and $g^\# \in \g_f(0)$ is the projection of $g\in \g$ onto $\g_f(0).$ 
\end{prop}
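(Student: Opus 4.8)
The plan is to mirror the proof of Proposition \ref{Prop:aff_bracket}, replacing the Poisson vertex algebra computation by an associative computation inside $\mathcal{U}(\g,f)$. Since $f$ is minimal we have the decomposition $\bigoplus_{i\leq \frac{1}{2}}\g(i)=\g_f\oplus \CC x\oplus \g\left(\frac{1}{2}\right)$, and the projection onto $\g_f$ induces an associative superalgebra homomorphism $p:\mathcal{U}(\g,f)\to U(\g_f)$. First I would check that $\iota:=p|_{W^{fin}(\g,f)}$ is an associative superalgebra isomorphism carrying $\Psi_u\mapsto u$ for $u\in \g_f(0)\oplus\g(-\frac{1}{2})$ and $\Psi_f\mapsto f-\sum_{i\in I}\frac{1}{2k}a_ib_i$; this is the finite analogue of the isomorphism $\iota$ used in the affine case, and its existence follows from Theorem \ref{Thm:main} together with the filtration/generator argument of Proposition \ref{Prop:generator_fin_qt}. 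Throughout I would use that $e\in\m$ becomes a scalar in $\mathcal{U}(\g,f)$, so that the relation $[z_\alpha,z^*_\beta]=-e$ descends to $[z_\alpha,z^*_\beta]=\delta_{\alpha\beta}$ exactly as in $\mathcal{V}(\g,f,k)$.

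Once $\iota$ is available, each commutator is computed by applying $\iota$ to both sides, evaluating in $\mathcal{U}(\g,f)$, discarding every term lying in $\ker p$ (those carrying a factor from $\CC x\oplus \g(\frac12)$ or from $\n$ modulo $I^{fin}$), and then applying $\iota^{-1}$. The first three brackets $[\Psi_{v_1},\Psi_{v_2}]$, $[\Psi_v,\Psi_w]$ and $[\Psi_f,\WW^{fin}(\g,f)]$ are immediate: $\iota$ sends the first two to the Lie brackets $[v_1,v_2]$ and $[v,w]$ of elements of $\g_f$, while $\Psi_f$ is central because its $\iota$-image is, up to the quadratic correction $\sum_{i\in I}\frac{1}{2k}a_ib_i$, annihilated by $\ad\g_f$. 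The content of the proposition is the last commutator $[\Psi_{w_1},\Psi_{w_2}]$, which I would expand exactly along the lines of (\ref{Eqn:5.8_0708})--(\ref{Eqn:5.11_0708}), splitting it into the $(e|[w_1,w_2])$-term, the cross term $\sum_{\alpha}(-1)^{p(w_1)p(w_2)}\Psi_{[w_2,z^*_\alpha]^\#}\Psi_{[z_\alpha,w_1]^\#}$, and a remainder coming from commuting pairs $z^*_\alpha z^*_\beta$.

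The main obstacle, and the only genuine departure from the classical affine computation, is noncommutativity. In the affine Poisson vertex algebra the generators supercommute, so products such as $z^*_\alpha z^*_\beta$ reorder freely and the second-order data appear only through the $\lambda^2$-term $-\sum_{\alpha,\beta}k^2\lambda^2(z_\alpha|w_1)(z_\beta|w_2)[z^*_\alpha,z^*_\beta]$; in $U(\g_f)$ one instead picks up $[z^*_\alpha,z^*_\beta]=z^*_\alpha z^*_\beta-(-1)^{p(\alpha)p(\beta)}z^*_\beta z^*_\alpha$ each time two such factors are transposed. These ordering corrections are precisely what forced the extra summand into $\Psi_w$ in Proposition \ref{Prop:AA_gen}, and I expect them to reassemble into the constant $-(f|[z_{w_1},z_{w_2}])$ with $z_w=\sum_{\alpha,\beta\in S(1/2)}(f|[z^*_\alpha,z^*_\beta])[z_\beta,[z_\alpha,w]]$. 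The key step is therefore to track these corrections carefully, using the identity $[z_\alpha,z_\beta]=z_\alpha z_\beta-(-1)^{p(\alpha)}z_\beta z_\alpha$ as in (\ref{Eqn:5.15_0726}) together with $[z_\alpha,z^*_\beta]=\delta_{\alpha\beta}$, and to verify that after collecting all surviving terms the right-hand side of (\ref{Eqn:4.17_0727}) is obtained. The one truly new verification is the identification of the affine $\lambda^2$-correction with the finite scalar $-(f|[z_{w_1},z_{w_2}])$; all remaining bookkeeping should parallel the affine proof term by term.
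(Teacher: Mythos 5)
Your overall route is the one the paper intends---its proof of this proposition is literally a one-line reference to the argument of Proposition \ref{Prop:aff_bracket}---but the foundation of your write-up contains a genuine flaw. In the quantum finite setting there is \emph{no} associative superalgebra homomorphism $p:\mathcal{U}(\g,f)\to U(\g_f)$ induced by the projection $\bigoplus_{i\leq 1/2}\g(i)\to\g_f$: since $e\equiv -1$ in $\mathcal{U}(\g,f)$, one has $[z_\alpha,z^*_\beta]=\delta_{\alpha\beta}$ there, so the two-sided ideal generated by $\g\left(\frac{1}{2}\right)$ contains $1$, and any unital algebra map killing $\g\left(\frac{1}{2}\right)$ kills everything. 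Likewise $\iota:=p|_{W^{fin}(\g,f)}$ cannot be an associative superalgebra isomorphism sending $\Psi_u\mapsto u$: such a map would transport commutators, forcing $[\Psi_{w_1},\Psi_{w_2}]$ to correspond to $[w_1,w_2]=(e|[w_1,w_2])f$ in $U(\g_f)$, which contradicts the quadratic terms $\Psi_{a_i}\Psi_{b_i}$ in the very formula you are proving. The affine case is different precisely because there $\iota$ preserves only the supercommutative product and \emph{not} the $\lambda$-bracket; in the finite quantum case the associative product is the entire structure, so it cannot be preserved.

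The repair, which is what the paper's appeal to Proposition \ref{Prop:aff_bracket} implicitly requires, is to demote $\iota$ to a linear map: fix a PBW ordering of $\mathcal{U}(\g,f)$ with all $x$- and $\g\left(\frac{1}{2}\right)$-factors collected on one side, and let $\iota$ be the resulting linear isomorphism of $W^{fin}(\g,f)$ onto its image (injectivity follows from Proposition \ref{Prop:generator_fin_qt} by a triangularity argument). Then ``discarding terms in $\ker p$'' is legitimate only \emph{after} full reordering into PBW form, and---crucially---$\iota$ of a product is not the product of the $\iota$-images, so $\iota(\Psi_f)$, $\iota(\Psi_{a_i}\Psi_{b_i})$ and $\iota\bigl(\Psi_{[w_2,z^*_\alpha]^\#}\Psi_{[z_\alpha,w_1]^\#}\bigr)$ must themselves be computed with reordering corrections before they can be matched against the projection of $[\Psi_{w_1},\Psi_{w_2}]$. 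You do anticipate such corrections for the last bracket, and your identification of the new scalar with reordering of $z^*_\alpha z^*_\beta$ pairs via $[z^*_\alpha,z^*_\beta]=(f|[z^*_\alpha,z^*_\beta])\,e\equiv-(f|[z^*_\alpha,z^*_\beta])$ is the right mechanism; but the first three brackets are then not ``immediate'' by multiplicativity of $\iota$---they too must be expanded in $\mathcal{U}(\g,f)$ and one must check that every correction term retains an $x$- or $\g\left(\frac{1}{2}\right)$-factor after reordering, hence dies under the projection. With $\iota$ treated as a linear isomorphism and these checks supplied, your outline becomes a correct execution of the paper's intended argument.
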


\begin{proof}
The argument in the proof of Proposition \ref{Prop:aff_bracket} works.
\end{proof}

Also for finite classical W-superalgebras, we can obtain similar propositions.  

\begin{prop} \label{Prop:fin_generator}
Let $f$ be a minimal nilpotent in $\g$. Suppose $v\in \g_f(0)$ and $w\in \g(-\frac{1}{2})$ then the followings are free generators of $\WW^{fin}(\g,f)$:
\begin{equation}
\begin{aligned}
& \psi_v= v-\frac{1}{2}\sum_{\alpha\in S(1/2)} z^*_\alpha[z_\alpha, v], \\ 
& \psi_w=w-\sum_{\alpha\in S(1/2)} z^*_\alpha[z_\alpha, w]+\frac{1}{3} \sum_{\alpha, \beta\in S(1/2)} z^*_\alpha z^*_\beta[z_\beta, [z_\alpha, w]], \\
& \psi_f= [\text{image of $-\sum_{\alpha \in \bar{S}}u_\alpha u^\alpha$ in $\mathcal{S}(\g,f)$}],   
\end{aligned}
\end{equation}
where $\{u_\alpha\}_{\alpha \in \bar{S}}$ and $\{u_\alpha\}_{\alpha \in \bar{S}}$ are dual bases of $\g$ with respect to $(\, |\, ).$
\end{prop}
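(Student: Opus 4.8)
The plan is to deduce Proposition~\ref{Prop:fin_generator} from the affine computation of Proposition~\ref{Prop:generator} together with the finitization map of Corollary~\ref{Cor:affine-finite}. Recall that Corollary~\ref{Cor:affine-finite} provides a surjective Poisson-algebra homomorphism $p\colon \WW(\g,f,k)\to \WW^{fin}(\g,f)$ with $p(\partial^n a)=\delta_{n0}a$. First I would apply $p$ to the free generators $\phi_v,\phi_w,\phi_f$ of Proposition~\ref{Prop:generator}. Since $p$ annihilates every term containing a $\partial$, one gets $p(\phi_v)=\psi_v$ and $p(\phi_w)=\psi_w$ at once (the term $-\sum_\alpha k(z_\alpha|w)\partial z^*_\alpha$ is killed), while $p(\phi_f)$ reduces to the image of $-\sum_{\alpha\in\bar{S}}\frac{1}{2k}u^\alpha u_\alpha$, because the summand $\partial x$ inside $L_\g$ and the term $\frac{k}{2}\sum_\alpha(\partial z^*_\alpha)z_\alpha$ both vanish under $p$. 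As $p$ is surjective and carries a generating set of $\WW(\g,f,k)$ onto $\{\psi_v,\psi_w,\psi_f\}$ up to a nonzero scalar on the $f$-generator, these elements generate $\WW^{fin}(\g,f)$.

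To make the argument self-contained I would also verify $\ad\,\n$-invariance directly. Since $f$ is minimal, $\n=\g(1/2)\oplus\CC e$ with $e\in[\g(1/2),\g(1/2)]$ because $[z_\alpha,z^*_\beta]=-\delta_{\alpha\beta}e$; as $\ad\colon\n\to\mathrm{Der}(\mathcal{S}(\g,f))$ is a Lie superalgebra homomorphism and every $z^*_\alpha$ lies in $\g(1/2)=\mathrm{span}\{z_\gamma\}$, the Jacobi identity shows $\ad\,e(A)=0$ whenever $\{z_\gamma,A\}=0$ for all $\gamma\in S(1/2)$. Thus it suffices to check $\{z_\gamma,\psi_v\}=\{z_\gamma,\psi_w\}=\{z_\gamma,\psi_f\}=0$. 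The first two identities follow from the same cancellations as in Proposition~\ref{Prop:generator}, now carried out in the commutative setting of $\mathcal{S}(\g,f)$ with all $\partial$- and $k\lambda$-terms suppressed; in particular the computation for $\psi_w$ reproduces equations~(\ref{Eqn:5.1_0708})--(\ref{Eqn:5.5_0708}) after deleting the derivative terms. For $\psi_f$ invariance is immediate: the quadratic Casimir $\sum_\alpha u_\alpha u^\alpha$ satisfies $\{a,\sum_\alpha u_\alpha u^\alpha\}=0$ in $S(\g)$ for every $a\in\g$ by invariance of the form, so its image is killed by all of $\ad\,\n$.

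Finally I would invoke Proposition~\ref{Prop:generator_fin} to upgrade ``generate'' to ``freely generate'', which requires checking the leading-term hypothesis relative to the $j$-grading. Using that $z_\alpha,z^*_\alpha\in\g(1/2)$, every correction monomial of $\psi_v$ (resp.\ $\psi_w$) has all its factors in $\bigoplus_{i>0}\g(i)$ (resp.\ $\bigoplus_{i>-1/2}\g(i)$), so $\psi_v=v+a_v$ and $\psi_w=w+a_w$ with $a_v,a_w$ of the required form. For $\psi_f$ I would isolate the part of $-\sum_\alpha u_\alpha u^\alpha$ containing a factor in $\g(-1)=\CC f$: the only such contributions come from the dual pair $\{e,f\}$, giving $-(fe+ef)$, and the reduction $e\equiv-(f|e)=-1$ in $\mathcal{S}(\g,f)$ turns this into a nonzero multiple of $f$, while every remaining monomial has all factors in $\bigoplus_{i>-1}\g(i)$. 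Since bases of $\g_f(0)$ and $\g(-\frac{1}{2})$ together with $f$ form a basis of $\g_f=\g_f(0)\oplus\g(-\frac{1}{2})\oplus\CC f$, Proposition~\ref{Prop:generator_fin} yields that $\{\psi_v,\psi_w,\psi_f\}$ is a free generating set.

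The step I expect to be most delicate is the bookkeeping for $\psi_f$: one must match the normalization of $p(\phi_f)$ with the stated Casimir $-\sum_\alpha u_\alpha u^\alpha$ (a scalar $\frac{1}{2k}$ and the reordering $u^\alpha u_\alpha\leftrightarrow u_\alpha u^\alpha$ in the supersymmetric algebra) and confirm that, after setting $e\equiv-1$, its lowest-grading component is genuinely a nonzero multiple of $f$ rather than vanishing; this is exactly what certifies $\psi_f$ as an admissible generator in the sense of Proposition~\ref{Prop:generator_fin}.
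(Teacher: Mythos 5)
Your proposal is correct and takes essentially the same route as the paper: the paper's entire proof is the one-line observation that the finitization map $p$ of Corollary \ref{Cor:affine-finite} carries the affine generators $\phi_v,\phi_w,\phi_f$ of Proposition \ref{Prop:generator} onto $\psi_v,\psi_w,\psi_f$. The extra material you supply — the direct $\ad\,\n$-invariance check, the normalization bookkeeping for the Casimir, and the verification of the leading-term hypothesis of Proposition \ref{Prop:generator_fin} (in particular that the $\g(-1)$-part of $-\sum_\alpha u_\alpha u^\alpha$ reduces to a nonzero multiple of $f$ after setting $e\equiv -1$) — is precisely the detail the paper leaves implicit in that citation.
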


\begin{proof}
By Corollary \ref{Cor:affine-finite}, we obtain free generators of $\WW(\g,f)$ from the generators of $\WW(\g,f,k).$
\end{proof}

Similarly, we obtain the Poisson brackets between generating elements.

\begin{prop}
Let $f$ be a minimal nilpotent in $\g$. Let $v,v_1, v_2$ be elements in $\g_f(0)$ and $w,w_1, w_2$ be elements in $\g(-\frac{1}{2}).$
The Poisson brackets between generators in Proposition \ref{Prop:fin_generator} are as follows:
\begin{equation} \label{Eqn:5.7_0708}
\begin{aligned}
& \{ \psi_{v_1}, \psi_{v_2}\}= \psi_{[v_1, v_2]},\\
& \{\psi_v, \psi_w\}=\psi_{[v,w]},\\
& \{\psi_f, \WW^{fin}(\g,f)\}=0,\\
& \{\psi_{w_1},\psi_{w_2}\}=(e|[w_1,w_2])(\psi_f+\sum_{i\in I}\frac{1}{2k}\psi_{a_i}\psi_{b_i}), \\
& +\sum_{\alpha\in S(1/2)}(-1)^{p(w_1)p(w_2)}\psi_{[w_2,z^*_\alpha]^\#}\psi_{[z_\alpha,w_1]^\#},
\end{aligned}
\end{equation}
where $\{a_i | i\in I\}$ and $\{b_i | i\in I\}$ are bases of $\g_f(0)$ such that $(a_i | b_j)=\delta_{ij}$ and $g^\# \in \g_f(0)$ is the projection of $g\in \g$ onto $\g_f(0).$ 
\end{prop}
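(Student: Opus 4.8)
The plan is to derive these five Poisson brackets from the affine $\lambda$-brackets of Proposition~\ref{Prop:aff_bracket} by transporting them through the finitization map of Corollary~\ref{Cor:affine-finite}, rather than repeating the direct computations inside $\mathcal{S}(\g,f)$. Recall from that corollary that the projection $p:G=S(\CC[\partial]\otimes\g_{<1})\to S(\g_{<1})$ sending $\partial^n a\mapsto \delta_{n0}a$ restricts to a surjection $\WW(\g,f,k)\to\WW^{fin}(\g,f)$ whose target Poisson bracket obeys $\{p(w_1),p(w_2)\}=p(\{w_{1\,\lambda}w_2\}|_{\lambda=0})$. So the strategy is simply to apply $p$ to each affine formula after setting $\lambda=0$.

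First I would verify that the finite generators $\psi_v,\psi_w,\psi_f$ of Proposition~\ref{Prop:fin_generator} are exactly the images under $p$ of the affine generators $\phi_v,\phi_w,\phi_f$ of Proposition~\ref{Prop:generator} (with $\psi_f$ matching $p(\phi_f)$ up to its stated normalization by $-\sum_\alpha u_\alpha u^\alpha$). Since $p$ annihilates every $\partial$, the derivative corrections $-\sum k(z_\alpha|w)\partial z^*_\alpha$ in $\phi_w$, the summand $\partial x$ inside $L_\g$, and the term $\tfrac{k}{2}\sum(\partial z^*_\alpha)z_\alpha$ in $\phi_f$ all drop out, leaving precisely $\psi_v,\psi_w,\psi_f$; every remaining derivative-free monomial is preserved because $p$ is an algebra homomorphism.

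Then, for each of the five brackets I would apply $p(\,\cdot\,|_{\lambda=0})$ to the corresponding affine formula, using that every positive power of $\lambda$ and every $\partial$ is killed. Concretely, the central term $k\lambda(v_1|v_2)$ in $\{\phi_{v_1\,\lambda}\phi_{v_2}\}$ vanishes and leaves $\{\psi_{v_1},\psi_{v_2}\}=\psi_{[v_1,v_2]}$; the relation $\{\phi_{v\,\lambda}\phi_w\}=\phi_{[v,w]}$ yields $\{\psi_v,\psi_w\}=\psi_{[v,w]}$; each affine bracket $\{\phi_{f\,\lambda}\,\cdot\,\}$ has the form $-(\partial+c\lambda)(\cdot)$ and is therefore annihilated, giving $\{\psi_f,\WW^{fin}(\g,f)\}=0$; and in $\{\phi_{w_1\,\lambda}\phi_{w_2}\}$ the anomalous piece $-\sum_{\alpha,\beta}k^2\lambda^2(z_\alpha,w_1)(z_\beta,w_2)[z^*_\alpha,z^*_\beta]$ drops while the products $\phi_{a_i}\phi_{b_i}$ and $\phi_{[w_2,z^*_\alpha]^\#}\phi_{[z_\alpha,w_1]^\#}$ map to the corresponding products of $\psi$'s, reproducing exactly the stated last identity.

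The only point requiring care is the bookkeeping in the identification $p(\phi_\bullet)=\psi_\bullet$: one must confirm that no derivative-free monomial is lost and that the normalization of $\psi_f$ agrees with $p(\phi_f)$. Once this is pinned down the five formulas follow mechanically from Proposition~\ref{Prop:aff_bracket}. Should this identification prove awkward, the fallback is to imitate the proof of Proposition~\ref{Prop:aff_bracket} verbatim, using the algebra isomorphism $\iota=p|_{\WW^{fin}(\g,f)}$ onto its image in $S(\g_f)$ induced by the projection $\bigoplus_{i\leq 1/2}\g(i)\to\g_f$ and computing the brackets there; this is the direct finite analogue and introduces no new difficulty.
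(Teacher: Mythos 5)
Your proposal is correct and is essentially the paper's own proof: the paper establishes these brackets precisely by citing Corollary \ref{Cor:affine-finite} together with Proposition \ref{Prop:aff_bracket}, i.e., by applying the finitization $w\mapsto p(w)$ and $\{p(w_1),p(w_2)\}=p(\{w_{1\,\lambda}w_2\}|_{\lambda=0})$ to the affine $\lambda$-bracket formulas, with the identification $p(\phi_\bullet)=\psi_\bullet$ already supplied by Proposition \ref{Prop:fin_generator}. The normalization of $\psi_f$ that you flag (the factor relating $p(\phi_f)=-\tfrac{1}{2k}\sum_\alpha u^\alpha u_\alpha$ to the stated $-\sum_\alpha u_\alpha u^\alpha$) is a pre-existing looseness in the paper's statement rather than a defect of your argument.
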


\begin{proof}
By Corollary \ref{Cor:affine-finite} and Proposition \ref{Prop:aff_bracket}, we can prove our assertion.
\end{proof}

\subsection{Examples }\ \\

\begin{ex}
Let $\g=spo(2|1)\subset \mathfrak{gl}(2|1).$ Then the even part $\g_{\bar{0}}$ is generated by an $\sll_2$-triple $(e_{ev}, h, f_{ev})$ and the odd part  $\g_{\bar{1}}$ is generated by $e_{od}$ and $f_{od}$ such that 
\begin{equation*} 
h= \left(
\begin{array}{ccc}
1& 0& 0\\
0 &-1 &0 \\
0 & 0 & 0
\end{array}\right)
\qquad
e_{ev}= \left(
\begin{array}{ccc}
0 & 1& 0\\
0 & 0 &0 \\
0 & 0 & 0
\end{array}\right)
\qquad
f_{ev}= \left(
\begin{array}{ccc}
0 & 0 & 0\\
1 & 0 &0 \\
0 & 0 & 0
\end{array}\right)
\end{equation*}

\begin{equation*}
e_{od}= \left(
\begin{array}{ccc}
0 & 0 & 1\\
0 & 0 &0 \\
0 & 1 & 0
\end{array}\right)
\qquad
f_{od}= \left(
\begin{array}{ccc}
0 & 0 & 0\\
0 & 0 &1 \\
-1 & 0 & 0
\end{array}\right) .
\end{equation*} 

Then we have 
\[ [h,e_{od}]=e_{od}, \quad [h,f_{od}]=-f_{od}, \quad [e_{od}, f_{od}]=[f_{od}, e_{od}]=-h \]
\[ [e_{od}, f_{ev}]=-f_{od}, \quad [f_{od}, e_{ev}]=-e_{od}, \quad [e_{od}, e_{od}]=2 e_{ev}, \quad [f_{od}, f_{od}]=-2f_{ev}.\]
Note that $f_{od}\in \g(-1/2)$, $f_{ev}\in  \g(-1)$,  $e_{od}\in \g(1/2)$, $e_{ev}  \in \g(1)$

Consider the supersymmetric invariant bilinear form $(\, | \, )$ such that $(h|h)=2(e_{ev}|f_{ev})=2$ and $(e_{od}|f_{od})=-2.$ 

In order to find free generators of $\WW(\g,f_{ev},1)$, we want to find an element  $X\in S(\CC[\partial]\otimes \g)/I$ satisfying $\ad_\lambda e_{od}  (X) =0+I.$ Here we recall that $I$ is the differential algebra ideal generated by $e_{ev}+1$. Note that  $\ad_\lambda e_{od} (X) =0+I$ implies $\ad_\lambda e_{ev}  (X)=0+I.$ It is not hard to find two elements 
\[\phi_{od}:=f_{od}-\frac{1}{2}e_{od}h-\partial e_{od}, \quad \phi_{ev}:=f_{ev}+\frac{1}{2}f_{od}e_{od}-\frac{1}{4}h^2+\frac{1}{4}e_{od}\partial e_{od}-\frac{1}{2}\partial h\] which satisfy 
\[ \ad_\lambda e_{od} (\phi_{od})=0+I ,\qquad  \ad_\lambda e_{od} (\phi_{ev})=0+I .\]
Hence  
\[ \WW(\g,f_{ev},1)=S(\CC[\partial]\otimes (\CC\phi_{od}\oplus \CC\phi_{ev})).\]
 as a differential algebra. 
By direct computations, we can check that the $\lambda$-bracket of $\WW(\g, f_{ev},1)$ is defined by 
\begin{equation*}
\begin{aligned}
& \{\phi_{od}\, _\lambda\, \phi_{od}\}=-2 \phi_{ev}-2\lambda^2  \\
& \{\phi_{ev}\, _\lambda\, \phi_{od}\}= -(\partial+\frac{3}{2}\lambda)\phi_{od}\\
&\{\phi_{ev}\, _\lambda\, \phi_{ev}\}=-(\partial+2\lambda)\phi_{ev}-\frac{1}{2}\lambda^3.
\end{aligned}
\end{equation*}

\end{ex}

\begin{ex}
Let $\g=spo(2|3)\subset \mathfrak{gl}(2|3)$. As a matrix form  
\[ \mathfrak{gl}(2|3) = \left(
\begin{array}{cc}
A & B \\
C & D
\end{array}\right) 
\]
where $A,B,C,D$ are $2\times 2$, $2\times 3$, $3\times 2$, $3\times 3$ matrices, respectively.  We denote by $e_{ij}, e_{i\bar{j}}, e_{\bar{i}j}, e_{\bar{i}\bar{j}} \in \mathfrak{gl}(2|3)$ the matrix with $1$ in $ij$-entry of $A,B,C,D$, respectively,  and $0$ in other entries.

Consider the $\sll_2$-triple $(e,h,f)$ where 
\[ h=e_{11}-e_{22}+2(e_{\bar{1}\bar{1}}-e_{\bar{2}\bar{2}}), \ e=e_{12}+e_{\bar{1}\bar{3}}-e_{\bar{3}\bar{2}}, \ f=e_{21}+2e_{\bar{3}\bar{1}}-2e_{\bar{2}\bar{3}}.\]
Take the supersymmetric invariant bilinear form $( \, |\, )$ such that $(e|f)=\frac{1}{2}(h|h)=1.$ Then $\g$ is generated by the following elements
  \begin{equation*}
 \begin{aligned}
 & H_1=e_{11}-e_{22}, \quad H_2=e_{\bar{1}\bar{1}}-e_{\bar{2}\bar{2}}, \\
 & E_{11}= e_{\bar{1}1}-e_{2\bar{2}}, \quad E_{12}=e_{\bar{3}2}+e_{1\bar{3}}, \quad E_{21}=e_{12},\quad E_{22}=e_{\bar{1}\bar{3}}-e_{\bar{3}\bar{2}}, \quad E_3=e_{\bar{1}2}+e_{1\bar{2}} \\
 & F_{11}=e_{1\bar{1}}+e_{\bar{2}2}, \quad F_{12}=e_{\bar{3}1}-e_{2\bar{3}}, \quad F_{21}=e_{21}, \quad F_{22}=e_{\bar{2}\bar{3}}-e_{\bar{3}\bar{1}}, \quad F_3=e_{\bar{2}1}-e_{2 \bar{1}}.
 \end{aligned} 
 \end{equation*}
Note that $H_1, H_2\in \g(0)$, $E_{11}, E_{12}\in \g(1/2)$, $E_{21}, E_{22}\in \g(1)$, $E_{3}\in \g(3/2)$ and $F_{11}, F_{12}\in \g(-1/2)$, $F_{21}, E_{22}\in \g(-1)$, $F_{3}\in \g(-3/2).$
In the differential algebra  $S(\CC[\partial]\otimes \g)/I$ where $I$ is the differential algebra ideal generated by $m+(f|m)$ for $m\in \m,$ we have $e=-1$, $E_{21}=\frac{1}{3}$ and $E_{22}=-\frac{4}{3}.$ Moreover, we have 
\[ [E_{12}, E_{12}]=\frac{2}{3}, \qquad [E_{11}, E_{12}]=-\frac{4}{3}, \qquad [E_{11}, E_{11}]=0,\]
\[ (E_{11} |F_{11})=\frac{2}{3}, \qquad (E_{12} |F_{12})=-\frac{2}{3}.\]

Since $\ker(\ad f)\subset spo(2|3)$ is generated by four elements $F_{11}-\frac{1}{2} F_{12}$, $F_{21}$, $F_{22}$ and $F_{23}, $ the W-algebra $\WW(\g, f,1)$ is freely generated by four elements as a differential algebra. We can see that the following four elements
\begin{equation*}
\begin{aligned}
 \phi_1\  &= F_{11}-\frac{1}{2}F_{12}+\frac{3}{4}H_1E_{12}+\frac{3}{4}H_2E_{12}+\frac{3}{8}H_2 E_{11}+\frac{1}{2}\partial E_{11}+\frac{1}{2}\partial E_{12};  \\
 \phi_{21}& =F_{21}-\frac{3}{4}F_{12}E_{11}-\frac{9}{8}H_1E_{11}E_{12}+\frac{3}{4}H_1^2 \\
 &  -\frac{3}{8}E_{12}\partial E_{11}+\frac{3}{8}E_{11} \partial E_{12} -\frac{3}{16} E_{11}\partial E_{11}-\frac{1}{2}\partial H_1; \\
 \phi_{22}& = F_{22}+\frac{3}{4}F_{11}E_{11}-\frac{3}{4}F_{12}E_{12}-\frac{3}{8}F_{12}E_{11}-\frac{9}{16}H_1E_{11}E_{12}+\frac{3}{8}H_2^2\\
 & -\frac{3}{8}E_{12}\partial E_{11}-\frac{3}{16}E_{11}\partial E_{11}+\frac{1}{2} \partial H_2; \\
 \phi_3 \ & = F_{3}-\frac{3}{2} F_{21}E_{12}+\frac{3}{4}F_{22}E_{11}-\frac{3}{4}F_{21}E_{11}+\frac{9}{8}F_{12}E_{11}E_{12}-\frac{9}{16}H_1^2E_{11}+\frac{3}{4}H_2F_{12} \\
 & -\frac{3}{2}H_1F_{11}+\frac{9}{8}F_{11}E_{11}E_{12}-\frac{9}{8}H_1^2E_{12}-\frac{9}{8}H_1H_2E_{12} \\
 & -\frac{3}{8}H_2\partial E_{11}-\frac{3}{4}H_1 \partial E_{12}+\frac{9}{16}E_{11}E_{12}\partial E_{12} +\frac{3}{8}\partial H_1 E_{11}+\frac{1}{2} \partial F_{12}-\frac{1}{4} \partial^2 E_{11}
\end{aligned}
\end{equation*}
freely generate $\WW(\g,f,1).$ The $\lambda$-bracket is defined by 
\begin{equation*}
\begin{aligned}
&  \{\phi_{1\, \lambda} \phi_1\}=  \phi_{22}-\frac{1}{2}\phi_{21}+\frac{1}{6}\lambda^2; \\
&  \{\phi_{1\, \lambda} \phi_{21}\}=  \phi_3+\frac{1}{2}\partial\phi_1+\frac{1}{2}\lambda\phi_1; \\
&  \{\phi_{1\, \lambda} \phi_{22}\}= \frac{1}{2}\phi_3+\frac{1}{2}\partial \phi_1+ \lambda \phi_1;  \\
&  \{\phi_{1\, \lambda} \phi_{3}\}=  \frac{1}{2}\partial \phi_{21}+\frac{1}{2}\partial \phi_{22}+ \frac{3}{2} \lambda\phi_{21}+\frac{1}{6}\lambda^3; \\
&  \{\phi_{21 \, \lambda} \phi_{21}\}= -(\partial+2\lambda)\phi_{21}+\frac{1}{6}\lambda^3; \\
&  \{\phi_{21\, \lambda} \phi_{22}\}=  0; \\
&  \{\phi_{21\, \lambda} \phi_3\}=   \frac{3}{2}\phi_{21}\phi_1-\frac{1}{2}\lambda\phi_3+\frac{1}{2}\lambda^2\phi_1;\\
&  \{\phi_{22\, \lambda} \phi_{22}\}= \frac{1}{2}(\partial+2\lambda)\phi_{22}-\frac{1}{6}\lambda^3 ; \\
&  \{\phi_{22\, \lambda} \phi_3\}= -\frac{9}{2}\phi_{1}\phi_{21}+(\lambda+\frac{1}{2}\partial)\phi_3;  \\
&  \{\phi_{3 \, \lambda} \phi_3\}=   -3\phi_{21}\phi_{22}-3\phi_3\phi_1 + \left(\frac{3}{2}\lambda^2+\frac{3}{2}\lambda\partial+\frac{1}{2}\partial^2\right)\phi_{21}.\\
\end{aligned}
\end{equation*} 
The way we get Poisson $\lambda$-brackets is same as the argument in the proof of Proposition \ref{Prop:aff_bracket}.
\end{ex}

\section{Affine classical fractional W-(super)algebras} \label{Sec:frac}

Recall that $\g$ is a simple Lie superalgebra with an $\sll_2$-triple $(e,2x,f)$ and the supersymmetric bilinear invariant form $(\, | \, )$ such that $(e|f)=2(x|x)=1$.  Supppose $\g[z,z^{-1}]=\CC[z,z^{-1}]\otimes \g$ and $\g[z]:= \CC[z]\otimes \g$ are  Lie superalgebras with the bracket 
\[ [a z^m, b z^n]:=[a,b] z^{m+n}, \quad m, n \in \ZZ_, \, a,b\in \g.\] 
The Lie superalgebra $\g[z,z^{-1}]$ has the bilinear invariant form $(\, |\, )$ which is induced by that of $\g$, i.e.
\[ (a z^m | b z^n)=(a|b)\delta_{m+n, 0}, \quad a,b\in \g.\]
Denote
\[\g^{[t]}=\g[z]/z^{t+1} \g[z], \quad t\in \ZZ_{>0}\]
and let $\mathcal{V}_t$ be the differential algebra defined by 
\[ \mathcal{V}_t(\g,f,k)= S(\CC[\partial]\otimes \g^{[t]})/I^t \]
  for the ideal  $I^t$  generated by $\{ z^t m+(f|m)|m\in \m\}.$ Note that if $t=0$ then $\mathcal{V}_t(\g,f,k)=\mathcal{V}(\g,f,k)$ which appears when we define the ordinary classical affine W-superalgebra.

In order to introduce another description of $\mathcal{V}_t(\g,f,k)$, define a gradation on $\g[z]$ by letting
\[ \gr(z)=1+d, \quad \gr(g)=j, \text{ for } g\in \g(j).\]
Here $d$ satisfies that $\g=\bigoplus_{i=-d}^{d} \g(i)$ and $\g(d)\neq 0.$ We denote
\[ \g_l:= \{g\in \g[z]| \gr(g)=l\}, \text{ for } l\geq -d.\]
If $p\in \g(d)$ then  
\[ \gr(f z^t)=\gr(pz^{t-1})=t(1+d)-1\]
and
\[\g_{[t,d]}:=\g[z]\,  \big/ \bigoplus_{t'>(1+d)t+1}\g_{t'}\simeq \bigoplus_{t'\leq(1+d)t+1}\g_{t'} \ \simeq\  \g^{[t]} \oplus z^{t+1} \g(-d)\] 
as vector superspaces.

 \begin{defn}\label{Prop:V_frac}
The differential algebra $ S(\CC[\partial]\otimes \g_{[t,d]})$
has a PVA structure  induced by the  Poisson $\lambda $-bracket on $\CC[\partial]\otimes\g_{[t,d]}:$
\begin{equation}
\{ a z^p_{\ \lambda} b z^q\}= 
\left\{ \begin{array}{ll}
 [a,b]+k\lambda(a|b) & \text{ if } p=q=0, \\
\ \  0 & \text{ if only one of } p,q \text{ is } 0, \\
\,- [a,b] z^{m+n} & \text{ if } p,q \neq 0.
\end{array}
\right.
\end{equation} 
for $a,b\in \g$, $p,q\in \ZZ_{\geq 0}$ and $k\in \CC.$
\end{defn}

Moreover, we have the differential algebra isomorphism 
\[ \mathcal{V}_t(\g,f,k)\simeq  S(\CC[\partial]\otimes \g_{[t,d]})/I_{[t,d]},\]
where $I_{[t,d]}$ is the ideal generated by 
\[\{ m+(m|\Lambda_t)|  m\in \g_{(1+d)+1}\} \text{ for } \Lambda_t=f z^{-t}+p z^{-t-1}.\]  Here $(m|\Lambda_t)\in \CC$ is induced from the bilinear form of $\g[z,z^{-1}].$

Recall that $\n= \bigoplus_{i>0} \g(i)\subset \g.$ Define the $\ad_\lambda \n$ action on $\mathcal{V}_t(\g,f,k)$  induced by the action on $S(\CC[\partial]\otimes\g[z])$:
\begin{equation} \label{Eqn:ad action}
\begin{aligned}
&  \ad_\lambda n (az^p)= [n,a]+\delta_{p,0} k\lambda(n|a), \\
&  \ad_\lambda n (AB)= (-1)^{p(n)p(A)} A\ \ad_\lambda n (B) + \ad_\lambda n (A) B, \\
& \ad_\lambda n (\partial A)= (\partial +\lambda) \ad_\lambda n (A),
\end{aligned}
\end{equation}
for $n\in \n$, $a \in \g$, $A,B\in S(\CC[\partial]\otimes \g[z])$, $p\in \ZZ_{\geq 0}.$

Take the superspace
\begin{equation}
\WW_t(\g,f,k)= \mathcal{V}_t(\g,f,t)^{\ad_\lambda \n}=\{ A\in \mathcal{V}_t(\g,f,k)| \ad_\lambda n (A)=0 \text{ for any } n \in \n\}.
\end{equation}
By the definition of $\ad_\lambda \n$ action (\ref{Eqn:ad action}) guarantees that $\WW_t(\g,f,k)$ is a differential algebra. Moreover, we have the following proposition.

\begin{prop}
The differential algebra $\WW_t(\g,f,k)$ is a PVA endowed with the $\lambda$-bracket induced from that of $S(\CC[\partial]\otimes \g_{[t,d]})$ in Definition \ref{Prop:V_frac}.
\end{prop}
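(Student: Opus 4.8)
The plan is to follow the pattern of Proposition \ref{Prop:Welldefined_affine}, but with one extra ingredient forced by the fact that the bracket cutting out the reduction and the bracket carrying the PVA structure are genuinely different. First I would record that the twisted bracket of Definition \ref{Prop:V_frac} really does make $S(\CC[\partial]\otimes\g_{[t,d]})$ into a PVA: by the Leibniz rule it suffices to check sesquilinearity, skewsymmetry and the Jacobi identity on the generators $az^p$ and then extend. Skewsymmetry and sesquilinearity are immediate from the three defining cases; the Jacobi identity holds because on the ``positive'' part the twist is a global sign change of the Lie bracket of $\g$, which preserves Jacobi, while every bracket involving exactly one factor of $z$-degree $0$ vanishes, so no cross terms obstruct it. Concretely one checks $p=q=0$, exactly one of $p,q$ zero, and $p,q\neq 0$ separately; in the first case it is the Jacobi identity of $\g$, and in the last it is the $\g$-Jacobi identity carried by a double sign $(-1)(-1)=1$.

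The main point, absent in the ordinary case, is a compatibility between the two brackets. The subspace $\WW_t(\g,f,k)$ is cut out by the honest current action $\ad_\lambda\n$ of (\ref{Eqn:ad action}), whereas the bracket to be installed is the twisted one of Definition \ref{Prop:V_frac}; on $\g[z]$ these do \emph{not} agree, since the honest action sends $az^q\mapsto[n,a]z^q$ for $q\neq0$ while the twisted bracket gives $0$. I would therefore prove the master Jacobi identity
\[ \ad_\mu n\big(\{A_\lambda B\}\big) = \{\ad_\mu n(A)_{\ \lambda+\mu} B\} + (-1)^{p(n)p(A)}\{A_\lambda\, \ad_\mu n(B)\}, \qquad n\in\n, \]
that is, that the honest action acts as a derivation of the twisted bracket. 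This is checked on generators $az^p,bz^q$ in the same three cases: for $p=q=0$ it reduces to the Jacobi identity of $\g$ together with the invariance of $(\,|\,)$, where the shift $\lambda+\mu$ is exactly what makes the central terms $k\,(\,|\,)$ cancel; and whenever a positive $z$-power is present both sides reduce to the $\g$-Jacobi identity with a single global sign. The derivation property then extends to all of $S(\CC[\partial]\otimes\g_{[t,d]})$ by the Leibniz rule.

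With these two facts in hand the proof closes as in Proposition \ref{Prop:Welldefined_affine}. Closure under the product is immediate, since $\ad_\lambda\n$ is a derivation of the supercommutative product, so $AB\in\WW_t$ whenever $A,B\in\WW_t$. Closure under the bracket is the master Jacobi identity: if $\ad_\mu n(A)=\ad_\mu n(B)=0$ for all $n\in\n$ then $\ad_\mu n(\{A_\lambda B\})=0$, whence $\{A_\lambda B\}\in\WW_t[\lambda]$. Finally I would verify well-definedness modulo $I_{[t,d]}$, namely $\{A_\lambda J\}\in I_{[t,d]}[\lambda]$ for $A\in\WW_t$ and $J\in I_{[t,d]}$; since $I_{[t,d]}$ is generated by the top-degree elements $m+(m|\Lambda_t)$ with $m\in\g_{(1+d)t+1}$, this is a direct computation using the gradation and invariance of the form, parallel to the treatment of the ideal $MS(\CC[\partial]\otimes\g)$ in the ordinary case. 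Once the bracket is well-defined on $\WW_t$ and $\WW_t$ is closed under it, all PVA axioms are inherited from the ambient PVA $S(\CC[\partial]\otimes\g_{[t,d]})$.

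The hard part will be the compatibility of the two brackets, i.e.\ the master Jacobi identity relating the honest defining action to the twisted PVA bracket: it is precisely here that the direct method of this paper substitutes for the appeal, used in \cite{S2}, to the isomorphism with the construction of \cite{BDHM}. The ideal-descent check is routine but must be done carefully, because $\Lambda_t$ carries negative $z$-powers, so the scalar corrections $(m|\Lambda_t)$ interact nontrivially with the $z$-degrees appearing in the bracket.
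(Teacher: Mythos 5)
Your proposal is correct, and its core coincides with the paper's: everything hinges on the fact that the \emph{honest} action $\ad_\lambda \n$ of (\ref{Eqn:ad action}) behaves as a conformal derivation of the \emph{twisted} bracket of Definition \ref{Prop:V_frac}, even though the two brackets disagree on positive $z$-degrees. The packaging, however, is genuinely different. The paper never states your ``master Jacobi identity''; instead it decomposes representatives as $A=\sum_i A_i^0A_i^>$, $B=\sum_j B_j^0B_j^>$ with $A_i^0,B_j^0$ of $z$-degree zero and $A_i^>,B_j^>$ of positive degree, uses the vanishing of mixed brackets to write $\{A_\lambda B\}$ as two groups of terms, and then kills $\ad_\mu n$ of each group by an explicit cancellation that feeds in the global constraints $\ad_\mu n(A)=\ad_\mu n(B)=0$ together with the fact that the action preserves each $S(\CC[\partial]\otimes \g z^k)$; the derivation property is invoked only on the degree-zero components (where the twisted bracket equals the current bracket), and ``the same argument'' is cited for the positive part. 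Your route--prove the derivation identity once on generators (three cases, with the $\lambda+\mu$ shift absorbing the central terms via invariance of $(\,|\,)$) and extend by Leibniz/sesquilinearity--is cleaner and more modular: closure becomes a one-line formal consequence, and the same lemma also yields the two points the paper leaves implicit, namely that the twisted bracket satisfies the PVA axioms on $S(\CC[\partial]\otimes\g_{[t,d]})$ (the paper buries this in Definition \ref{Prop:V_frac} without proof) and that the bracket descends modulo $I_{[t,d]}$. What the paper's version buys is that it works directly with elements of $\WW_t(\g,f,k)$ and never needs a statement valid on the whole symmetric algebra.

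Two small points of care. First, since $\WW_t(\g,f,k)$ sits inside the quotient $\mathcal{V}_t(\g,f,k)$, for a representative $\tilde A$ one only has $\ad_\mu n(\tilde A)\in I_{[t,d]}[\mu]$, not $=0$; so your ideal-stability step is needed not merely for well-definedness but already to close the master-identity argument, and should logically precede it (the paper has the same issue, silently). Second, the ideal-descent is in fact easier than your last paragraph suggests: the generators of $I_{[t,d]}$ sit in the top graded piece $\g_{(1+d)t+1}$, the twisted bracket of two positive-degree elements adds gradations, and gradations beyond $(1+d)t+1$ are truncated to zero in $\g_{[t,d]}$, so the bracket of an ideal generator with any element of $S(\CC[\partial]\otimes\g_{[t,d]})$ already vanishes (or lands in $I_{[t,d]}[\lambda]$ after the $\lambda+\partial$ bookkeeping); the scalar corrections $(m|\Lambda_t)$ never enter.
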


\begin{proof}
Let us first show when $\g$ is a Lie algebra. Suppose $A=\sum_{i\in T} A_i^0 A_i^>$ and $B=\sum_{j\in T'} B_j^0 B_j^>$ are elements of $\WW_t(\g,f,k)$ for $A_i^0, B_j^0\in S(\CC[\partial]\otimes \g)$ and $A_i^>, B_j^> \in S(\CC[\partial]\otimes z\g[z])$. We need to show that 
\[ \{A_\lambda B\} \in \WW_t(\g,f,k).\]
Note that, $\text{ for } n\in \n$,
\begin{enumerate}[(i)]
\item $ \ad_\lambda n (A)= \sum_{i\in T} A_i^> \cdot \ad_\lambda n(A_i^0)+ A_i^0 \cdot \ad _\lambda n (A_i^>)=0$;
\item $ \ad_\lambda n (B)= \sum_{j\in T'} B_j^> \cdot \ad_\lambda n(B_j^0)+ B_j^0 \cdot \ad _\lambda n (B_j^>)=0$;
\item 
$ \ad_\lambda n \big( S(\CC[\partial]\otimes \g z^k )\big)\subset  S(\CC[\partial]\otimes \g z^k )[\lambda].$
\end{enumerate}
Since we have
\[ \{A_\lambda B\}= \sum_{i\in T, j\in T'} B_j^>\{A^0_{i\, \lambda+\partial } B^0_j\}_\to A^>_i + B_j^0\{A^>_{i\, \lambda+\partial } B^>_j\}_\to A^0_i, \]
 it is enough to show that 
\begin{equation} \label{Eqn:6.4_150830}
 \sum_{i\in T, j\in T'} \ad_\mu n \big(  B_j^>\{A^0_{i\, \lambda+\partial } B^0_j\}_\to A^>_i  \big)= 0= \sum_{i\in T, j\in T'}  \ad_\mu n   \big(  B_j^0\{A^>_{i\, \lambda+\partial } B^>_j\}_\to A^0_i \big).
\end{equation}
In order to show the first equality in (\ref{Eqn:6.4_150830}), we expand 
\begin{align}
 \ad_\mu n \big(  B_j^> \{A^0_{i\, \lambda+\partial } B^0_j\}_\to  A^>_i  \big)
& = \big( \{A^0_{i\, \lambda+\partial} B^0_j\}_\to  A_i^>\big)  \, \ad_\mu n (B_j^>) \label{Eqn:6.5_150830}\\
& +\sum_{k\geq 0} B_j^> \left( \frac{(\lambda+\partial)^k}{k!}A_i^> \right) \, \ad_\mu n (A^0_{i \, (k)}B^0_j) \label{Eqn:6.6_150830}\\
& + \sum_{k\geq 0} B_j^>\,  A^0_{i \, (k)}B^0_j\, \left( \frac{(\lambda+\mu+ \partial)^k}{k!}A_i^> \right) \, \ad_\mu n ( A^>_i  ). \label{Eqn:6.7_150830}
\end{align}
Since
\[\frac{1}{k!} \ad_\mu n (A^0_{i \ (k) }B^0_j) = \sum_{k'\geq k} \frac{k'!}{k!(k'-k)!}\big[\ad_\mu n (A_i^0)\big]_{(k')} B^0_j \ \mu^{k'-k} +\frac{1}{k!} A^0_{i\ (k) } \big[ \ad_\mu n (B^0_j)\big]\]
we have
\begin{align}
  (\ref{Eqn:6.6_150830}) = & \sum_{k\geq 0} B^>_j \left( \frac{(\lambda+\partial)^k}{k!} A^>_i \right) \sum_{k'\geq k} \frac{\mu^{k'-k}}{(k'-k)!}\big[\ad_\mu n(A_i^0)\big]_{(k')}B^0_j  \label{Eqn:6.8_150830} \\
+ &  \sum_{k\geq 0} B^>_j \left( \frac{(\lambda+\partial)^k}{k!} A^>_i \right) A^0_{i\, (k)}\big[\ad_\mu n (B_j^0)\big] .\label{Eqn:6.9_150830}
\end{align}

By (i) and (iii), 
\begin{equation} \label{Eqn:6.10_150830}
\begin{aligned}
 \sum_{i\in T, j\in T'} (\ref{Eqn:6.8_150830}) & = - \sum_{i\in T, j\in T'} \left[ \sum_{k\geq 0} B^>_j \left( \frac{(\lambda+\partial)^k}{k!} \ad_\mu n (A^>_i) \right) \sum_{k'\geq k} \frac{\mu^{k'-k}}{(k'-k)!} A^0_{i\, (k')}B^0_j \right]  \\
 & =  -\sum_{i\in T, j\in T'} (\ref{Eqn:6.7_150830}).
\end{aligned}
\end{equation}
Also, by (ii) and (iii), 
\begin{equation} \label{Eqn:6.11_150830}
\begin{aligned}
\sum_{i\in T, j\in T'} (\ref{Eqn:6.9_150830}) = & -\sum_{i\in T, j\in T'} \left[   \sum_{k\geq 0} \ad_\mu n (B^>_j) \left( \frac{(\lambda+\partial)^k}{k!} A^>_i \right) A^0_{i\, (k)} B_j^0    \right]  \\
& = - \big( \{A^0_{i\, \lambda+\partial} B^0_i\}_\to  A_i^>\big)  \, \ad_\mu n (B_i^>).
\end{aligned}
\end{equation}
Hence, by (\ref{Eqn:6.4_150830})- (\ref{Eqn:6.11_150830}), we have
\[  \sum_{i\in T, j\in T'}\ad_\mu n (B_j^>\{A^0_{i\, \lambda+\partial } B^0_j\}_\to A^>_i )=0\]
for any $n\in\n.$ Using the same argument, we have 
\[  \sum_{i\in T, j\in T'}  \ad_\mu n   \big(  B_j^0\{A^>_{i\, \lambda+\partial } B^>_j\}_\to A^0_i \big)=0.\]
Thus \[ \ad_\mu n (\{A_\lambda B\})=0, \text{ that is } \{A_\lambda B\}\in \WW_t(\g,f,k)[\lambda]\]  and $\WW_t(\g,f,k)$ is endowed with the PVA structure.

If $\g$ is a Lie superalgebra having an odd part, we can prove the proposition similarly. The only thing we need is considering change of signs according to the supersymmetry of $\WW_t(\g,f,k)$. So we proved our assertion.
\end{proof}

\begin{defn}
The PVA  
\[ \WW_t(\g,f,k)= \mathcal{V}_t(\g,f,k)^{\ad_\lambda \n}\]
 endowed with the Poisson $\lambda$-bracket induced by that of $S(\CC[\partial]\otimes \g_{[t,d]})$ in Definition  \ref{Prop:V_frac} is called the {\it $t$-th classical affine W-algebra} associated to $\g$, $f$ and $k\in \CC.$
\end{defn}

Now we assume that $f$ is a minimal nilpotent in $\g$ and $\Lambda_t=f z^{-t}+e z^{-t-1}$. Then we can find generators of $\WW_t(\g,f,k)$

\begin{prop} \label{Prop: frac_gen}
Let $u\in \g\big(\frac{1}{2}\big)$, $v\in \g(0)$ and $w\in \g\big(-\frac{1}{2}\big)$ and recall $\{z_\alpha|\alpha\in S(1/2)\}$ and $\{z^*_\alpha|\alpha\in S(1/2)\}$ are bases of $\g\big(\frac{1}{2}\big)$ such that $[z_\alpha, z^*_\beta]=-e.$ For $p=0, \cdots, t-1$, let $\eta_t$ be the linear map defined as follows. Then the elements listed below are generators of $\WW_t(\g,f,k).$
\begin{equation}
\begin{aligned}
 \eta_t(e z^p) & =e z^p;\\
 \eta_t(u z^p) & =u z^p-\sum_{\alpha \in S(1/2)}z^*_\alpha z^t[z_\alpha, uz^p] ;\\
 \eta_t(v z^p) & = v z^p-\sum_{\alpha \in S(1/2)}z^*_\alpha z^t[z_\alpha, vz^p]-xz^t [e,v z^p] \\
& +\frac{1}{2} \sum_{\alpha, \beta\in S(1/2)} z^*_\alpha z^t \,  z^*_\beta z^t [z_\beta, [z_\alpha, v z^p]];\\
 \eta_t(w z^p) & = w z^p -\sum_{\alpha\in S(1/2)} z^*_\alpha z^t[z_\alpha, wz^p]-xz^t [e,w z^p]  \\
 & +\frac{1}{2} \sum_{\alpha, \beta\in S(1/2)} z^*_\alpha z^t \,  z^*_\beta z^t [z_\beta, [z_\alpha, w z^p]]+x z^t \,  z^*_\beta z^t [z_\beta, [e, w z^p]]\\
 & -\frac{1}{6}\sum_{\alpha, \beta,\gamma\in S(1/2)} z^*_\alpha z^t \,  z^*_\beta z^t \,  z^*_\gamma z^t [z_\gamma, [z_\beta, [z_\alpha, w z^p ]]]-k\, \delta_{p,0}\sum_{\alpha\in S(1/2)} \partial z^*_\alpha z^t (z_\alpha|w).
\end{aligned}
\end{equation}
Denote by $\eta'_t(f z^p)$ the element in $\mathcal{V}_t(\g,f,k)$ which is obtained from $\eta_t(w z^p)$ by substituting $w z^p$ by $f z^p$ and $w$ by $f$.
\begin{equation}
\begin{aligned}
\eta_t(f z^p) & = \eta'_t(f z^p) - (x z^t)^2(ez^p)- \sum_{\alpha, \beta\in S(1/2)} x z^t\, z^*_\alpha z^t \, z^*_\beta z^t [z_\beta, [z_\alpha, x z^p]]  \\
& +\frac{1}{24}  \sum_{\alpha, \beta,\gamma,\delta \in S(1/2)} z^*_\alpha z^t \,  z^*_\beta z^t \,  z^*_\gamma z^t\,  z^*_\delta z^t [z_\delta, [z_\gamma, [z_\beta, [z_\alpha, f z^p ]]]] \\
& -k \, \delta_{p,0} \big( \partial x z^t   - \frac{1}{2}\sum_{\alpha\in S(1/2)}(\partial z^*_\alpha z^t)\, z_\alpha z^t\big)
\end{aligned}
\end{equation}
Also, for  $v\in \g_f(0)$ and $w\in \g\big(-\frac{1}{2}\big)$,  the followings are generators of $\WW_t(\g,f,k).$
\begin{equation}
\begin{aligned}
\eta_t(v z^t) & =  v z^t-\frac{1}{2}\sum_{\alpha \in S(1/2)}z^*_\alpha z^t[z_\alpha, vz^t] ;\\
\eta_t(w z^t) & = w z^t  -\sum_{\alpha\in S(1/2)} z^*_\alpha z^t [z_\alpha, w z^t]+\frac{1}{3} \sum_{\alpha, \beta\in S(1/2)} z^*_\alpha z^t\,  z^*_\beta z^t [z_\beta, [z_\alpha, w z^t]] ; 
\end{aligned}
\end{equation}
Also, we take $\eta_t(f z^t)$ by substituting $f z^p$ in $\eta_t(f z^p)$ with $f z^t$.
\end{prop}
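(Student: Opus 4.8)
The plan is to verify directly that each listed element lies in $\WW_t(\g,f,k)$ and then to show, by a leading–term argument, that together they form a free generating set; this runs parallel to the treatment of the ordinary minimal case in Proposition \ref{Prop:generator} and uses the generation criterion of Proposition \ref{Prop:generator_aff}.

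First I would reduce the invariance condition to a single family of checks. For a minimal nilpotent one has $\g(i)=0$ for $i>1$, $\g(1)=\CC e$ and $\CC e=[\g(1/2),\g(1/2)]$, so $\g(1/2)$ generates $\n$ as a Lie algebra. Since the assignment $n\mapsto \ad_\lambda n$ of (\ref{Eqn:ad action}) respects the bracket of $\n$, an element annihilated by all $\ad_\lambda z_\gamma$ with $\gamma\in S(1/2)$ is automatically annihilated by $\ad_\lambda e$, hence by all of $\n$. Thus it suffices to prove $\ad_\lambda z_\gamma\big(\eta_t(g z^p)\big)=0$ for every $\gamma$ and every listed $gz^p$. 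I would also record the two facts that drive the computation: by (\ref{Eqn:ad action}) the generator $z_\gamma$ has $z$-degree $0$, so it acts on a monomial factor $a z^q$ by $[z_\gamma,a]z^q$ with an extra central term $k\lambda(z_\gamma|a)$ present only when $q=0$; and in $\mathcal{V}_t(\g,f,k)$ the ideal $I^t$ forces $e z^t=-(f|e)=-1$, since $\m=\CC e$ for a minimal $f$.

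With these reductions the annihilation computations run in parallel with the proof of Proposition \ref{Prop:generator}, the only new feature being the bookkeeping of $z$-powers. The three algebraic identities used there carry over verbatim: the completeness relation $\sum_{\alpha}z^*_\alpha[z_\alpha,[z_\gamma,u]]=[z_\gamma,u]$ for the dual bases of $\g(1/2)$, the vanishing $[e,\g_f(0)]=0$, and the Jacobi identity together with skewsymmetry. The successive tails of $\eta_t(\cdot)$ (the single, double, triple and quadruple sums, together with the $x z^t[e,\cdot]$ corrections) are arranged so that the $\ad_\lambda z_\gamma$-image of each tail cancels the image of the preceding one, exactly as the quadratic tail of $\phi_w$ cancels against its linear part in Proposition \ref{Prop:generator}. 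The place where the fractional structure enters nontrivially is that the central $k\lambda$-terms occur only on the $p=0$ factors; this accounts for the $\delta_{p,0}$ corrections in $\eta_t(v z^p)$, $\eta_t(w z^p)$ and $\eta_t(f z^p)$, while the reduction $e z^t=-1$ produces their constant terms. For $p=t$ one works with $\g_f$ in place of all of $\g$, and the computation collapses to the one already carried out for $\phi_v,\phi_w,\phi_f$.

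Finally I would establish free generation by a filtration argument as in Propositions \ref{Prop:main} and \ref{Prop:generator_aff}. Filtering $\mathcal{V}_t(\g,f,k)$ so that $\eta_t(g z^p)=g z^p+(\text{terms of strictly higher filtration degree})$, one sees that the leading symbols of the listed elements range over a basis of $\bigoplus_{0\le p<t}\g z^p\ \oplus\ \g_f z^t$, which is precisely the space of free generators predicted by the Hamiltonian reduction via Definition \ref{Prop:V_frac}. Linear independence of these symbols then yields, by the criterion of Proposition \ref{Prop:generator_aff}, that the elements themselves freely generate $\WW_t(\g,f,k)$ as a differential algebra. The main obstacle I expect is the quadruple sum in $\eta_t(f z^p)$, where repeated use of Jacobi produces many terms and the super signs, the $z$-degree shifts, and the single reduction $e z^t=-1$ must all be tracked simultaneously; verifying that the leading symbols form a \emph{complete} basis (so that one obtains generation, not merely membership) is the conceptual point rather than the computational one.
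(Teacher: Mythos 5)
Your membership computation is sound and is essentially the paper's own proof of this proposition: the paper likewise reduces to checking $\ad_\lambda z_\delta\big(\eta_t(g z^p)\big)=0$ for $\delta\in S(1/2)$ only, uses the reduction $[z_\beta, z^*_\alpha z^t]=\delta_{\alpha,\beta}$ in $\mathcal{V}_t(\g,f,k)$ (your observation that $ez^t=-1$ since $\m=\CC e$), and carries out the same successive cancellation of the tails, with the $\delta_{p,0}$ central-term bookkeeping exactly as you describe. Indeed, the paper's proof of the proposition establishes \emph{only} this membership statement and defers generation to the theorem that follows it.

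The gap is in your argument for generation. Proposition \ref{Prop:generator_aff} is not a self-contained ``independent leading symbols imply generation'' criterion: its proof begins by invoking Proposition \ref{Prop:main}, i.e.\ the BRST cohomology computation $H(S(R_-),d)\simeq S(\CC[\partial]\otimes J_{\g_f})$, which guarantees \emph{a priori} that $\WW(\g,f,k)$ has some free generating set with prescribed leading terms; only given that existence does the induction transfer generation to any other set with the same leading terms. No fractional BRST complex, and hence no analogue of Proposition \ref{Prop:main}, is constructed in the paper, so appealing to that criterion for $\WW_t(\g,f,k)$ is circular: linear independence of the symbols $g z^p$ only shows that $S\big(\CC[\partial]\otimes\eta_t(G_t)\big)$ is a \emph{free} differential subalgebra of $\WW_t(\g,f,k)$, not that it exhausts it. The needed upper bound is supplied in the paper by a direct decomposition argument: given $A\in\WW_t(\g,f,k)$, subtract $A'\in S\big(\CC[\partial]\otimes\eta_t(G_t)\big)$ so that $A-A'$ has no monomial built solely from $\CC[\partial]\otimes G_t$, observe that then $A-A'$ lies in $\WW_t(\g,f,k)\cap\big[\CC[\partial]\otimes\big(xz^t\oplus\bigoplus_{\alpha\in S(1/2)}z_\alpha z^t\big)\big]\cdot\mathcal{V}_t(\g,f,k)$, and prove that this intersection is zero. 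It is this vanishing statement, not the linear independence of leading symbols, that your filtration sketch is missing.
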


\begin{proof}
It is enough to show that $\ad_\lambda n (\eta_t(g))=0$ for any $n\in \n$ and $g\in \bigoplus_{i=0}^{t-1} \g[z] \oplus \g_f z^t.$ Here we show $\ad_\lambda n (\eta_t(w z^p))=0$ by direct computations. Other cases also can be proved similarly. Note that
\[ [z_\beta, z^*_\alpha z^t]= \delta_{\alpha,\beta} \in \mathcal{V}_t(\g,f,k).\]
Hence, for any $\delta\in S(1/2),$
\begin{equation*}
\begin{aligned}
& (i) \sum_{\alpha \in S(1/2)}\ad_\lambda z_\delta \big( z^*_\alpha[z_\alpha,w z^p] \big) = [z_\delta, w z^p]+\sum_{\alpha\in S(1/2)} (-1)^{p(\alpha)p(\delta)}z^*_\alpha z^t \ [z_\delta, [z_\alpha, w z^p]]; \\
& (ii)\quad  \ad_\lambda z_\delta (x z^t [e, w z^p]) = -\frac{1}{2} z_\delta z^t [e, w z^p] +x z^t [z_\delta, [e, w z^p]] ;\\
& (iii)\sum_{\alpha, \beta\in S(1/2)} \ad_\lambda z_\delta \big(  z^*_\alpha z^t \, z^*_\beta z^t  [z_\beta, [z_\alpha, w z^p]]\big) = \sum_{\beta\in S(1/2)} z^*_\beta z^t [z_\beta, [z_\delta, w z^p]]  \\
& + \sum_{\alpha \in S(1/2)} (-1)^{p(\alpha)p(\delta)} z^*_\alpha z^t [z_\delta, [z_\alpha, w z^p]
+ \sum_{\alpha, \beta\in S(1/2)} (-1)^{p(\delta)(p(\alpha)+p(\beta))} z^*_\alpha z^t \, z^*_\beta z^t [z_\delta, [z_\beta, [z_\alpha, w z^p]]]
\\
& = \sum_{\beta\in S(1/2)}(-1)^{p(\beta)p(\delta)} z^*_\beta z^t [z_\delta, [z_\beta, w z^p]]- z_\delta z^t [e, w z^p] \\
& + \sum_{\alpha \in S(1/2)} (-1)^{p(\alpha)p(\delta)} z^*_\alpha z^t [z_\delta, [z_\alpha, w z^p]
+  \sum_{\alpha, \beta\in S(1/2)} (-1)^{p(\delta)(p(\alpha)+p(\beta))} z^*_\alpha z^t \, z^*_\beta z^t [z_\delta, [z_\beta, [z_\alpha, w z^p]]].
\end{aligned}
\end{equation*}
The last equality holds by the fact that 
\begin{equation*}
\begin{aligned}
& \sum_{\beta\in S(1/2)} z^*_\beta z^t [[z_\beta, z_\delta], w z^p] =  \sum_{\beta\in S(1/2)} z^*_\beta z^t [([z_\beta, z_\delta]|f) e, w z^p] = -z_\delta z^t [e, w z^p].
\end{aligned}
\end{equation*}
Also, we have
\begin{equation*}
\begin{aligned}
& (iv)  \sum_{\beta\in S(1/2)} \ad_\lambda z_\delta \big( xz^t\, z^*_\beta z^t [z_\beta, [e, w z^p]] \big)\\
& = -\frac{1}{2} \sum_{\beta\in S(1/2)}z_\delta z^t \, z^*_\beta z^t [z_\beta, [e, w z^p]] +x z^t [z_\delta, [e, w z^p]]; \\
& (v)  \sum_{\alpha, \beta, \gamma\in S(1/2)} \ad_\lambda z_\delta \big( z^*_\alpha z^t\, z^*_\beta z^t\, z^*_\gamma z^t [z_\gamma, [z_\beta, [z_\alpha, w z^p]]]\big) \\
& = 3\sum_{\alpha, \beta\in S(1/2)} (-1)^{p(\delta)(p(\alpha)+p(\beta))} z^*_\alpha z^t \, z^*_\beta z^t [z_\delta, [z_\beta, [z_\alpha, w z^p]]]-3\sum_{\alpha\in S(1/2)}z_\delta z^t \, z^*_\alpha z^t [z_\alpha, [e, w z^p]]; \\
& (vi) \sum_{\alpha\in S(1/2)} \ad_\lambda z_\delta \big( \partial z^*_\alpha z^t (z_\alpha|w) \big) = \lambda (z_\delta|w)
\end{aligned}
\end{equation*}
By (i)-(vi), we have $\eta_t(w z^p)\in \WW_t(\g,f,k)$.
\end{proof}

\begin{rem}
Recall that $\{u_\alpha|\alpha \in \bar{S}\}$ and $\{u^\alpha|\alpha \in \bar{S}\}$ are dual bases of $\g.$
We note that  the image of  
\[  - \sum_{\alpha \in \bar{S}}  u_\alpha z^t \, u^\alpha z^t  \in S(\CC[\partial]\otimes \g z^t )\]
by the quotient map $S(\CC[\partial]\otimes \g[z])\to \mathcal{V}_t(\g,f,k)$ 
is an element in $\WW_t(\g,f,k)$.  
\end{rem}

Moreover, we have the following theorem.

\begin{thm}
The fractional W-(super)algebra $\WW_t(\g,f,k)$ associated to a minimal nilpotent $f\in \g$ is isomorphic to the differential algebra of polynomials generated by a basis of the space $G_t:=\bigoplus_{p=0}^{t-1}\g z^p\oplus \g_f z^t.$ Moreover, we have 
\[ \WW_t(\g,f,k)= S\big(\CC[\partial]\otimes \eta_t(G_t)\big),\]
where $\eta_t:G_t \to \WW_t(\g,f,k)$ is the linear map defined by Proposition \ref{Prop: frac_gen}. 
\end{thm}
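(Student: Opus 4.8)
The plan is to exhibit $\WW_t(\g,f,k)$ as the image of an explicit projection and to reduce the theorem to an injectivity statement. As a differential algebra $\mathcal{V}_t(\g,f,k)$ is freely generated by $\bigoplus_{p=0}^{t-1}\g z^p\oplus\g_{<1}z^t$, where $\g_{<1}=\bigoplus_{i\le 1/2}\g(i)$; indeed, for a minimal nilpotent $\m=\CC e$, so the ideal $I^t$ only imposes $ez^t=-1$. Using the decomposition $\g_{<1}=\g_f\oplus\CC x\oplus\g\big(\tfrac12\big)$, I would introduce the differential algebra homomorphism
\[ p\colon \mathcal{V}_t(\g,f,k)\longrightarrow S(\CC[\partial]\otimes G_t) \]
that is the identity on $\bigoplus_{p<t}\g z^p$ and projects $\g_{<1}z^t$ onto $\g_f z^t$, i.e.\ $p$ annihilates $x z^t$ and $\g\big(\tfrac12\big)z^t$. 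Since $\WW_t(\g,f,k)$ is a differential subalgebra of $\mathcal{V}_t(\g,f,k)$, the whole theorem is equivalent to the statement that $p$ restricts to a differential algebra isomorphism $\WW_t(\g,f,k)\xrightarrow{\ \sim\ }S(\CC[\partial]\otimes G_t)$.

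The surjectivity and the free generation both follow from the single observation that $p(\eta_t(g))=g$ for every $g\in G_t$. This is read off from Proposition \ref{Prop: frac_gen}: the leading term of each $\eta_t(g)$ is $g$ itself, while every correction term carries a factor $z^*_\alpha z^t$ (with $z^*_\alpha\in\g(\tfrac12)$) or $x z^t$ and hence lies in $\ker p$. Applying the homomorphism $p$ to any differential-polynomial relation among the $\eta_t(g)$ turns it into the same relation among the free generators $g$ of $S(\CC[\partial]\otimes G_t)$; thus the $\eta_t(g)$ are algebraically independent, which proves free generation. Likewise $p(\WW_t(\g,f,k))$ is a differential subalgebra of $S(\CC[\partial]\otimes G_t)$ containing every generator $g=p(\eta_t(g))$, so $p|_{\WW_t}$ is onto. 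It remains to prove $\ker p\cap\WW_t(\g,f,k)=0$.

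This injectivity is the main obstacle. I would prove it using the grading $\deg(a z^p)=t\,j-p$ for $a\in\g(j)$, with $\partial$ and scalars of degree $0$. The virtue of this particular normalization is that it is compatible with the relation $ez^t=-1$ (both sides have degree $0$), so it descends to a genuine grading of $\mathcal{V}_t(\g,f,k)$; moreover each $\lambda$-coefficient of $\ad_\lambda n$ is homogeneous and raises the degree by $t\cdot\mathrm{wt}(n)>0$, so $\WW_t(\g,f,k)$ is a graded subspace and $\ker p$ a graded ideal. It therefore suffices to prove that a homogeneous $A\in\WW_t(\g,f,k)$ lying in the ideal generated by $x z^t$ and $\g(\tfrac12)z^t$ must vanish. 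The key structural input is that these generators of $\ker p$ are exactly $\tfrac12\,\ad_0 e\,(f z^t)$ and $\ad_0 e\,(\g(-\tfrac12)z^t)$, i.e.\ they lie in the image of the square-zero raising derivation $\delta_e=\ad_0 e$, while $A$ is $\delta_e$-closed. Running a downward induction on $\deg$, parallel to the proof of Proposition \ref{Prop:generator_aff} and to the vanishing $\WW\cap S(\CC[\partial]\otimes\bigoplus_{i>0}\g(i))=0$ used there, I would show that the lowest-degree component of such an $A$ is forced to be zero, and hence $A=0$.

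Finally, the super-sign bookkeeping in the Leibniz expansions and in matching leading terms is routine and runs as in Propositions \ref{Prop: frac_gen} and \ref{Prop:aff_bracket}. Once injectivity is in hand, $p|_{\WW_t}$ is the desired isomorphism with inverse determined by $\eta_t$, giving $\WW_t(\g,f,k)=S(\CC[\partial]\otimes\eta_t(G_t))$ with the $\eta_t(g)$ as free generators.
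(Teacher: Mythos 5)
Your proposal follows essentially the same route as the paper's proof. The paper's argument is: given $A\in\WW_t(\g,f,k)$, subtract an $A'\in S(\CC[\partial]\otimes\eta_t(G_t))$ so that $A-A'$ has no term in $S(\CC[\partial]\otimes G_t)$, observe that then
\[ A-A'\in \WW_t(\g,f,k)\cap\Big[\CC[\partial]\otimes\Big(xz^t\oplus\bigoplus_{\alpha\in S(1/2)}z_\alpha z^t\Big)\Big]\cdot\mathcal{V}_t(\g,f,k), \]
and assert that this intersection vanishes. Your homomorphism $p$ is exactly a repackaging of this: $\ker p$ is that ideal, the identity $p\circ\eta_t=\mathrm{id}$ is the leading-term statement that makes the subtraction of $A'$ possible, and your surjectivity and algebraic-independence arguments via $p$ are a clean (and slightly more explicit) version of what the paper does. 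Both proofs therefore stand or fall with the same key claim, $\WW_t(\g,f,k)\cap\ker p=0$, which the paper dismisses with ``it is not hard to see.''

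The gap is in your attempted justification of that key claim. The assertion that $\delta_e=\ad_0 e$ is a \emph{square-zero} derivation is false: $\ad e$ is the raising operator of an $\sll_2$-action, not a differential. Concretely, since $[e,f]=2x$ and $[e,x]=-e$, one has $(\ad_0 e)^2(fz^t)=\ad_0 e(2xz^t)=-2ez^t=2\neq 0$ in $\mathcal{V}_t(\g,f,k)$, and likewise $(\ad_0 e)^2(fz^p)=-2ez^p\neq 0$ for $p<t$. So the cohomological mechanism you invoke (a $\delta_e$-closed element lying in an ideal generated by $\delta_e$-exact elements must vanish) does not exist as stated; and even if $\delta_e$ were square-zero, ``closed $\cap$ ideal-of-exacts $=0$'' would still require an argument. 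Your grading $\deg(az^p)=tj-p$ \emph{is} sound --- it is compatible with $ez^t=-1$ and each $\lambda$-coefficient of $\ad_\lambda n$ is homogeneous of degree $t\cdot\mathrm{wt}(n)$, so $\WW_t(\g,f,k)$ and $\ker p$ are graded and one may assume $A$ homogeneous --- but after that your induction is not well formed (a homogeneous $A$ has no ``lowest-degree component''), so the injectivity statement remains unproved. What is actually needed is an argument exploiting the annihilation-operator structure $\ad_0 z_\delta(z^*_\alpha z^t)=\delta_{\delta\alpha}$, $\ad_0 z_\delta(xz^t)=-\tfrac12 z_\delta z^t$, in the spirit of the vanishing $\WW(\g,f,k)\cap S(\CC[\partial]\otimes\bigoplus_{i>0}\g(i))=0$ used in Proposition \ref{Prop:generator_aff}; your sketch does not supply this, so the central step of the theorem is left with a flawed justification rather than a proof.
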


\begin{proof}
Let $A\in \WW_t(\g,f,k).$ We can find $A'\in S(\CC[\partial]\otimes \eta_t(G_t))$ such that $A-A'$ does not have a term in $S(\CC[\partial]\otimes G_t).$ Hence 
\[ A-A'\quad \in \quad \WW_t(\g,f,k)\cap  \left[\CC[\partial]\otimes \left( x z^t\oplus \bigoplus_{\alpha \in S(1/2)} z_\alpha z^t \right) \right]\cdot \mathcal{V}_t(\g,f,k) .\]
It is not hard to see that 
\[\WW_t(\g,f,k)\cap  \left[\CC[\partial]\otimes \left( x z^t\oplus \bigoplus_{\alpha \in S(1/2)} z_\alpha z^t \right) \right]\cdot \mathcal{V}_t(\g,f,k)=0.\]
Hence $A\in S\big(\CC[\partial]\otimes \eta_t(G_t)\big).$
\end{proof}

The previous proposition and theorem can be restated as follows.

\begin{thm} \label{Thm:frac_gen}
Let $\g$ be a Lie superalgebra with a minimal nilpotent $f$. Then the affine classical fractional W-(super)algebra $\WW_t(\g,f,k)$ is a  differential algebra generated by the following free generators:
\begin{equation}
\eta_t (g z^p)= \left\{ 
\begin{array}{ll}
\eta'_t(g z^p) & \text{ if } g\in \bigoplus_{i=0}^{1} \g ;\\
\eta'_t(g z^p) -k \delta_{p,0} \sum_{\alpha\in S(1/2)
} \partial z^*_\alpha z^t (z_\alpha | w) & \text{ if } g\in \g\left(-\frac{1}{2} \right); \\
\eta'_t (f z^p)-k\big(\partial x z^t -\sum_{\alpha\in S(1/2)} \frac{1}{2}(\partial z^*_\alpha z^t) z_\alpha z^t \big) &  \text{ if } g=f;
\end{array}\right.
\end{equation}
where $p=0, \cdots, t$ and 

\begin{equation}
\eta'_t (g z^p)= \sum_{s\geq 0} \ \sum_{\alpha_1, \cdots, \alpha_s \in  S(1/2)\cup \{0\} }(-1)^{s}\frac{1}{s!}
\left( \prod_{i=1}^s z^*_{\alpha_i} z^t\right) [z_{\alpha_s}, [z_{\alpha_{s-1}}, [\cdots, [z_{\alpha_1}, g z^p]\cdots ]]].
\end{equation}
for $z^*_0=x$ and $z_0=e.$
\end{thm}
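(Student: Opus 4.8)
The plan is to treat Theorem~\ref{Thm:frac_gen} as a repackaging of Proposition~\ref{Prop: frac_gen} together with the theorem immediately preceding it, so that the only genuinely new content is the verification that the single closed formula for $\eta'_t(gz^p)$ reproduces, term by term, the case-by-case expressions of Proposition~\ref{Prop: frac_gen}, after which free generation is inherited. Concretely, Proposition~\ref{Prop: frac_gen} already establishes that each listed element lies in $\WW_t(\g,f,k)=\mathcal{V}_t(\g,f,k)^{\ad_\lambda\n}$, and the preceding theorem shows that the image $\eta_t(G_t)$ with $G_t=\bigoplus_{p=0}^{t-1}\g z^p\oplus\g_f z^t$ freely generates $\WW_t(\g,f,k)$ as a differential algebra. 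Thus it suffices to prove the algebraic identity between the compact and the expanded forms of the generators.

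First I would expand the sum $\eta'_t(gz^p)=\sum_{s\geq0}\frac{(-1)^s}{s!}(\prod_{i=1}^s z^*_{\alpha_i}z^t)[z_{\alpha_s},[\cdots,[z_{\alpha_1},gz^p]\cdots]]$ by grading the index set $S(1/2)\cup\{0\}$ into the genuine half-integer indices and the distinguished index $0$, for which $z^*_0=x$ and $z_0=e$. The key structural input is the minimality of $f$: since $\g=\bigoplus_{|i|\le1}\g(i)$, an iterated adjoint action $\ad z_{\alpha_s}\cdots\ad z_{\alpha_1}$ of total $\g$-degree exceeding $1$ annihilates any element of $\g$, so the outer sum truncates to small $s$ and every monomial containing too many $0$-indices (each $\ad z_0=\ad e$ contributing an extra unit of degree) drops out. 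This both guarantees finiteness of the expression and pins down exactly which iterated brackets survive for each homogeneous $g$.

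The main obstacle is the coefficient-and-sign bookkeeping that matches the symmetrized sum with coefficient $(-1)^s/s!$ against the asymmetric-looking formulas in Proposition~\ref{Prop: frac_gen}. The decisive case is the mixed $s=2$ term: placing the index $0$ in either slot yields, after using $[e,z_\beta]=0$ (as $\g(3/2)=0$) and the Jacobi identity $[e,[z_\beta,gz^p]]=[z_\beta,[e,gz^p]]$, two equal contributions whose sum cancels the factor $\tfrac12$ and reproduces the explicit term $x z^t\,z^*_\beta z^t[z_\beta,[e,gz^p]]$; the analogous combinatorial counting of $0$-placements handles higher $s$, while the Koszul signs $(-1)^{p(\alpha_i)p(\alpha_j)}$ must be tracked through each transposition of the odd factors $z^*_{\alpha_i}z^t$. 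Finally I would append the $k$-dependent corrections $-k\delta_{p,0}\sum_\alpha\partial z^*_\alpha z^t(z_\alpha|g)$ for $g\in\g(-1/2)$ and $-k(\partial x z^t-\tfrac12\sum_\alpha(\partial z^*_\alpha z^t)z_\alpha z^t)$ for $g=f$, which are precisely the extra terms in Proposition~\ref{Prop: frac_gen}, and conclude by the preceding theorem that the resulting family is a set of free differential-algebra generators of $\WW_t(\g,f,k)$.
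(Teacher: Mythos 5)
Your overall route is in fact the paper's own: the paper offers no independent proof of Theorem \ref{Thm:frac_gen}, saying only that the previous proposition and theorem "can be restated as follows," so the entire content is what you identify --- match the closed formula $\eta'_t$ against the case-by-case expressions of Proposition \ref{Prop: frac_gen}, then inherit membership in $\WW_t(\g,f,k)$ and free generation from that proposition and the unnamed theorem preceding the statement. Your truncation argument (minimality of $f$ kills iterated brackets of total degree exceeding $1$) and your counting of placements of the index $0$ in the $s=2$ term, using $[e,z_\beta]=0$ and $[e,[z_\beta,\cdot]]=[z_\beta,[e,\cdot]]$, do reproduce the proposition's terms for $p=0,\dots,t-1$, where the identity is a formal Lie-bracket identity (those monomials never meet the ideal $I^t$), and the $k$-corrections are appended exactly as in the proposition.

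There is, however, a concrete gap at $p=t$. There the proposition's generators have a genuinely different shape: $\eta_t(vz^t)=vz^t-\tfrac12\sum_\alpha z^*_\alpha z^t[z_\alpha,vz^t]$ and $\eta_t(wz^t)=wz^t-\sum_\alpha z^*_\alpha z^t[z_\alpha,wz^t]+\tfrac13\sum_{\alpha,\beta}z^*_\alpha z^t\, z^*_\beta z^t[z_\beta,[z_\alpha,wz^t]]$, with coefficients $\tfrac12$ and $\tfrac13$ and no $xz^t$ terms, whereas the closed formula produces coefficient $1$ on the $s=1$ term, $\tfrac12$ on $s=2$, $-\tfrac16$ on $s=3$, plus mixed terms containing $xz^t$. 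These are \emph{not} equal as polynomial expressions, and no amount of Jacobi-identity or Koszul-sign bookkeeping will make them so; they agree only modulo $I^t$, because at $p=t$ the innermost iterated brackets land in $\g(1)z^t=\CC\, ez^t$ and one must use the relation $ez^t\equiv -1$ together with the completeness relation $\sum_\beta z^*_\beta\,(f|[z_\beta,a])=-a$ for $a\in\g\left(\tfrac12\right)$ (up to Koszul signs). For instance, for $w\in\g\left(-\tfrac12\right)$ the reduced $s=3$ term contributes $-\tfrac16$ of the pure $s=2$-type term, which combines with the coefficient $\tfrac12$ to yield the proposition's $\tfrac13$, and the reduced mixed $s=2$ term exactly cancels the $s=1$ term $-xz^t[e,wz^t]$. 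Your plan, carried out literally, would therefore find a mismatch at $p=t$, and the transfer of free generation from the preceding theorem (which is stated for the proposition's $\eta_t$ on $G_t=\bigoplus_{p=0}^{t-1}\g z^p\oplus\g_f z^t$) would fail precisely for those generators; relatedly, the restriction to $g\in\g_f$ when $p=t$, which the theorem's phrase "$p=0,\cdots,t$" glosses over, only becomes available once this mod-$I^t$ identification is actually carried out.
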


Consider the supersymmetric algebra $S(\g_{[t,d]})$ endowed with the Poisson bracket induced by that of $S(\g[z])$ defined as follows:
\begin{equation}
\{a z^p, b z^q\}=  \left\{
\begin{array}{ll}
[a,b] & \text{ if } p=q=0,\\
0 & \text{ if only one of } p,  q \text{ is }0, \\
-[a,b] z^{p+q} & \text{ if } p\neq 0, \ q\neq 0.
\end{array}\right.
\end{equation}

Denote by $\mathcal{I}^{fin}_{[t,d]}$  the ideal of $S(\g_{[t,d]})$ generated by $\{ez^t+1, f z^{t+1}+1\}$ and let 
\begin{equation}
\overline{\eta'_t} (g z^p)= \sum_{s\geq 0} \ \sum_{\alpha_1, \cdots, \alpha_s \in  S(1/2)\cup \{0\} }(-1)^{s}\frac{1}{s!}
\left( \prod_{i=1}^s z^*_{\alpha_i} z^t\right) [z_{\alpha_s}, [z_{\alpha_{s-1}}, [\cdots, [z_{\alpha_1}, g z^p]\cdots ]]]
\end{equation}
be an element in $S(\g_{[t,d]})$.

Then we get the following lemma which is useful to find $\lambda$-brackets between generators of $\WW_t(\g,f,k)$ in Theorem \ref{Thm:frac_gen}.

\begin{lem} \label{Lem:brac_frac_1}
We have the following formula:
\begin{equation}
\{\overline{\eta'}_t (g_1 z^p), \overline{\eta'}_t (g_2 z^q)\} +\mathcal{I}^{fin}_{[t,d]}= \overline{\eta'}_t(\{g_1 z^p, g_2 z^q\})+\mathcal{I}^{fin}_{[t,d]}
\end{equation}
where $g_1 (\text{resp. } g_2) \in \g$ if $p\neq 1$ (resp. $q\neq 1$) and $g_1 (\text{resp. } g_2) \in \bigoplus_{i>-1} \g(i)$ if $p=1$ (resp. $q=1$). 
\end{lem}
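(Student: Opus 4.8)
The plan is to exhibit $\overline{\eta'_t}$ as a \emph{dressing transformation} and to extract the homomorphism property from the fact that the dressing intertwines the Lie bracket of $\g$. Writing $z_0=e$ and $z^*_0=x$ as in Theorem \ref{Thm:frac_gen}, I would introduce the operator $\tilde Y=\sum_{\alpha\in S(1/2)\cup\{0\}}(z^*_\alpha z^t)\otimes \ad z_\alpha$ on $S(\g_{[t,d]})\otimes\g$, where the first factor multiplies in the symmetric algebra and $\ad z_\alpha$ acts on the $\g$-slot. Since each $z_\alpha$ raises the $\g(i)$-grading, $\tilde Y$ is nilpotent on $1\otimes g$, and a direct unwinding of the definition gives
\[ \overline{\eta'_t}(g z^p)=m_p\big(\Psi(g)\big),\qquad \Psi=\exp(-\tilde Y),\qquad m_p\Big(\textstyle\sum_k C_k\otimes h_k\Big)=\sum_k C_k\,(h_k z^p). \]
Thus $\overline{\eta'_t}(gz^p)$ carries the \emph{same} coefficient factors $C_k\in S(\g_{[t,d]})$ --- each a product of the power-$t$ elements $z^*_\alpha z^t$ --- and the same $\g$-cores $h_k\in\g$ for every $p$; only the power attached to the core changes. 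Because $\tilde Y$ acts by a graded derivation of the $S(\g_{[t,d]})$-linear $\g$-bracket on the second factor, $\Psi$ is a graded automorphism, so $\Psi([g_1,g_2])=[\Psi(g_1),\Psi(g_2)]$. This intertwining is the backbone of the argument.

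First I would expand $\{\overline{\eta'_t}(g_1z^p),\overline{\eta'_t}(g_2z^q)\}$ by the Poisson Leibniz rule into three families of terms, according to which factors are paired: \emph{core--core} brackets $\{h_k z^p, h_l z^q\}$, \emph{core--coefficient} brackets $\{h_k z^p, z^*_\beta z^t\}$ (and the symmetric ones), and \emph{coefficient--coefficient} brackets $\{z^*_\alpha z^t, z^*_\beta z^t\}$. By the defining bracket of $S(\g_{[t,d]})$, the core--core terms give $[h_k,h_l]$ when $p=q=0$, vanish when exactly one of $p,q$ is $0$, and give $-[h_k,h_l]z^{p+q}$ when $p,q\neq0$. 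Summed against the untouched coefficients $C_kC_l$, the intertwining property turns these into $\overline{\eta'_t}([g_1,g_2])$, into $0$, and into $\overline{\eta'_t}(-[g_1,g_2]z^{p+q})$ respectively --- which is exactly $\overline{\eta'_t}(\{g_1z^p,g_2z^q\})$ in each of the three cases of the bracket definition.

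It then remains to prove that the core--coefficient and coefficient--coefficient contributions lie in $\mathcal{I}^{fin}_{[t,d]}$. The two mechanisms are the truncation defining $\g_{[t,d]}$ and the relations $ez^t\equiv-1$, $fz^{t+1}\equiv-1$. A coefficient--coefficient bracket equals $-[z^*_\alpha,z^*_\beta]z^{2t}\in\CC f\,z^{2t}$; for $t\geq2$ its grading exceeds $(1+d)t+1$ and it is already $0$ in $\g_{[t,d]}$, whereas for $t=1$ it is reduced by $fz^{t+1}\equiv-1$ to an element supported at power $0$, and these must be shown to cancel against the remaining corrections. A core--coefficient bracket $\{h z^p,z^*_\beta z^t\}$ vanishes outright when $p=0$ and otherwise equals $-[h,z^*_\beta]z^{p+t}$, whose grading again exceeds the truncation bound except on the boundary, where it is absorbed using $ez^t\equiv-1$. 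Here the hypothesis $g\in\bigoplus_{i>-1}\g(i)$ in the case $p=1$ is precisely what prevents an $f$-core from generating an uncontrolled $fz^{t+1}$ term and guarantees that every such boundary correction either vanishes or reduces consistently.

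The main obstacle is this last step: unlike the $p=0$ situation, $\overline{\eta'_t}$ is \emph{not} the restriction of a single fixed Poisson automorphism uniformly in $p$, so the corrections do not simply disappear and instead require genuine cancellations across different orders, tracked simultaneously with the super-signs (via $[z_\alpha,z^*_\beta]=-e$ and the parities $p(z_\alpha)=p(z^*_\alpha)$), the graded (anti)symmetry of the structure constants, the grading truncation, and the two ideal relations. Once the three bracket cases are matched by the core--core terms and the corrections are shown to lie in $\mathcal{I}^{fin}_{[t,d]}$, the identity $\{\overline{\eta'_t}(g_1z^p),\overline{\eta'_t}(g_2z^q)\}\equiv\overline{\eta'_t}(\{g_1z^p,g_2z^q\})$ follows; by bilinearity and the Leibniz rule it then propagates from building blocks to arbitrary products, which is how the lemma will feed into the computation of the brackets among the free generators of Theorem \ref{Thm:frac_gen}.
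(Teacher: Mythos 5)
Your overall architecture --- expand $\{\overline{\eta'}_t(g_1z^p),\overline{\eta'}_t(g_2z^q)\}$ by the Leibniz rule, let the core--core brackets reassemble into $\overline{\eta'}_t(\{g_1z^p,g_2z^q\})$ via the derivation/automorphism property of $\exp(-\tilde Y)$, and kill everything that touches a coefficient factor --- is the same as the paper's; the paper merely phrases the reassembly as a Vandermonde-type convolution of the sums over $s$ rather than as an exponential, which is cosmetic. The genuine gap is in the step you yourself flag as "the main obstacle": you never prove that the core--coefficient and coefficient--coefficient contributions vanish; you only describe mechanisms ("reduced by $fz^{t+1}\equiv-1$", "absorbed using $ez^t\equiv-1$", "genuine cancellations across different orders") and defer them. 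Worse, the premise behind the deferral is a computational error: since $z^*_\alpha, z^*_\beta\in\g\left(\frac{1}{2}\right)$, the bracket $[z^*_\alpha,z^*_\beta]$ lies in $\g(1)=\CC e$, \emph{not} in $\CC f$ as you claim. Consequently $\{z^*_\alpha z^t,z^*_\beta z^t\}=-[z^*_\alpha,z^*_\beta]z^{2t}$ has $\gr$-degree $2t(1+d)+1>(1+d)t+1$ and is already zero in $\g_{[t,d]}$ for \emph{every} $t\geq 1$, including $t=1$: the only elements of $z$-power greater than $t$ that survive the truncation lie in $\g(-d)z^{t+1}=\CC f z^{t+1}$, and a $\g(1)$-valued bracket never reaches that line. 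So the $t=1$ cancellation you postpone does not exist, and nothing in your text supplies it.

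The same applies to the core--coefficient terms: no ideal relation is needed there either, which is why the whole lemma holds on the nose and the ideal in the statement is harmless. For a core at power $q\geq 2$ the bracket lands at power $t+q\geq t+2$ and is truncated outright; for $q=1$ the term $-[z^*_\beta,h]z^{t+1}$ survives only if $[z^*_\beta,h]\in\g(-1)=\CC f$, which for $z^*_\beta\in\g\left(\frac{1}{2}\right)$ would force $h\in\g\left(-\frac{3}{2}\right)$ (nonexistent for a minimal nilpotent), and for $z^*_0=x$ would force $h\in\CC f$ --- exactly what the hypothesis $g\in\bigoplus_{i>-1}\g(i)$ rules out, because the nested operators $\ad z_\alpha$ and $\ad e$ only raise the $\ad x$-eigenvalue of the core. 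This observation is precisely the paper's one-line key fact, $\{\n z^t,\, g z^p\}=\{x z^t,\, g z^p\}=0$ under the stated hypotheses: once it is in hand, every bracket involving a coefficient factor vanishes identically, the Leibniz expansion collapses to core--core terms, and your intertwining step finishes the proof. Replace your deferred "cancellations" by this grading argument and your proposal becomes a complete proof; as written, it is incomplete exactly at its central step.
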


\begin{proof}
Using the fact that
 $\{\n z^t,  g z^p\}=\{ x z^t, g z^p\}=0$ if $ g\in \left\{ \begin{array}{ll} \g & \text{ if } p\neq 1, \\  \bigoplus_{i>-1} \g(i) & \text{ if } p=1, \end{array} \right.$ and Leibniz rules, we can show that 
\begin{equation}
\begin{aligned}
 &  \sum_{\alpha_1, \cdots, \alpha_s \in  S(1/2)\cup \{0\} }(-1)^{s}\frac{1}{s!} 
\left( \prod_{i=1}^s z^*_{\alpha_i} z^t\right) [z_{\alpha_s}, [z_{\alpha_{s-1}}, [\cdots, [z_{\alpha_1}, \{g_1 z^p, g_2 z^q\}]\cdots ]]] +\mathcal{I}^{fin}_{[t,d]} \\
& \displaystyle = \sum_{s'=0}^s \frac{1}{s'!(s-s')!}  \mathop{\sum_{\beta_1, \cdots, \beta_{s'} \in  S(1/2)\cup \{0\} }}_{\gamma_1, \cdots, \gamma_{s-s'} \in S(1/2)\cup \{0\}}   \left\{\left( \prod_{i=1}^{s'} z^*_{\beta_i} z^t\right)  [z_{\beta_{s'}}, [z_{\beta_{s'-1}}, [\cdots, [z_{\beta_1}, g_1 z^p]\cdots ]]]\, , \right.\\
 &\left. \left( \prod_{j=1}^{s-s'} z^*_{\gamma_j} z^t\right) [z_{\gamma_{s-s'}}, [z_{\gamma_{s-s'-1}}, [\cdots, [z_{\gamma_1},  g_2 z^q]\cdots ]]]\right\}+\mathcal{I}^{fin}_{[t,d]} .
\end{aligned}
\end{equation}
Hence we can prove our assertion by direct computations.
\end{proof}

\begin{lem} \label{Lem:bracket_fz}
Let $g z^p\in \g_{[t,d]}$ for $g\in \g$ and $p=0, \cdots, t$. Then the following equality holds.
\begin{equation} \label{Eqn:bracket_fz}
\{\overline{\eta'}_t (f z), \overline{\eta'}_t (g z^p)\}+\mathcal{I}^{fin}_{[t,d]} = - \overline{\eta'}_t([e,g z^p])+\overline{\eta'}_t(\{f z, g z^p\})+\mathcal{I}^{fin}_{[t,d]}.
\end{equation}
\end{lem}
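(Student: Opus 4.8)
The plan is to prove Lemma \ref{Lem:bracket_fz} by rerunning the computation behind Lemma \ref{Lem:brac_frac_1} essentially verbatim and isolating the single place where its hypothesis breaks down. The formula of Lemma \ref{Lem:brac_frac_1} applies whenever the first entry $g_1 z^p$ satisfies $g_1\in\bigoplus_{i>-1}\g(i)$ in the case $p=1$; here the first entry is $fz$ with $f\in\g(-1)$, so exactly this grading hypothesis is violated. Recall that for a minimal nilpotent one has $d=1$, hence $\gr(z)=2$ and the truncation degree defining $\g_{[t,d]}$ is $2t+1$, and that the proof of Lemma \ref{Lem:brac_frac_1} rests on the vanishing of $\{\n z^t, g_1z^p\}$ and $\{xz^t, g_1z^p\}$, which renders the dressing factors $z^*_\alpha z^t$ inert and leaves only $\overline{\eta'}_t(\{g_1z^p,g_2z^q\})$. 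First I would check which of these vanishings survive when $g_1z^p=fz$: one computes $\{z^*_\alpha z^t, fz\}=-[z^*_\alpha,f]z^{t+1}$ with $[z^*_\alpha,f]\in\g(-1/2)$ and $\{ez^t,fz\}=-2xz^{t+1}$, both of $\gr$ equal to $2t+\tfrac32$ and $2t+2$ respectively, hence zero in $\g_{[t,d]}$, whereas $\{xz^t,fz\}=-[x,f]z^{t+1}=fz^{t+1}\equiv -1\pmod{\mathcal{I}^{fin}_{[t,d]}}$ does \emph{not} vanish. Thus a single new family of terms survives.

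Next I would identify this family precisely. Expanding $\{\,\overline{\eta'}_t(fz)\,,\,\overline{\eta'}_t(gz^p)\,\}$ by the super-Leibniz rule and discarding every pairing that dies by the truncation and by the relations $ez^t\equiv-1$, $fz^{t+1}\equiv-1$ (just as in Lemma \ref{Lem:brac_frac_1}), the only contributions beyond the expected $\overline{\eta'}_t(\{fz,gz^p\})$ come from bracketing the innermost $fz$, i.e.\ the $s=0$ core of $\overline{\eta'}_t(fz)$, against the dressing factors $z^*_0 z^t=xz^t$ sitting in the $\alpha_i=0$ slots of $\overline{\eta'}_t(gz^p)$; each such pairing yields the constant $\{fz,xz^t\}=-fz^{t+1}\equiv 1$. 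The key structural input is that $\ad e$ commutes with every $\ad z_\alpha$ occurring in the dressing, since $[e,z_\alpha]\in\g(3/2)=0$ for $\alpha\in S(1/2)$ and $[e,e]=0$; this lets me slide the surviving $\ad e=\ad z_0$ from its slot down to the innermost position, where it acts directly on $gz^p$ and produces $[e,gz^p]$. Summing over the $s'$ available slots, the combinatorial factor becomes $\tfrac{(-1)^{s'}}{s'!}\cdot s'=\tfrac{(-1)^{s'}}{(s'-1)!}$, which together with the value $+1$ of the constant resums these terms into $-\overline{\eta'}_t([e,gz^p])$. Adding the main term then gives the claimed identity.

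The main obstacle I anticipate is the sign and combinatorial bookkeeping in this resummation: one must track the Koszul signs generated by the super-Leibniz rule and by permuting the odd generators $z_\alpha,z^*_\alpha$ past one another and past $g$, and verify that, after moving $\ad e$ inward, the leftover dressing $\prod_{j\ne i}z^*_{\alpha_j}z^t$ together with the nested brackets reassembles into precisely a term of $\overline{\eta'}_t([e,gz^p])$ carrying the correct overall sign $-1$. A secondary point requiring care is the exhaustive verification that no further pairing contributes, for instance a dressing factor of $\overline{\eta'}_t(fz)$ against the core $gz^p$, or a dressing-against-dressing term; all of these vanish by the degree count above, using crucially that for a general $gz^p$ with $p\neq 1$ one has $\{xz^t,gz^p\}=0$ either by the rule for a single vanishing power or by truncation.
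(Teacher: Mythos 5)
Your proposal reproduces the paper's own proof: the same key computations ($\{fz,\,z^*_\alpha z^t\}=0$ by truncation, $\{fz,\,xz^t\}=-fz^{t+1}\equiv 1$ modulo $\mathcal{I}^{fin}_{[t,d]}$), the same use of $[z_\alpha,[e,A]]=[e,[z_\alpha,A]]$ and of the commutativity of $xz^t$ with the dressing factors to slide $\ad e$ to the innermost slot, and the same $\tfrac{s}{s!}=\tfrac{1}{(s-1)!}$ resummation identifying the surviving family with $-\overline{\eta'}_t([e,gz^p])$. Even your tacit assumption that dressing-against-core pairings of the form $\{xz^t, gz^p\}$ all vanish (which silently requires the $\g(-1)$-component of $g$ to be absent when $p=1$; for $gz^p=fz$ this pairing survives and the stated formula would contradict skew-symmetry of the finite bracket) is exactly the assumption the paper's proof makes, so your argument is correct to precisely the same extent as the original.
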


\begin{proof}
Observe that 
\begin{equation} \label{Eqn:key_lem_fz}
 \{f z, x z^t\}= -f z^{t+1}\in1+\mathcal{I}^{fin}_{[t,d]}, \quad \{ fz, z^*_\alpha z^t\}=0 \text{ for } \alpha \in S(1/2).
 \end{equation}

The second term in the RHS of (\ref{Eqn:bracket_fz}) follows from the same argument as that of Lemma \ref{Lem:brac_frac_1}.

The first term  in the RHS of (\ref{Eqn:bracket_fz}) follows from (\ref{Eqn:key_lem_fz}). For details, observe that 
\[ x z^t \ z^*_\alpha z^t= z^*_\alpha z^t \ x z^t, \quad [z_\alpha, [e, A]]= [e, [z_\alpha, A]] . \]
Hence, for any $\alpha \in S(1/2)\cup\{0\}$, the term with $x z^t$ in $\overline{\eta'}_t( g z^p)$ can be written as 
\[ \sum_{s\geq 1} \frac{1}{(s-1)!} (-1)^s x z^t  \left(\prod_{i=1}^{s-1}  z^*_{\alpha_i} z^t \right) [ z_{\alpha_{s-1}}, [\cdots, [z_{\alpha_1}, [e, g z^p]]\cdots]].\]
Hence 
\begin{equation}
\begin{aligned}
&  \{\overline{\eta'}_t (f z), \overline{\eta'}_t (g z^p)\}-\overline{\eta'}_t(\{f z, g z^p\}) \\
& =  \sum_{s\geq 1} \{ f z, x z^t \} \frac{1}{(s-1)!} (-1)^s  \left(\prod_{i=1}^{s-1}  z^*_{\alpha_i} z^t \right) [ z_{\alpha_{s-1}}, [\cdots, [z_{\alpha_1}, [e, g z^p]]\cdots]]\\
\end{aligned}
\end{equation}
and 
\begin{equation}
 \{\overline{\eta'}_t (f z), \overline{\eta'}_t (g z^p)\}-\overline{\eta'}_t(\{f z, g z^p\})+\mathcal{I}^{fin}_{[t,d]}  = -\overline{\eta'}_t([e,g z^p])+\mathcal{I}^{fin}_{[t,d]}.
\end{equation}
\end{proof}

\begin{thm}
The $\lambda$-brackets between generators of $\WW_t(\g,f,k)$ are as follows:
\begin{equation*}\begin{array}{ll}
\{\eta_t(g_1)_\lambda \eta_t(g_2)\}= \eta_t([g_1, g_2])+k\lambda(g_1|g_2) & \text{ for } g_1, g_2\in \g; \\
\{\eta_t(fz)_\lambda \eta_t(f)\} = -\eta_t(2x)-k\lambda \\
\{\eta_t(fz)_\lambda \eta_t(g)\}= -\eta_t([e,g]) & \text{ for } g\in \bigoplus_{i>-1}\g(i);\\
\{\eta_t(fz)_\lambda \eta_t(g z^p)\}= -\eta_t([f,g]z^{p+1})-\eta_t([e,g]z^p) & \text{ for } g\in\g,  p\geq 1;\\
\{\eta_t(g_1 z^p)_\lambda \eta_t(g_2 z^q)\} = -\eta_t([g_1, g_2]z^{p+q}) &  \text{ for } g_1 z^p,\, g_2 z^q\in \bigoplus_{i>-1} \g(i)z \oplus \bigoplus_{j>1} \g z^j
\end{array} 
\end{equation*}
\end{thm}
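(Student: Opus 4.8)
The plan is to reduce everything to brackets of generators and then transport the finite Poisson-bracket identities of Lemmas \ref{Lem:brac_frac_1} and \ref{Lem:bracket_fz} to the affine $\lambda$-bracket. By the preceding theorem, $\WW_t(\g,f,k)$ is freely generated as a differential algebra by the elements $\eta_t(gz^p)$, so sesquilinearity and the Leibniz rule of a PVA reduce the problem to computing $\{\eta_t(g_1 z^p)_\lambda \eta_t(g_2 z^q)\}$ on a single pair of generators and re-expressing the answer through generators. I would organize the calculation by powers of $\lambda$, using the structural observation (visible in Theorem \ref{Thm:frac_gen}) that the principal parts $\eta'_t(gz^p)$ carry no $\partial$, whereas every $\partial$-term occurring in $\eta_t(gz^p)$ is accompanied by an explicit factor of $k$. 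Since the bracket of Definition \ref{Prop:V_frac} is $\lambda$-linear only through the level term $k\lambda(a|b)$ in degree $0$, the entire $\lambda^{\geq 1}$-part of a generator bracket is $O(k)$; this already explains why $\lambda$ enters the stated formulas only through $k\lambda(g_1|g_2)$ and $-k\lambda$.

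For the $\lambda^0$-part I would invoke the finitization of Corollary \ref{Cor:affine-finite}: the map $\partial^n a \mapsto \delta_{n0}a$ carries $\WW_t(\g,f,k)$ onto its finite subalgebra, annihilates every $k$-correction (each carrying a $\partial$), sends $\eta_t(gz^p)$ to $\overline{\eta'}_t(gz^p)$, and intertwines the $\lambda=0$ evaluation of the affine $\lambda$-bracket with the finite Poisson bracket of $S(\g_{[t,d]})/\mathcal{I}^{fin}_{[t,d]}$. Thus Lemma \ref{Lem:brac_frac_1} supplies $\overline{\eta'}_t([g_1,g_2]z^{p+q})$ in the current case $p=q=0$ and in the general positive-degree case, while Lemma \ref{Lem:bracket_fz} supplies the extra summand $-\overline{\eta'}_t([e,gz^p])$ in every row involving $\eta_t(fz)$. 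This matching fixes the $\lambda^0$-coefficient only modulo total derivatives, so I would remove the residual $\partial$-ambiguity exactly as in Proposition \ref{Prop:aff_bracket}: the $\lambda^0$-coefficient $A_{(0)}B$ must have conformal weight $\Delta_{g_1z^p}+\Delta_{g_2z^q}-1$, and together with skewsymmetry this forces it to be precisely the claimed $\eta_t$-expression.

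Finally I would compute the $\lambda^{\geq 1}$-part directly, its only sources being the level term $k\lambda(a|b)$ produced when two degree-$0$ factors are bracketed and the action of sesquilinearity $\{A_\lambda \partial B\}=(\lambda+\partial)\{A_\lambda B\}$ on the $k$-weighted corrections $\partial z^*_\alpha z^t$ and $\partial x z^t$ in $\eta_t(wz^p)$ and $\eta_t(f)$. A short bookkeeping should show these collapse to $k\lambda(g_1|g_2)$ for the current bracket, to $-k\lambda$ for $\{\eta_t(fz)_\lambda \eta_t(f)\}$, and to $0$ otherwise, and in particular that no $\lambda^2$ or higher terms survive. The main obstacle will be precisely the rows involving $\eta_t(fz)$: because $f\in\g(-1)$ the hypothesis of Lemma \ref{Lem:brac_frac_1} fails, so one must use Lemma \ref{Lem:bracket_fz}, whose extra term originates in the relation $\{fz, xz^t\}=-fz^{t+1}\equiv 1 \pmod{\mathcal{I}^{fin}_{[t,d]}}$. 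Correctly transporting this congruence to the affine level, coupling it with the $k$-correction hidden inside $\eta_t(f)$ to produce exactly the $-k\lambda$ of $\{\eta_t(fz)_\lambda\eta_t(f)\}$, and keeping track of the supersymmetry signs throughout, is where the genuine work lies.
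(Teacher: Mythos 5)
Your overall skeleton is the same as the paper's: the principal ($k$-free, $\partial$-free) part of each bracket is read off from Lemma \ref{Lem:brac_frac_1} (rows $1$ and $5$) and Lemma \ref{Lem:bracket_fz} (the rows involving $\eta_t(fz)$, where the hypothesis of Lemma \ref{Lem:brac_frac_1} fails because $f\in\g(-1)$), and the remaining $k$-dependent terms are checked by direct computation; your matching of the two lemmas to the five rows, and your observation that every $\partial$ and every positive power of $\lambda$ must carry a factor of $k$, are all correct and are exactly what makes the paper's terse proof work.

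The genuine gap is in how you pin down the $\lambda^0$-coefficient. Your transfer mechanism is the finitization map $\partial^n a\mapsto\delta_{n0}a$, whose kernel on $\mathcal{V}_t(\g,f,k)$ is the differential ideal generated by all $\partial X$; matching under this map therefore determines the $0$-th product only modulo that kernel, as you note. But your proposed resolution --- that the coefficient has prescribed conformal weight and that this ``together with skewsymmetry forces it to be precisely the claimed $\eta_t$-expression'' --- does not close this gap. First, conformal weight homogeneity plus skewsymmetry do not single out an element of $\WW_t(\g,f,k)\cap\ker p$: for an off-diagonal pair of generators, terms such as $k\,\partial\eta_t(v)$ of the right weight satisfy both constraints, and skewsymmetry merely relates $\{A_\lambda B\}$ to the equally unknown $\{B_\lambda A\}$. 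Second, this is \emph{not} the argument of Proposition \ref{Prop:aff_bracket}: that proof uses the injective differential algebra isomorphism $\iota$ onto $S(\CC[\partial]\otimes\g_f)$, which retains all derivative terms and hence leaves no ambiguity at all; no Hamiltonian operator is invoked there, and indeed a conformal weight structure on the fractional W-superalgebra is never established in this paper (it is only remarked upon at the very end). Third, the ambiguity you are trying to wave away is not empty: the claimed $\lambda^0$-coefficients themselves contain $k\partial$-corrections (e.g.\ $\eta_t([g_1,g_2])$ when $[g_1,g_2]\in\g(-\frac12)$ or $[g_1,g_2]=f$), so these terms must actually be produced by the computation, not inferred. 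The repair is either to extend your ``bookkeeping of $k$-weighted terms'' from the $\lambda^{\geq1}$-part to the full $O(k)$ part of the $0$-th product (this is what the paper's ``simple computations'' must cover), or to replace the finitization map by the fractional analogue of $\iota$: by the theorem preceding Theorem \ref{Thm:frac_gen} one has $\WW_t(\g,f,k)\cap\bigl[\CC[\partial]\otimes\bigl(xz^t\oplus\bigoplus_{\alpha\in S(1/2)}z_\alpha z^t\bigr)\bigr]\mathcal{V}_t(\g,f,k)=0$, so projecting along that complement is injective on $\WW_t(\g,f,k)$ as a map of differential algebras and determines the bracket with no residual ambiguity.
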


\begin{proof}
The theorem follows from Lemma \ref{Lem:brac_frac_1} and \ref{Lem:bracket_fz}. We have to check  $n$-th products for $n\geq 1$ which can be shown by simple computations. 
\end{proof}

\begin{rem}
In \cite{S}, a Hamiltonian operator on a given classical affine fractional W-algebra associated to a Lie algebra is introduced. Using the Hamiltonian operator, a classical finite fractional W-algebra can be defined. Analogously, we can find a Hamiltonian operator on a classical affine fractional W-superalgebra and a   finite fractional W-superalgebra can be constructed using the operator. 
\end{rem}


\begin{thebibliography}{00}


\bibitem {BDHM} N.J. Burroughs, M.F. De Groot, T.J. Hollowood,
J.L.Miramontes, {\em Generalized Drinfel'd-Sokolov hierarchies, II. The Hamiltonian Structures}. Commun.Math.Phys. 153
(1993) 187-215.

\bibitem{BK} B. Bakalov and V. G. Kac, {\em Field algebras}. IMRN 3 (2003) 123-159.



\bibitem{DHM} M.F. De Groot, T.J. Hollowood, J.L.Miramontes, {\em Generalized
Drinfel'd-Sokolov hierarchies}. Commun.Math.Phys. 145 (1992)  57-84.

\bibitem{DK} A. De Sole, V. G. Kac, {\em Finite vs affine
W-algebras}, Jpn. J. Math. 1 (2006) 137-261.



\bibitem{DKV} A. De Sole, V. G. Kac, D. Valeri, {\em Classical W-algebras and generalized Drinfeld-Sokolov bi-Hamiltonian systems within the theory of Poisson vertex algebras,}  Comm. Math. Phys. 323, no. 2 (2013) 663-711


\bibitem{DKV1} A. De Sole, V. G. Kac, D. Valeri, {\em Structure of classical (finite and affine) W-algebras}, arXiv:1404.0715 [math-ph]








\bibitem{DS} V.G. Drinfel'd, V.V. Sokolov, {\em Lie algebras and equations of
Korteseg-de Vries Type}. J.Sov.Math. 30 (1984) 1975-2036


\bibitem{FF} B. Feigin, E. Frenkel, {\em Quantization of the Drinfeld-Sokolov reduction}, Phys. Lett.
B 246  no. 1-2 (1990) 75-81.


\bibitem{GG} W. L. Gan, V. Ginzburg, {\em Quantization of Slodowy
slices}, Int. Math. Res. Not. (2002) 243-255.



\bibitem{K} Kac V., Vertex algebras for beginners, University Lecture Series, AMS, Vol. 10, 1996 (2nd
Ed., AMS, 1998).

\bibitem{KRW} V.G. Kac, S.-S. Roan, M. Wakimoto, {\em Quantum reduction for affine superalgebras}, 
Comm. Math. Phys.  241 (2003)  307�342

\bibitem{KW} Victor G. Kac, Minoru Wakimoto, {\em Quantum reduction
and representation theory of superconformal algebras}. Adv. Math. 185
(2004) 400-458.



\bibitem{Pol} E. Poletaeva, {\em On finite W-algebras for Lie algebras and superalgebras}, Sao Paulo J.Math. Sci. 7, no.1 (2013) 1-32

\bibitem{PS} E. Poletaeva, V. Serganova, {\em On Kostant's theorem for the Lie superalgebra Q(n)},  arXiv:1403.3866 [math.RT]


\bibitem {P1} A. Premet, {\em Special transverse slices and their enveloping algebras}, Adv. Math. 170 (2002) 1-55.

\bibitem{S} U. R. Suh, Ph.D. Thesis, {\em Structure of classical $\WW$-algebras} (2013).

\bibitem{S2} U. R. Suh, {\em Structure of classical affine and classical affine fractional W-algebras},  J. Math. Phys. 56 (2015) 011706.



\bibitem{Zhao} L. Zhao, {\em Finite W-superalgebras for Queer Lie superalgebras}, J. Pure Appl. Algebra 218, no. 7, (2014) 1184-1194


\bibitem{Z} Y. Zhu, {\em Modular invariance of characters of vertex operator algebras}, J. Amer.
Math. Soc. 9 (1996) no. 1, 237-302.

\end{thebibliography}
\end{document}